\documentclass[12pt]{article}

\usepackage{sdhcmds}
\usepackage{sdhpapers}
\AtEveryBibitem{%
    \ifentrytype{article}
        {\iffieldundef{journaltitle}
            {}
            {\clearfield{doi}%
             \clearfield{url}%
             \clearfield{eprint}}}
        {}%
}
\usepackage{amsmath}
\usepackage[ruled,vlined]{algorithm2e}  

\begin{document}

\title{Experimental Design for Policy Choice}
\author{Samuel D. Higbee}
\institute{
    University of North Carolina at Chapel Hill
}
\date{September 9, 2026}

\abstract{
We show how to optimally design experiments when the
resulting data will be used to choose a welfare-maximizing policy subject to constraints.
A decision maker seeks to maximize Bayes expected welfare by choosing a policy
whose effects depend on an unknown finite-dimensional parameter.
The decision maker has access to a first wave of experimental data with a fixed design
but may choose the design of a second wave that will be collected before choosing the policy.
The resulting experimental design--policy choice problem 
is a very high-dimensional dynamic program that is generally intractable
in finite samples.
We propose a tractable approximation based on the limit experiment
and show it is asymptotically optimal using a new asymptotic representation
theorem for adaptive experiments with continuous treatments.
We apply the method to a conditional cash transfer experiment and demonstrate the potential for large
gains from tailoring the experiment to the policy choice.
}

\acknowledgements{
    I thank St\'ephane Bonhomme, Max Tabord-Meehan, Guillaume Pouliot, 
    Arun Chandrasekhar, Alex Torgovitsky, Lars Peter Hansen,
    Azeem Shaikh, Giovanni Compiani, Ali Horta\c{c}su,
    and numerous seminar participants for their helpful comments.
    Author's email: \texttt{sdhigbee@unc.edu}
}

\maketitle

\newpage

\section{Introduction}
\label{section:introduction}

Economists often use experiments to evaluate, compare, and choose policies.
Given data from an experiment and an econometric model,
decision makers can evaluate counterfactual policies and choose the one
that maximizes some measure of welfare, subject to budgetary or other constraints.
Experimental design can play an important role in this process,
since some experimental designs will be more informative about the optimal policy than others.
Such experiments should be preferred when the ultimate goal of the experiment is to 
inform policy choice.

Despite the fact that choosing policies is a ubiquitous objective for decision 
makers, 
very little is known about how to optimally design experiments for this purpose.
Instead, much of the guidance on experimental design assumes the objective of 
maximizing the precision of parameter or treatment effect estimates
(see \textcite{atheyEconometricsRandomizedExperiments2017},
as well as Section \ref{section:literature} below for a review).
However, not all parameters are equally important for choosing policies.
For example, estimating the average effect of a welfare benefits program may not
be helpful for choosing how benefits should vary based on a recipient's income or
number of dependents.
Instead, the experiment should focus on learning about the effects of
counterfactual policies on welfare,
prioritizing policies that are more likely to be welfare-enhancing.

This paper provides a method for designing experiments specifically for the 
purpose of informing policy choice.
Given an econometric model, a welfare objective, and a set of feasible policies,
our proposed method uses a first wave of experimental data to characterize which 
model parameters are most valuable to learn about for the purpose of policy 
choice, and then designs a second wave to optimally learn about these
policy-relevant parameters.
The resulting experiment delivers data which is maximally informative
for choosing the best policy.

We consider a decision maker who must choose a policy to maximize 
expected welfare subject to budgetary or other constraints.
The effects of counterfactual policies are described by a structural
econometric model governed by an unknown finite-dimensional parameter.
The decision maker has a first wave of experimental data
and can design and run a second wave before choosing the
policy.
The resulting joint experimental design--policy choice problem is
a very high-dimensional dynamic decision problem.
Since the policy depends on the realized data,
a naive backward-induction approach requires the decision maker to specify which policy 
they would choose for every possible dataset that the experiment could 
generate.
As a practical matter, solving this problem directly is not possible.

We propose a tractable method for finding the optimal experiment using
two approximations to the decision problem.
First, we replace the high-dimensional data with a Gaussian estimate of the model parameters from each wave.
Selecting a design then reduces to choosing the covariance matrix of the second-wave estimate.
Second, we approximate the nonlinear welfare objective with a quadratic function.
Under these two approximations, the policy choice problem
depends only on a Gaussian estimate of a particular linear function of the model parameters---
namely, the marginal effect of making small adjustments to the policy.
These approximations yield a low-dimensional dynamic program that is simple to solve in practice.

We show that our proposed method is asymptotically optimal
in the sense that a decision maker who
(i) uses our method to select the experimental design, and
(ii) uses the resulting data to choose a policy,
will have the highest possible limiting welfare
across all feasible experimental designs.
We show this by validating the two approximations underlying our method
in an asymptotic regime in which both waves of the experiment grow large.
First, we establish a new asymptotic representation theorem for
adaptive experiments with continuous treatments, which justifies the Gaussian information structure.
Second, we show that the constrained policy choice problem is locally approximated by a quadratic program.
We use these results to derive an asymptotic lower bound on the welfare achievable by any experimental design--policy choice procedure
and show that our proposed method achieves this bound.

We demonstrate the method in an application to the Progresa conditional cash
transfer experiment.
We consider a decision maker who must choose a school attendance subsidy policy to maximize the graduation rate of students.
The decision maker has a first wave of data from the Progresa experiment,
which she uses to design a second wave to learn about the optimal subsidy policy.
We estimate that our experimental design
reduces the regret of the downstream policy choice by as much as $70\%$ compared to the original experimental design.
This translates to graduation rates that are about $1.5$ percentage points higher than 
what could be attained by policies estimated with the original Progresa experiment.

Our approach applies to a wide range of models and objectives.
While we focus on parametric models, we do not impose further restrictions on the model beyond standard regularity conditions.
Parametric structural models are widely used in economics to evaluate counterfactual policies
not directly observed in the data 
(see \textcite{toddAssessingImpactSchool2006} and \textcite{attanasioEducationChoicesMexico2012} for examples in Progresa).
We require the welfare functions to be strongly concave in the policy in a neighborhood of the optimal policy,
but do not assume a specific functional form for the welfare function.
This allows the decision maker to target objectives that may not be simple reduced-form functions of observed outcomes,
such as agents' subjective utility or long-run profits.
The methods of this paper therefore enable researchers to take advantage
of the ``best of both worlds'' described by
\textcite{toddBestBothWorlds2023} to leverage experimental variation and
economic structure to efficiently learn about optimal policies.

\subsection{Related Literature}

\label{section:literature}

This paper intersects with two complementary literatures in econometrics: experimental 
design and policy choice.  
The questions of how to design experiments, and of how to choose policies 
based on the results of experiments, are often treated separately in econometrics.
The bandit literature, which does treat these questions jointly,
does not generally achieve the asymptotic optimality results that we achieve here
in our constrained, nonlinear, continuous-treatment setting.

The optimal design of experiments constitutes a vast literature
in many fields.
Classic approaches include
\textcite{silveyOptimalDesignIntroduction2013},
\textcite{pukelsheimOptimalDesignExperiments2006},
and
{%
  \DeclareDelimFormat[textcite]{finalnamedelim}
    {\addspace\bibstring{and}\linebreak\space}%
  \textcite{chalonerBayesianExperimentalDesign1995}%
}.
More recent works in econometrics have focused on experimental
design with the aim of efficiently estimating treatment effects
with binary or discrete treatments in semiparametric settings.
This literature is reviewed in 
\textcite{atheyEconometricsRandomizedExperiments2017}.
Examples include
\textcite{hahnAdaptiveExperimentalDesign2011},
\textcite{baiOptimalityMatchedPairDesigns2022},
\textcite{vivianoExperimentalDesignNetwork2020},
\textcite{tabord-meehanStratificationTreesAdaptive2023},
and \textcite{cytrynbaumFineStratificationSurvey2021}.
The methods proposed by these papers, like the one proposed here,
rely on large first-wave samples to inform the design of a second wave.
When this assumption fails, such methods can have poor finite-sample
properties (\cite{caiPerformanceNeymanAllocation2024}).
\textcite{armstrongAsymptoticEfficiencyBounds2022a} provides
asymptotic efficiency bounds for ATE estimation across experimental designs.
Our setting is parametric, but we view the contribution of
an asymptotically optimal experiment for policy choice in parametric models as both a novel and
necessary first step towards designing experiments for policy choice in more flexible settings.

Another large literature in econometrics focuses on policy choice,
given experimental or observational data
(\cite{manskiStatisticalTreatmentRules2004},
\cite{stoyeMinimaxRegretTreatment2009,stoyeMinimaxRegretTreatment2012a},
\cite{bhattacharyaInferringWelfareMaximizing2012},
\cite{kitagawaWhoShouldBe2018},
\cite{atheyPolicyLearningObservational2021},
\cite{mbakopModelSelectionTreatment2021}).
Counterfactual evaluation and policy choice is also a central
goal of structural econometric models;
we motivate our setting with the examples of
\textcite{toddAssessingImpactSchool2006} and
\textcite{attanasioEducationChoicesMexico2012},
which study the Progresa program that we use in our application.
This literature takes the data-generating process 
to be outside the control of the decision maker.
We complement this approach by showing how to choose the most favorable data-generating process
for the policy choice problem from a given class of experiments.
To do this, we extend the asymptotic analysis of treatment choice
rules as in \textcite{hiranoAsymptoticsStatisticalTreatment2009},
\textcite{hiranoAsymptoticAnalysisStatistical2020a},
and \textcite{xuAsymptoticAnalysisPoint2026}
to general nonlinear constrained decision problems
and optimize the value of this problem across experimental designs.

Adaptive experiments are commonly used to optimize a measure of welfare,
with the multi-armed bandit literature providing many algorithms
for both in-sample and out-of-sample welfare maximization
(\textcite{lattimoreBanditAlgorithms2020} and
\textcite{russoTutorialThompsonSampling2018} provide surveys,
while
\textcite{vivianoPolicyDesignExperiments2020},
\textcite{katoAdaptiveExperimentalDesign2025},
and \textcite{cesa-bianchiAdaptiveMaximizationSocial2025}
provide applications to policy choice in economics).
Much of the bandit literature proposes heuristic algorithms and proves
rate-optimality results for expected regret.
We take a different approach, related to
\textcite{hiranoAsymptoticRepresentationsSequential2025} and
\textcite{adusumilliRiskOptimalPolicies2025,adusumilliHowSampleWhen2025,adusumilliContinuousTimeAsymptotic2026a},
in which we characterize the limiting behavior of adaptive experiments and derive designs and policies which are
optimal in the limit experiment,
rather than rate-optimal in finite samples.
Our asymptotic representation theorem differs from prior work by allowing for continuous treatments.
Our decision problem is also distinct,
focusing on strongly concave welfare objectives and nonlinear constraints,
as opposed to a linear objective such as treatment choice with finitely many arms.

\subsection{Outline}

In Section \ref{section:decision}, we describe the general framework
and the decision problem faced by the decision maker,
and find that an exact solution is infeasible.
In Section \ref{section:method}, we describe a tractable solution method
in which the difficult decision problem is replaced by a simpler
Gaussian experiment with quadratic loss.
In Section \ref{section:limit}, we show the proposed method is asymptotically 
optimal, using a limit-of-experiments framework.
In Section \ref{section:progresa}, we apply the method to the Progresa conditional cash transfer experiment.

\section{Environment and decision problem}
\label{section:decision}
The decision maker must choose two things: how to design the second
wave of the experiment, and what policy to choose after observing the results of
the experiment.

\subsection{Welfare and expected regret}

The decision maker has a welfare function
\begin{align*}
    W(\pi, \theta)
\end{align*}
where $\pi\in\R^k$ is a policy which the decision maker can choose and
$\theta\in\R^\ell$ is a finite-dimensional unknown parameter.
The welfare function is not assumed to have a specific structure
beyond smoothness and local concavity assumptions specified in Section~\ref{section:limit}.
The mapping from policy and parameter to welfare is context-specific,
depending both on the model being used to predict the effects of counterfactual policies
and the preferences of the decision maker.
The set of feasible policies is described by nonlinear constraints
\begin{align*}
    \Pi = \{\pi \in \R^k : g(\pi) \leq 0 \}
    ,
\end{align*}
which may include budget constraints, fairness constraints,
or other exogenous constraints on the policy,
such as a nonnegativity constraint on a subsidy or price.

Since the true value of $\theta$ is not known,
the decision maker evaluates policies based on their expected regret.
Regret is the welfare loss from choosing a policy $\pi$
and is given by
\begin{align*}
  R(\pi, \theta) = \max_{\tilde \pi \in \Pi} W(\tilde \pi, \theta) - W(\pi,\theta)
\end{align*}
where the first term is the maximum welfare achieveable if $\theta$ were known.
Expected regret is calculated by averaging regret across possible
values of $\theta$ using a prior density $q(\theta)$,
resulting in the Bayes regret objective
\begin{align*}
    \E\left[R(\pi, \theta)\right] = 
    \int R(\pi, \theta) q(\theta) d\theta
    .
\end{align*}
Minimizing Bayes regret is equivalent to maximizing Bayes welfare
because the recentering term in regret does not affect the choice of policy.
However, recentering the welfare function in this way is useful for
asymptotic analysis.

While we  use Bayes regret as the objective,
the method proposed in this paper does not depend on the prior,
and hence a user of our method does not need to be able to articulate their prior.
This is because the decision maker uses the first wave to inform future decisions,
and as the size of the first wave grows the influence of the prior vanishes
under the conditions given in Section~\ref{section:limit}.

\subsection{The data-generating process}

The decision maker has access to an experimental sample that arrives in two waves.
Of the $n$ total units, $n_1$ arrive in the first wave $\mathcal{I}_{1,n} = \{1, \dots, n_1\}$,
and $n_2 = n - n_1$ arrive in the second wave $\mathcal{I}_{2,n} = \{n_1 + 1, \dots, n\}$.
Each unit is associated with a vector of covariates $x_i$.
Our analysis is conditional on covariates, and we treat them as fixed
throughout.
During the experiment, each unit is assigned a treatment $z_i$,
after which an outcome $y_i$ is observed.

In the first wave, the treatment $z_i$ is assigned according to a fixed experimental
design which is not chosen by the decision maker.
It is drawn from the density
\begin{align*}
    p_{z \mid x}(z_i \mid x_i; \delta_1),
    \quad
    i \in \mathcal{I}_{1,n}
\end{align*}
where $\delta_1$ is a fixed parameter describing the design of the first wave.
The dominating measure for the density $p_{z \mid x}$ may be Lebesgue measure or another measure,
so that treatment may be discrete or continuous, and potentially multidimensional.

After treatment is assigned in the first wave, the outcome $y_i$ is observed for these units.
Outcomes are generated according to the density
\begin{align*}
    p_{y \mid z, x}(y_i \mid z_i, x_i; \theta),
    \quad
    i \in \mathcal{I}_{1,n}
\end{align*}
where $\theta \in \Theta$ is an unknown finite-dimensional parameter
and $\Theta$ is an open subset of $\R^\ell$.
The outcome may not directly measure welfare,
but observations of $y_i$ are informative about $\theta$,
which determines the welfare effects of policies.

While we refer to the observations $i \in \mathcal{I}_{1,n}$ as the first wave
of the experiment, this data can take many forms in practice.
It may arise from an experiment conducted by the decision maker or by another entity.
It may be an observational dataset if
$\theta$ is identified from the observational data.
It can even be replaced by an asymptotically informative prior as in 
\textcite{adusumilliRiskOptimalPolicies2025}.
The first wave represents existing data available to the decision maker
that may be informative for policy decisions,
but does not reveal the optimal policy with perfect certainty.

In the second wave,
the decision maker may choose the design $\delta_2$
governing the conditional distribution of treatment
for the units in the second wave.
It is restricted to lie in a compact set
\begin{align*}
  \delta_2 \in \Delta
\end{align*}
which describes constraints on the experimental design,
such as budget constraints, ethical constraints, etc.
Once $\delta_2$ has been decided, the treatment for units in the second wave is drawn
according to
\begin{align*}
    p_{z \mid x}(z_i \mid x_i; \delta_2),
    \quad
    i \in \mathcal{I}_{2,n}
    .
\end{align*}
Thus, $\delta_2$ can specify how the probability of treatment depends on covariates or
how the value of a continuous treatment (such as a price) depends on covariates.

Once the decision maker assigns treatment, outcomes for the second wave are observed.
These outcomes are drawn according to the same conditional density as in the first wave,
\begin{align*}
    p_{y \mid z, x}(y_i \mid z_i, x_i; \theta),
    \quad
    i \in \mathcal{I}_{2,n},
\end{align*}
so both experimental waves can be combined to learn about $\theta$.

\subsection{Policy choice and the dynamic decision problem}

After observing the outcomes of both waves of the experiment, the decision maker chooses a policy $\pi$
to minimize posterior expected regret.
Letting the first-wave and second-wave random data be denoted by
\begin{align*}
  D_{1,n} = \{y_i, z_i\}_{i \in \mathcal{I}_{1,n}},
  \quad
  \quad
  D_{2,n} = \{y_i, z_i\}_{i \in \mathcal{I}_{2,n}},
\end{align*}
the value of an experimental dataset is given by
\begin{subequations}\label{eq:Vn}
\begin{equation}
  V_n(D_{1,n}, D_{2,n}) =
  \min_\pi
  \quad 
  \E\left[ R(\pi, \theta) \mid D_{1,n}, D_{2,n} \right]
  \quad
  \text{s.t.}
  \quad
  g(\pi) \leq 0.
\end{equation}
By backwards induction, the optimal design of the second wave is the solution to
\begin{equation}
  \min_{\delta_2}
    \quad
    \E_{\delta_2}\left[ 
        V_n(D_{1,n}, D_{2,n}) \mid D_{1,n}
    \right]
    \quad
    \text{s.t.} 
    \quad 
    \delta_2 \in \Delta
\end{equation}
where $\E_{\delta_2}$ denotes the expectation with respect to the law of motion for the second wave of the experiment
when the design $\delta_2$ is chosen,
\begin{equation}
    y_i, z_i \sim 
    p_{y \mid z, x}(y_i \mid z_i, x_i; \theta)
    p_{z \mid x}(z_i \mid x_i; \delta_2),
    \quad
    i \in \mathcal{I}_{2,n}
\end{equation}
in which treatment $z_i$ is first assigned according to $\delta_2$
and then the outcome is realized according to the unknown parameter $\theta$.
\end{subequations}

The dynamic program (\ref{eq:Vn}) characterizes the optimal design for the second wave,
but it is infeasible to solve in practice.
One could imagine trying to solve it
by standard dynamic programming methods,
but since $\pi$ is a function of all experimental data,
the state space of $V_n$ is the set of possible
datasets that the second wave of the experiment could generate.
This is extremely high-dimensional even for moderate-sized experiments
and low-dimensional variables\footnote{
    Alternatively, one could use the posterior on $\theta$ as the state,
    but if the likelihood and prior are not conjugate
    the state space will be the
    set of probability distributions on $\R^{\ell}$.
}.
This problem is a general feature of dynamic experiments,
and features prominently in multi-armed bandit literature.
More broadly, it is a feature of statistical decision problems
where a decision maker must specify a decision rule which may depend
on realizations of the data
(\cite{manskiEconometricsDecisionMaking2021}, \cite{hiranoWaldsStatisticalDecision2026}).

\section{Approximate solution method}
\label{section:method}
Since the finite-sample problem is intractable, we propose an approximation to 
this problem that is low-dimensional and tractable.
This approximation has two components: 
first, replacing the nonlinear regret with a quadratic function,
and second, replacing the data-generating process with a Gaussian observation.
This section presents the method as it is implemented in practice. 
Formal justification of the approximations and the asymptotic optimality of the 
method are given in Section \ref{section:limit}.

\subsection{Quadratic approximation of expected regret}

We replace the nonlinear expected regret loss function with 
a quadratic approximation.
Under this approximation, the optimal policy depends on $\theta$
only through a particular linear function,
which is the marginal welfare effect of changing the policy.
This reduces the state of the dynamic program to the posterior mean of this marginal effect.

Our quadratic approximation uses a second-order Taylor expansion 
of welfare around a reference point.
Suppose that the true parameter $\theta$ is in a neighborhood of 
some $\theta_0$ (to be formalized in Section \ref{section:limit}).
Let the optimal policy under $\theta_0$ be
\begin{align}
    \label{eq:nlp-reference}
    \pi_0 = \arg \min_{\pi} \quad R(\pi, \theta_0)
    \quad
    \text{s.t.} \quad
    g(\pi) \leq 0.
\end{align}
The centering term in regret makes $R(\pi, \theta)$ potentially nondifferentiable in $\theta$.
Our method instead uses derivatives of the welfare function $W(\pi, \theta)$, which is smooth in $\theta$,
and ignores the policy-irrelevant centering term in regret.
Define
\begin{align*}
  D = - \nabla_\pi W(\pi_0, \theta_0),
  \quad\quad
  B = - \nabla^2_{\pi\theta} W(\pi_0, \theta_0),
  \quad\quad
  H = - \nabla^2_{\pi\pi} W(\pi_0, \theta_0)
  .
\end{align*}
Since the choice of policy is subject to constraints,
it will not generally be true that $D=0$,
so this first-order term must be included in the approximation.
The quadratic function we minimize is then given by
\begin{align*}
  \tilde R(\pi, \beta) &=
  (\pi - \pi_0)' D
  + (\pi - \pi_0)' (\beta - \beta_0)
  + \frac{1}{2} (\pi - \pi_0)' H (\pi - \pi_0)
\end{align*}
where $\beta = B \theta$ and $\beta_0 = B \theta_0$.

The approximation $\tilde R$ depends on $\theta$ only through the linear functional $\beta$.
The marginal effect of changing the policy in a neighborhood of $(\pi_0, \theta_0)$ is
\begin{align*}
  \nabla_\pi R(\pi_0, \theta) &\approx D + (\beta - \beta_0)
\end{align*}
and so the parameter $\beta$ gives an approximation to the marginal effect of the policy
in this neighborhood.
When choosing a policy, $\theta$ is only relevant through the effects of counterfactual policies.
For policies close to $\pi_0$,
the marginal effect of changing the policy characterizes these counterfactuals.
Thus, $\beta$ serves as a sufficient statistic for policy choice in this neighborhood.

Further tractability comes from the linearity of $\tilde R$ in $\beta$.
Denoting the posterior mean of $\beta$ after both waves of the experiment by $\hat\beta_{2,n}$,
we have
\begin{align*}
  \E\left[ \tilde R(\pi, \beta) \mid D_{1,n}, D_{2,n} \right]
  &=
  \tilde R(\pi, \hat\beta_{2,n})
\end{align*}
and so when choosing a policy to minimize the expectation of $\tilde R(\pi, \beta)$,
the decision maker only needs to consider the posterior mean of $\beta$.
However, the law of motion for $\hat\beta_{2,n}$ is still complicated,
motivating our next approximation.

\subsection{Gaussian approximation of estimates}

Our second approximation is to replace the full data with a Gaussian observation
from each wave, with precision determined by $\delta_2$.
This allows the decision maker to reduce the optimization over treatment distributions
to optimization over posterior variance-covariance matrices for $\hat\beta_{2,n}$.

We approximate the prior distribution and experimental data with a flat prior
and Gaussian observations, respectively.
After the first wave, suppose the decision maker only observes a Gaussian estimate $\hat\theta_{1,n}$ of $\theta$,
which satisfies
\begin{align*}
  \hat \theta_{1,n} \overset{a}{\sim} N\left(\theta, \frac{1}{n_1} J_{1,n}^{-1}\right)
\end{align*}
where
\begin{align*}
  J_{1,n} = \frac{1}{n_1}\sum_{i \in \mathcal{I}_{1,n}} \E\left[
    \nabla_\theta \log p_{y \mid z, x}(y_i \mid z_i, x_i; \theta_0)
    \nabla_\theta \log p_{y \mid z, x}(y_i \mid z_i, x_i; \theta_0)'
  \right]
\end{align*}
is the Fisher information matrix at $\theta_0$ for the first wave of the experiment.
After choosing $\delta_2$ and running the second wave of the experiment,
suppose the decision maker observes a second Gaussian estimate $\hat \theta_{2,n}$
from the second wave, which satisfies
\begin{align*}
  \hat \theta_{2,n} \overset{a}{\sim} N\left(\theta, \frac{1}{n_2} J_{2,n}(\delta_2)^{-1} \right)
\end{align*}
where 
\begin{align*}
  J_{2,n}(\delta_2) = \frac{1}{n_2} \sum_{i \in \mathcal{I}_{2,n}} \E_{\delta_2} \left[
    \nabla_\theta \log p_{y \mid z, x}(y_i \mid z_i, x_i; \theta_0)
    \nabla_\theta \log p_{y \mid z, x}(y_i \mid z_i, x_i; \theta_0)'
  \right]
\end{align*}
is the Fisher information matrix when the design $\delta_2$ is chosen,
and $\E_{\delta_2}$ indicates that $z_i$ is assigned according to the design $\delta_2$.
By varying the experimental design $\delta_2$,
the decision maker varies the variance-covariance matrix of the resulting 
estimate $\hat\theta_{2,n}$.

The information matrix $J_{2,n}(\delta_2)$ affects the welfare of the
policy chosen at the end of the experiment by determining the information
available to the decision maker when choosing the policy.
This is summarized by the posterior distribution of $\beta$
conditional on the Gaussian estimates $\hat{\theta}_{1,n}$ and $\hat{\theta}_{2,n}$.
Under the flat-prior approximation, 
this posterior is also Gaussian and is given by
\begin{align*}
  \hat\beta_{2,n} =
  B
  \left(n_1 J_{1,n} + n_2 J_{2,n}(\delta_2)\right)^{-1}
  \left(n_1 J_{1,n} \hat \theta_{1,n} + n_2 J_{2,n}(\delta_2) \hat \theta_{2,n} \right)
  .
\end{align*}
Equivalently, since we are using a flat prior approximation,
$\hat\beta_{2,n}$ is the pooled maximum likelihood estimate of $\beta$ from the two waves of the experiment.
Therefore, our method does not depend on the choice of prior $q(\theta)$.
\footnote{This may be desirable if the decision maker has difficulty articulating their prior,
but it may be undesirable if the decision maker has a strong prior belief that is not reflected by a flat or weakly informative prior.
The method here can easily extend to informative Gaussian priors.}

Designs which are more informative about particular parameters
will lead to posteriors which are more informative about those parameters.
Whether a particular $\delta_2$ is optimal in terms of its information
depends on the objective and constraints of the decision problem.
In particular, our quadratic approximation implies that $\beta$
is a sufficient statistic for the policy choice.
However, due to constraints on the policy choice and the local curvature of welfare, 
the decision maker may not care about all components of $\beta$ equally.
If some margins described by $\beta$ are infeasible,
or if the first wave suggests that some margins would almost certainly decrease welfare,
then the decision maker may not want to spend their experimental budget to learn about those margins.

Finally, we note that our method still works if $\delta_2$ is chosen such that
$J_{2,n}(\delta_2)$ is singular,
as long as the resulting posterior information $n_1 J_{1,n} + n_2 J_{2,n}(\delta_2)$ is non-singular.
In this case, the MLE $\hat\theta_{2,n}$ may not be well-defined,
but the pooled MLE from both waves, and the posterior mean $\hat\beta_{2,n}$, will still be well-defined.
In fact, choosing a singular $J_{2,n}(\delta_2)$ may be optimal in some cases,
since the decision maker may prefer a design which is informative about $\beta$
but which provides no information about other components of $\theta$.

\subsection{Feasible analog and value function}

The approximation described above is constructed around a reference point 
$(\pi_0, \theta_0)$ which is unknown to the decision maker.
To make this method feasible, we replace the unknown reference point with estimates
from the first wave.
We use these estimates to construct empirical analogues of the Gaussian and quadratic approximations above.

After collecting the first wave of data, the decision maker constructs 
the maximum likelihood estimator $\hat\theta_{1,n}$,
which is approximately normally distributed.
The decision maker then constructs an estimate of the optimal policy from the first wave, given by
\begin{align*}
  \hat \pi_{1,n} = \arg\min_\pi \quad R(\pi, \hat\theta_{1,n}) \quad \text{s.t.} \quad g(\pi) \leq 0
  .
\end{align*}
The first-wave estimates $(\hat\pi_{1,n}, \hat\theta_{1,n})$ are then used as plug-in estimates for the unknown reference point $(\pi_0, \theta_0)$.
The quadratic approximation to regret is estimated by
\begin{align*}
  \hat R_n(\pi, \beta) &=
  (\pi - \hat \pi_{1,n})' \hat D_n
  + (\pi - \hat \pi_{1,n})' (\beta - \hat\beta_{1,n})
  + \frac{1}{2} (\pi - \hat\pi_{1,n})' \hat H_n (\pi - \hat\pi_{1,n})
\end{align*}
where $\hat D_n$, $\hat \beta_{1,n} = \hat B_n \hat\theta_{1,n}$, and $\hat H_n$
are all given by plug-in estimates.
Likewise, the Gaussian approximation is estimated by replacing the unknown 
Fisher information matrices $J_{1,n}$ and $J_{2,n}(\delta_2)$ with plug-in estimates.
These are given by
\begin{align*}
  \hat J_{1,n} &= \frac{1}{n_1} \sum_{i \in \mathcal{I}_{1,n}} \nabla_\theta \log p_{y \mid z, x}(y_i \mid z_i, x_i; \hat\theta_{1,n})
  \nabla_\theta \log p_{y \mid z, x}(y_i \mid z_i, x_i; \hat\theta_{1,n})'
  \\
  \hat J_{2,n}(\delta_2) &= \frac{1}{n_2} \sum_{i \in \mathcal{I}_{2,n}} \E_{\delta_2} \left[
    \nabla_\theta \log p_{y \mid z, x}(y_i \mid z_i, x_i; \hat\theta_{1,n})
    \nabla_\theta \log p_{y \mid z, x}(y_i \mid z_i, x_i; \hat\theta_{1,n})'
  \right]
\end{align*}
where the expectation in the second line is taken over the distribution of $z_i$ when the design is $\delta_2$
and the distribution of $y_i$ when the parameter is $\hat\theta_{1,n}$,
both of which are known to the decision maker after the first wave of the experiment.

Thus, the feasible approximation to the value function
which the decision maker can solve in practice is
\begin{subequations}\label{eq:hatVn}
\begin{equation}
  \hat V_n(\hat \beta_{2,n}) = \min_\pi 
  \quad \hat R_n(\pi, \hat \beta_{2,n}) 
  \quad \text{s.t.} \quad g(\pi) \leq 0
\end{equation}
after which the decision maker chooses the design of the second wave to minimize the expected value function:
\begin{equation}
  \min_{\delta_2} \quad
  \E\left[ \hat V_n(\hat\beta_{2,n}) \mid \hat \theta_{1,n} \right]
  \quad \text{s.t.} \quad \delta_2 \in \Delta
\end{equation}
where the expectation integrates over the second-wave distribution of $\hat \beta_{2,n}$ using
a plug-in estimate of the Gaussian Bayesian updating formula for the law of motion:
\begin{equation}
  \hat\beta_{2,n} \sim N\left(
    \hat\beta_{1,n},
    \hat B_n
    \left( \frac{1}{n_1} \hat J_{1,n}^{-1} - (n_1 \hat J_{1,n} + n_2 \hat J_{2,n}(\delta_2))^{-1} \right)
    \hat B_n'
  \right)
  \mid \hat \theta_{1,n}
  .
\end{equation}
\end{subequations}
We denote the solution to this problem by $\hat\delta_{2,n}$.

This value function is much lower-dimensional than the original value function $V_n$.
The state is simply $\hat\beta_{2,n}$, rather than the full experimental data $D_{1,n}$ and $D_{2,n}$.
In contrast to $V_n$, the state of $\hat V_n$ does not grow with the sample size.
Moreover, $\hat\beta_{2,n}$ is of the same dimension as the policy $\pi$, which is often low-dimensional for practical reasons.
Solving the value function $\hat V_n$ in practice is straightforward.
While algorithmic details are deferred to Appendix \ref{appendix:progresa},
the main steps are (i) evaluate $\hat V_n(\beta)$ for many values of $\beta$,
(ii) interpolate to construct a smooth surrogate for $\hat V_n$,
(iii) compute $\E_\delta[\hat V_n(\hat\beta_{2,n}) \mid \hat\theta_{1,n}]$ 
as a function of $\delta$ by Monte Carlo integration,
and (iv) optimize over $\delta_2$ using a nonlinear optimization routine.
Solving $\hat{V}_n$ is especially tractable when the constraints $g(\pi) \leq 0$ are linear,
as is the case in the Progresa application in Section \ref{section:progresa}.
Then $\hat{V}_n(\hat \beta)$ is a quadratic program for any value of $\hat \beta$,
making construction of $\hat V_n$ very fast.

\section{Asymptotic optimality}
\label{section:limit}
The solution method presented in the previous section is asymptotically optimal.
We establish this by showing that the
approximations made in the previous section have negligible error as
the sizes of the two waves of the experiment grow large.
These results imply that the design $\hat \delta_{2,n}$ that comes from
solving (\ref{eq:hatVn}) leads to the lowest possible regret asymptotically.

\subsection{Local asymptotic framework}
\label{subsection:localasymptotics}

The Gaussian and quadratic approximations are justified in an
asymptotic decision environment called the limit experiment.
The limit experiment describes the possible asymptotic behaviors of any
design and policy in the finite-sample environment.
Solving for the best design in the limit experiment
motivates finite-sample procedures that achieve asymptotic optimality.

The limit experiment is constructed in a local asymptotic framework.
Local asymptotics are commonly used to study
optimality of estimators and tests,
among other econometric and statistical problems
(\cite{le1972limits}, \cite{vandervaartAsymptoticStatistics2000}).
In local asymptotics, we analyze the performance of our method under
alternative parameter values that are difficult to distinguish from each other
even in large samples.
In particular, we model
\begin{align*}
    \theta = \theta_0 + h/\sqrt{n}
    &&
    \pi = \pi_0 + c/\sqrt{n}
\end{align*}
where $(h,c)$ is a local parameter and policy and
$(\theta_0, \pi_0)$ is a reference parameter and policy.
The $1/\sqrt{n}$ scaling means that the difference between $\theta$ and
$\theta_0$ is of the same order of magnitude
as the standard error of typical estimators of $\theta$.
Regret can be written in terms of local alternatives as
\begin{align*}
  R_n(c, h) = n R(\pi_0 + c/\sqrt{n}, \theta_0 + h/\sqrt{n})
\end{align*}
where the $n$ scaling ensures a nontrivial limit as $n \to \infty$.

Local asymptotics are essential for our framework to capture
finite-sample uncertainty about which constraints are binding.
The set of $\theta$ for which a particular constraint holds with weak complementary slackness
generally has Lebesgue measure zero.
This means that for almost every $\theta$,
the decision problem behaves asymptotically as if the set of binding constraints is known.
In this case, the decision problem in the limit experiment reduces to 
minimizing a quadratic function over a linear subspace,
and hence is equivalent to point estimation.
In contrast, our approach of localizing around a reference parameter $\theta_0$ 
where weak complementary slackness holds preserves nontrivial constraints in the limit
and ensures that even asymptotically, the decision maker does not know
which constraints bind at the optimum.
This uncertainty is essential for capturing the distinctive aspects
of experimental design for policy choice.

This localization has an intuitive interpretation in the cash-transfer example of Section \ref{section:progresa}.
The decision maker is most likely to make a mistake when it is difficult to detect
differences in the marginal effects of the cash transfer across groups.
When this is the case, noisy experimental data may lead the decision maker to allocate extra money to boys even
when girls would benefit more.
In this scenario, $\theta_0$ may be a point where the marginal effects of the cash transfer are positive and equal across groups,
and $\pi_0$ is a vertex of the feasible set.
Then at $\pi_0$, the cash is allocated to only one of the four groups,
but local perturbations $h \neq 0$ may cause the optimal policy $c$ to move away from the vertex and allocate some cash to the other groups.
Our asymptotic framework does not require that $\theta_0$ be specified, but it is intended
to capture this statistically ``hard'' instance of the decision problem.
This is analogous to localizing around an average treatment effect of zero in a binary
treatment choice problem (\cite{hiranoAsymptoticsStatisticalTreatment2009}).

\subsection{Limit experiment}

The bound we establish on the limiting regret of any sequence of designs
is given by the value of an optimization problem in the limit experiment.
This limiting decision problem is a quadratic program with linear constraints.

The constraints in the limit experiment are given by linear approximations 
to the constraints in the finite-sample problem.
Define $\lambda_0$ to be the unique Lagrange multiplier (guaranteed to exist under the assumptions given below)
satisfying
\begin{align*}
  D + \lambda_0' \nabla g(\pi_0) = 0
  &&
  \lambda_{0j} \geq 0
  &&
  \lambda_{0j} g_j(\pi_0) = 0
\end{align*}
for $j=1, \dots, J$.
Let the set of indices $j$ for which
strict/weak complementary slackness holds at $\pi_0$ be
\begin{align*}
    \mathcal{J}_s = \{
        j : g_j(\pi_0) = 0, \; \lambda_{0j} > 0
    \}
    &&
    \mathcal{J}_w = \{
        j : g_j(\pi_0) = \lambda_{0j} = 0
    \},
\end{align*}
respectively.
The feasible set in the limit experiment is the convex cone
\begin{align*}
    \mathcal{G} = \{
        c : \;
        \nabla g_j(\pi_0)' c = 0, j \in \mathcal{J}_s, \;
        \nabla g_j(\pi_0)' c \leq 0, j \in \mathcal{J}_w, \;
    \}
    .
\end{align*}

The quadratic objective arises from the quadratic approximation to regret,
but also incorporates the curvature of the constraints through the Hessian of the Lagrangian.
Define the Hessian of the Lagrangian at $(\pi_0, \lambda_0)$ as
\begin{align*}
  \bar H = H + \lambda_0' \nabla^2 g(\pi_0)
\end{align*}
and define the quadratic objective
\begin{align*}
  r(c, b) = c' b + \frac{1}{2} c' \bar H c
\end{align*}
where $b = B h$ is the local alternative for $\beta$.
This quadratic objective, like $\tilde R$ in Section \ref{section:method}, omits the policy-irrelevant centering term in regret,
which simplifies the dynamic program in the limit experiment.
The asymptotic lower bound, however, is given by the regret in the limit experiment,
which is
\begin{align*}
  R_\infty(c, b) = r(c, b) - \min_{c \in \mathcal{G}} r(c, b).
\end{align*}

We now describe the information structure in the limit experiment.
We begin with the prior.
Fix $K > 0$ and define the localized finite-sample prior as
\begin{align*}
  q_{n,K}(h) \propto q(\theta_0 + h/\sqrt{n}) \1\{ \|h\| \leq K \}
\end{align*}
normalized to integrate to one.
The corresponding density in the limit experiment is
\begin{align*}
  q_{\infty,K}(h) \propto \1\{ \|h\| \leq K \}
\end{align*}
which is uniform on the ball of radius $K$.
This localization of the prior serves two purposes.
First, it ensures that the limit of expected regret is finite,
and hence the asymptotic lower bound is well-defined.
Second, it asymptotically concentrates around $\theta_0$,
a point of possible weak complementary slackness, and hence
reflects the finite-sample property that some constraints may be on the margin of binding or nonbinding.
In the results that follow, we will show our method, which is motivated by a flat-prior approximation,
is asymptotically optimal as $K$ grows large, and hence as our local prior becomes weakly informative.

The observed data in the limit experiment comes from a two-wave adaptive experiment with Gaussian signals.
Let $\rho$ denote the asymptotic fraction of the sample in the second wave of the experiment.
In the first wave of the limit experiment, the decision maker observes a random variable
\begin{align*}
  A_1 \sim N((1-\rho) J_1 h, (1-\rho) J_1)
  .
\end{align*}
Note that when $J_1$ is nonsingular,
observing $A_1$ is equivalent to observing a Gaussian signal
$\hat h_1 = (1-\rho)^{-1} J_1^{-1} A_1 \sim N(h, (1-\rho)^{-1}J_1^{-1})$,
which is the limiting distribution of the first-wave maximum likelihood estimator.
After observing $A_1$, the decision maker chooses a design $\delta_2$ and observes
\begin{align*}
  A_2 \sim N(\rho J_2(\delta_2) h, \rho J_2(\delta_2))
  .
\end{align*}
As with the first wave, when $J_2(\delta_2)$ is nonsingular, observing $A_2$ is equivalent to observing a Gaussian signal
$\hat h_2 = \rho^{-1} J_2(\delta_2)^{-1} A_2 \sim N(h, \rho^{-1} J_2(\delta_2)^{-1})$.

After observing $A_1$ and $A_2$, the decision maker chooses a policy $c$
to minimize the posterior expected regret.
Let $\hat b_{2,K}$ be the posterior mean of $b$ conditional on $A_1$ and $A_2$ under the prior $q_{\infty,K}$.
The value function in the limit experiment (omitting the centering term) is
\begin{align*}
  V(\hat b_{2,K}) =
  \min_c \quad r(c, \hat b_{2,K})
  \quad \text{s.t.} \quad
  c \in \mathcal{G}
\end{align*}
and the optimal design in the limit experiment is given by
\begin{align*}
  \min_{\delta_2} \quad \E_{\delta_2,K}\big[
    V(\hat b_{2,K})
    \mid A_1
  \big]
  \quad \text{s.t.} \quad
  \delta_2 \in \Delta
  .
\end{align*}
We denote the optimal design and policy in the limit experiment by $(\delta_{2,K}^*,c^*_K)$.

We evaluate the asymptotic performance of a sequence of designs and policies
by their Bayes regret under the truncated prior $q_{n,K}$.
We consider experimental designs which are described by
a fixed design rule $\phi_n$ so that $\delta_{2,n} = \phi_n(D_{1,n}, U)$ where $U$ is auxiliary randomness independent of the data.
The ex-ante Bayes regret under $q_{n,K}$ is given by
\begin{align*}
  \mathcal{R}_{n,K}(\delta_{2,n}, c_n)
  = \int \E_h^{\phi_n}\left[ R_n(c_n, h) \right] q_{n,K}(h) dh
\end{align*}
where $\E_h^{\phi_n}$ denotes the expectation over the data when the local parameter is $h$ and the design rule is $\phi_n$.
When $c_n = c_{n,K}^*$ is the Bayes optimal policy for $\delta_{2,n}$, we abbreviate
to $\mathcal{R}_{n,K}^*(\delta_{2,n})$, and when $\delta_{2,n} = \delta_{2,n,K}^*$ is the Bayes optimal design, 
we abbreviate to $\mathcal{R}_{n,K}^{**}$.
The fixed-$K$ regret of a design $\delta_2$ and policy $c$ in the limit experiment is
\begin{align*}
  \mathcal{R}_{\infty,K}(\delta_2, c)
  = \int \E_h^\phi\left[ R_\infty(c, B h) \right] q_{\infty,K}(h) dh
\end{align*}
where $\phi$ is the limit experiment design rule governing $\delta_2 = \phi(A_1, U)$.
The quantities $\mathcal{R}_{\infty,K}^*(\delta_2)$ and $\mathcal{R}_{\infty,K}^{**}$ are defined analogously.

\subsection{Asymptotic representation theorem}

Our first result is an asymptotic representation theorem which justifies the Gaussian approximation
to the information environment.
It states that the statistical behavior of any sequence of designs and policies in the finite-sample experiment
is asymptotically approximated by a design and policy in the Gaussian limit experiment.

To apply asymptotic approximations to both waves of the experiment,
we require that both waves of the experiment grow large.
\begin{assumption}
  \label{assumption:largewaves}
  Wave sizes satisfy $\lim_{n\to\infty} n_2/n = \rho$ and $\lim_{n\to\infty} n_1/n = 1 - \rho$ for some $\rho \in (0,1)$.
\end{assumption}
This assumption implies that the first wave cannot be an uninformative pilot that is vanishingly small
compared to the second wave--- it must be large enough to yield a $\sqrt{n}$-consistent estimate of $\theta$.

A key (and standard) condition for local asymptotic analysis is that the model is smooth in $\theta$ in the sense of differentiability in quadratic mean (DQM),
allowing for local approximations to the likelihood.

\begin{assumption}
  \label{assumption:dqm}
  There exists a function $\psi(y \mid z, x)$,
  called the score of $p_{y \mid z, x}$ at $\theta_0$,
  such that for all $h$,
  \begin{align*}
    \sup_{z, x} \int &\left[
      \sqrt{p_{y \mid z, x}(y \mid z, x; \theta_0 + h)}
      - \sqrt{p_{y \mid z, x}(y \mid z, x; \theta_0)}
    \right. \\ & \left.
      - \frac{1}{2} h' \psi(y \mid z, x) \sqrt{p_{y \mid z, x}(y \mid z, x; \theta_0)}
    \right]^2 \mu(dy)
    = o(\|h\|^2)
  \end{align*}
  as $h \to 0$, for some measure $\mu$ which dominates $p_{y \mid z, x}$ for all $(z, x)$.
\end{assumption}
At a high level, this means the square root of the density is differentiable in $\theta$ at $\theta_0$.
This condition allows us to use the score to linearly approximate the likelihood at $\theta$ near $\theta_0$.
Typically, the score is the gradient of the log-likelihood of the model
(see Lemma 7.6 of \textcite{vandervaartAsymptoticStatistics2000} for sufficient conditions for this formulation).

To show that the Gaussian approximation holds uniformly across the set of designs,
we need to link the conditional distributions
$p_{z \mid x}(\cdot \mid x; \delta)$ and $p_{y \mid z, x}(\cdot \mid z, x; \theta)$
across different designs through a potential outcomes model.
\begin{assumption}
  \label{assumption:potentialoutcomes}
  There exist functions $z(x, \nu; \delta)$ and $y(z, x, \epsilon; \theta)$ such that
  \begin{align*}
    z_i = z(x_i, \nu_i; \delta), \quad\quad y_i = y(z_i, x_i, \epsilon_i; \theta)
  \end{align*}
  where $\nu_i$ and $\epsilon_i$ are independent of each other and independent across $i$.
\end{assumption}
The structural error $\epsilon_i$ captures the unobserved determinants of the outcome $y_i$,
while the randomization device $\nu_i$ is generated by the decision maker to determine the treatment assignment $z_i$.
With this potential outcomes model,
we can define the score of an observation $i$ as a function of the design $\delta$ as
\begin{align*}
  \psi_i(\delta)
  =
  \psi\bigg(
    y\Big(
      z(x_i, \nu_i; \delta), 
      x_i, \epsilon_i; \theta
    \Big), 
    z(x_i, \nu_i; \delta),
    x_i
  \bigg).
\end{align*}
We also define the score process and log-likelihood ratio in the second wave as
\begin{align*}
  A_{2,n}(\cdot) = \frac{1}{\sqrt{n}} \sum_{i \in \mathcal{I}_{2,n}} \psi_i(\cdot)
  , \quad\quad
  \Lambda_{2,n}^h(\cdot) = \sum_{i \in \mathcal{I}_{2,n}} \log \frac{p_{y \mid z, x}(y_i(\cdot) \mid z_i(\cdot), x_i; \theta_0 + h/\sqrt{n})}{p_{y \mid z, x}(y_i(\cdot) \mid z_i(\cdot), x_i; \theta_0)}
  .
\end{align*}
The sample score $A_{1,n}$ and log-likelihood ratio $\Lambda_{1,n}^h$ in the first wave are defined analogously, but with the fixed design $\delta_1$.

The specific smoothness required for uniform approximations of the decision environment
is stochastic equicontinuity of the log-likelihood ratio and the score
as processes in $\Delta$.
\begin{assumption}
  \label{assumption:stocheq}
  $\Delta$ is a compact subset of Euclidean space.
  $A_{2,n}(\cdot)$ is stochastically equicontinuous as a process in $\Delta$ under $\theta_0$.
  For any $h$, $\Lambda_{2,n}^h(\cdot)$ is stochastically equicontinuous as a process in $\Delta$ under $\theta_0$.
\end{assumption}
Stochastic equicontinuity is a high-level condition that ensures that the
score and log-likelihood have relatively smooth sample paths for large enough $n$.
Many lower-level sufficient conditions are available in the literature
(e.g. \textcite{andrewsChapter37Empirical1994}, \textcite{vandervaartWeakConvergenceEmpirical2013}).
While Assumption \ref{assumption:stocheq} is imposed only under $\theta_0$,
we show in Appendix \ref{appendix:art} that the laws of the adaptive experiments
under local alternatives are contiguous, so that approximations that
vanish in probability under $\theta_0$ also vanish in probability under local alternatives.

Assumption \ref{assumption:stocheq} is a joint restriction on both the treatment assignment mechanism
and the outcome model. Both the treatment assignment and the model must be smooth enough that small changes in $\delta$
lead to small changes in the data and associated likelihood.
In Appendix \ref{appendix:stocheq} we give lower-level sufficient conditions which are natural for our setting,
and verify that they are satisfied in the Progresa example in Section \ref{section:progresa}.

Finally, we need the sample information matrices to have well-defined limits.
\begin{assumption}
  \label{assumption:information}
  There exists a matrix $J_1$
  and a covariance kernel $J_2(\cdot, \cdot)$ such that for any $d_1, d_2 \in \Delta$,
  \begin{align*}
    \frac{1}{n_1} \sum_{i \in \mathcal{I}_{1,n}} \E_{\theta_0} \big[ \psi_i(\delta_1) \psi_i(\delta_1)' \big] &\to J_1
    \\
    \frac{1}{n_2} \sum_{i \in \mathcal{I}_{2,n}} \E_{\theta_0} \big[ \psi_i(d_1) \psi_i(d_2)' \big] &\to J_2(d_1, d_2).
  \end{align*}
  $J_1$ is positive definite. 
  When $d_1 = d_2 = d$, we write $J_2(d) := J_2(d, d)$.
  The pooled information matrix is defined as $J(\delta_2) = (1-\rho) J_1 + \rho J_2(\delta_2)$.
\end{assumption}

Our asymptotic representation theorem states that any convergent
sequence of designs and policies in the finite-sample experiment
converges to a design and policy in the limit experiment.
\begin{theorem}[Asymptotic representation theorem]
  \label{theorem:art}
  Suppose \Crefrange{assumption:largewaves}{assumption:information} hold.
  Let $(\delta_{2n}, c_n)$ be a sequence of adaptive designs and policies in the finite-sample experiment which converges in distribution
  under the sequence $\theta_0 + h/\sqrt{n}$ for every $h$.
  Then there exist statistics $(\delta_2, c)$ and random variables
  $A_1, A_2, U$
  such that
  \begin{enumerate}
    \item $A_1 \sim N( (1-\rho) J_1 h, (1-\rho) J_1)$ independently of $U$,
        and $\delta_2 = \delta_2(A_1, U)$
    \item $A_2 | A_1, U \sim N(\rho J_2(\delta_2) h, \rho J_2(\delta_2))$ 
        and $c = c(A_1, A_2, U)$
    \item $(\delta_{2n}, c_n) \overset{h}{\leadsto} (\delta_2, c)$.
  \end{enumerate}
\end{theorem}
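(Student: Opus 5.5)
We follow the standard route for asymptotic representation theorems in sequential experiments, as in \textcite{hiranoAsymptoticRepresentationsSequential2025}. Work first under $\theta_0$, establish joint weak convergence of the decisions together with the sufficient statistics and log-likelihood ratios, and then transfer to local alternatives with Le Cam's third lemma. Finally, identify the limit law under $h$ with the law of a randomized rule in the Gaussian two-wave experiment.

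\textbf{Step 1: first wave.} Assumption \ref{assumption:dqm} (uniform in $(z,x)$) and Assumption \ref{assumption:information} give, by the usual LAN argument, the expansion
\begin{align*}
\Lambda_{1,n}^h = h'A_{1,n} - \tfrac{1}{2}(1-\rho)h'J_1h + o_p(1)
\end{align*}
under $\theta_0$. A Lindeberg CLT (the design is fixed and the observations are independent) gives $A_{1,n}\leadsto N(0,(1-\rho)J_1)$.

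\textbf{Step 2: second wave, uniformly in the design.} Conditional on $D_{1,n}$ and $U$, the design $\delta_{2n}$ is fixed. Under Assumption \ref{assumption:potentialoutcomes}, the second-wave summands $\psi_i(d)$ are independent of the first wave as processes in $d$. We show that $A_{2,n}(\cdot)$ converges weakly in $\ell^\infty(\Delta)$ to a Gaussian process $\mathbb{A}_2$ with covariance kernel $\rho J_2(\cdot,\cdot)$, independent of the first-wave limit. This follows from finite-dimensional convergence (multivariate CLT plus Assumption \ref{assumption:information}), stochastic equicontinuity (Assumption \ref{assumption:stocheq}), and compactness of $\Delta$. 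We also show the expansion
\begin{align*}
\Lambda_{2,n}^h(d) = h'A_{2,n}(d) - \tfrac{1}{2}\rho h'J_2(d)h + o_p(1)
\end{align*}
holds uniformly on $\Delta$: it holds pointwise, equicontinuity of $\Lambda_{2,n}^h$ and $A_{2,n}$ upgrades it to uniform, and continuity of $J_2(d)$ follows from the equicontinuity. Evaluating at the data-dependent $\delta_{2n}$ then gives
\begin{align*}
\Lambda_{2,n}^h(\delta_{2n}) = h'A_{2,n}(\delta_{2n}) - \tfrac{1}{2}\rho h'J_2(\delta_{2n})h + o_p(1).
\end{align*}

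\textbf{Step 3: joint limit under $\theta_0$.} The sequence $(A_{1,n},A_{2,n}(\cdot),\delta_{2n},c_n)$ is tight, since the decisions converge under $\theta_0$ by hypothesis. By Prohorov we pass to a subsequence along which it converges jointly to a limit $(\mathbb{A}_1,\mathbb{A}_2(\cdot),\delta_2^0,c^0)$. The structural claim to verify is that $\mathbb{A}_2(\cdot)$ is independent of $(\mathbb{A}_1,\delta_2^0)$. This holds because $\delta_{2n}$ is measurable with respect to $(D_{1,n},U)$, which is independent of the second-wave process. Consequently, conditional on $(\mathbb{A}_1,\delta_2^0)$, the variable $\mathbb{A}_2(\delta_2^0)$ is $N(0,\rho J_2(\delta_2^0))$. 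The continuous mapping theorem (evaluation is continuous on equicontinuous paths) gives joint convergence of $\Lambda_{1,n}^h+\Lambda_{2,n}^h(\delta_{2n})$ to
\begin{align*}
h'(\mathbb{A}_1+\mathbb{A}_2(\delta_2^0)) - \tfrac{1}{2}h'\big((1-\rho)J_1+\rho J_2(\delta_2^0)\big)h.
\end{align*}
This limit has $\theta_0$-expectation of its exponential equal to one: condition on $\delta_2^0$ and use the conditional Gaussianity. Le Cam's first lemma therefore yields contiguity. Le Cam's third lemma, in its general form with the limit density $e^{L}$, then gives the limit law under $h$ of $(\mathbb{A}_1,\mathbb{A}_2(\delta_2^0),\delta_2^0,c^0)$ as the $\theta_0$-law tilted by $e^{L}$.

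\textbf{Step 4: representation.} Write $A_1=\mathbb{A}_1$ and $A_2=\mathbb{A}_2(\delta_2^0)$. Under $\theta_0$, the conditional law of $\delta_2^0$ given $A_1$ is some Markov kernel. Realize it through a uniform $U$ independent of $A_1$ (randomization/transfer lemma), so that $\delta_2=\delta_2(A_1,U)$. Similarly, realize $c$ as a measurable function $c(A_1,A_2,U)$. This requires that $c^0$ be conditionally independent of everything else given $(A_1,A_2,\delta_2^0)$, which we can arrange by enlarging $U$ to $(U_1,U_2)$.

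The exponential tilt $e^{L}$ depends only on $(A_1,A_2,\delta_2)$. Completing the square in it therefore shifts $A_1$ to $N((1-\rho)J_1h,(1-\rho)J_1)$, leaves $U$ unaffected, and, conditional on $(A_1,U)$, shifts $A_2$ to $N(\rho J_2(\delta_2)h,\rho J_2(\delta_2))$. These are items 1 and 2. Item 3 follows because the limit does not depend on the subsequence: the joint law of the decisions under each $h$ is determined by the assumed convergence.

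I expect the main obstacle to be Step 3. It requires the independence and conditional Gaussianity of $\mathbb{A}_2(\delta_2^0)$ given an adaptively chosen, possibly non-discrete $\delta_2^0$, with $\delta_{2n}$ entering the process argument inside the likelihood ratio. I would handle this by proving joint convergence in $\ell^\infty(\Delta)\times\R^{k}$ and then applying continuous mapping for the evaluation map $(f,d)\mapsto f(d)$, which is continuous at equicontinuous $f$.
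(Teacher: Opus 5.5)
Your proposal is correct and follows the paper's proof essentially step for step. The shared steps are: a pointwise log-likelihood expansion upgraded to a uniform one by stochastic equicontinuity; independence of the limiting second-wave score process from $(\mathbb{A}_1,\delta_2^0)$, which gives $\E_0[e^{\Lambda_h}]=1$ and hence contiguity by Le Cam's first lemma; Le Cam's third lemma; and a randomized representation whose exponential tilt reproduces items 1--2. The only differences are cosmetic: the paper realizes $\delta_2$ and $c$ through conditional vector quantile functions rather than a generic transfer lemma, and it proves the Lindeberg condition and pointwise expansion from the uniform DQM (Lemmas \ref{lemma:lindeberg}--\ref{lemma:loglik}), whereas you defer these to ``the usual LAN argument.''
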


Asymptotic representation theorems are often used to provide simple approximations to statistical decision problems.
For the classic, non-adaptive case, see \textcite{le1972limits} and \textcite{vandervaartAsymptoticStatistics2000}.
\textcite{hiranoAsymptoticsStatisticalTreatment2009} use such a theorem to derive asymptotically optimal
rules for binary treatment choice.

\textcite{hiranoAsymptoticRepresentationsSequential2025} establish a similar result for batched multi-armed bandits.
Our result, while stated for a two-wave adaptive experiment, can be extended to multiple waves.
The main distinction between our result and \textcite{hiranoAsymptoticRepresentationsSequential2025} is that we allow for continuous treatments.
Our result builds on this work by extending their argument to these more complex design spaces
in which $\delta$ can determine not just the fraction of units assigned to a treatment,
but also the values of continuous treatments assigned to each unit, which can depend on covariates.
This generality comes at the cost of the additional Assumption \ref{assumption:stocheq},
which is satisfied in the multi-armed bandit setting of \textcite{hiranoAsymptoticRepresentationsSequential2025}
because the score and log-likelihood are partial-sum processes.
Finally, \textcite{hiranoAsymptoticRepresentationsSequential2025} do not solve for optimal policies,
which we consider next.

Theorem \ref{theorem:art} applies to sequences of policies $c_n$ that converge at the parametric rate.
While this can fail in nonparametric empirical welfare maximization problems
(\cite{kitagawaWhoShouldBe2018}, \cite{atheyPolicyLearningObservational2021}),
it is much less restrictive in the parametric models considered here.
In particular, Appendix \ref{appendix:lowerbound} verifies that the Bayes-optimal policy
is uniformly tight.

We note that the usefulness of Theorem \ref{theorem:art} extends beyond the 
specific decision problem considered here.
It does not depend on the objective function or constraints of the decision problem,
nor does it impose a Bayesian framework--- it is stated for fixed local alternatives.
Thus, Theorem \ref{theorem:art} can be used to analyze efficiency of estimators
and tests in adaptive experiments with continuous treatments.
We leave this to future work.
 
\subsection{Asymptotic optimality}

Our main result states that the procedure described in Section \ref{section:method}
is asymptotically optimal.
We first establish a lower bound on the regret of any sequence of designs and policies in the finite-sample experiment.
Then we show that the procedure described in Section \ref{section:method} achieves this lower bound, and is therefore asymptotically optimal.

We first require some regularity on the decision maker's objective and constraints.
\begin{assumption}[Nonlinear program regularity]
  \label{assumption:nlp}
  \begin{enumerate}
    \item $W$ is continuous on $\Pi \times \Theta$ and twice continuously differentiable 
      in a neighborhood of $(\pi_0, \theta_0)$.
      Each $g_j$ is twice continuously differentiable in a neighborhood of $\pi_0$.
    \item $\pi_0$ is the unique global maximizer of $W(\pi, \theta_0)$ subject to $g(\pi) \leq 0$.
      There exists some $\alpha < W(\pi_0, \theta_0)$, a neighborhood $N$ of $\theta_0$, and a compact set $C_\pi$ such that
      $\{ \pi \in \Pi : W(\pi, \theta) \geq \alpha\} \neq \emptyset$ and
      $\{\pi \in \Pi : W(\pi, \theta) \geq \alpha\} \subseteq C_\pi$ for all $\theta \in N$.
      $\Pi$ is closed and convex.
    \item The gradients $\{ \nabla g_j(\pi_0) : g_j(\pi_0) = 0 \}$ are linearly independent.
    \item There is a $\kappa > 0$ such that
      $c' \bar H c \geq \kappa \| c \|^2$
      for all $c \in \text{span}(\mathcal{G})$
      and $c' H c \geq \kappa \| c \|^2$
      for all $c \in \text{span}(\Pi - \Pi)$.
  \end{enumerate}
\end{assumption}

These conditions are standard in the literature on sensitivity analysis for nonlinear programs
under weak complementary slackness (\cite{shapiroSecondOrderSensitivity1985}, \cite{bonnansPerturbationAnalysisOptimization2013}).
They imply that the value of the reference decision problem (\ref{eq:nlp-reference}) is locally approximated by a quadratic program.
Item 3 ensures that the Lagrange multiplier $\lambda_0$ is unique, but not necessarily nonzero.
This means that some constraints in the limit experiment will be inequality constraints,
and the decision maker may not know whether these constraints are binding or nonbinding at the true optimal policy.
Item 4 restricts our attention to strongly concave welfare functions;
this rules out binary treatment choice with a linear welfare function.

We also require smoothness of the prior.
\begin{assumption}
  \label{assumption:prior}
  $q$ is positive and continuous in a neighborhood of $\theta_0$.
\end{assumption}

Under these assumptions, the optimal design and policy $(\delta_{2,K}^*, c_K^*)$ in the limit experiment provides
a lower bound on the regret of any sequence of designs and policies in the finite-sample experiment.
\begin{theorem}[Lower bound]
  \label{theorem:lowerbound}
  Suppose \Crefrange{assumption:largewaves}{assumption:prior} hold.
  Let $(\delta_{2,n}, c_n)$ be any sequence of designs and policies in the finite-sample experiment.
  Then for any fixed $K < \infty$,
  \begin{align*}
    \liminf_{n\to\infty} \mathcal{R}_{n,K}(\delta_{2,n}, c_n) \geq \mathcal{R}_{\infty,K}^{**}
    .
  \end{align*}
\end{theorem}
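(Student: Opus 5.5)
Without loss of generality I will restrict to a subsequence along which $\mathcal{R}_{n,K}(\delta_{2,n}, c_n)$ converges to the $\liminf$, which I assume finite. I will also replace $c_n$ by the Bayes-optimal policy $c^*_{n,K}$ for the given design rule, which can only lower regret. So it suffices to bound $\liminf \mathcal{R}^*_{n,K}(\delta_{2,n})$. By the tightness result referenced for Appendix~\ref{appendix:lowerbound}, $c^*_{n,K}$ is uniformly tight under $\theta_0+h/\sqrt n$ for $\|h\|\le K$. Since $\Delta$ is compact, $\delta_{2,n}$ is tight as well.

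By Prohorov I then pass to a further subsequence along which $(\delta_{2,n}, c^*_{n,K})$ converges in distribution under every $h$ in a countable dense subset. Contiguity (Appendix~\ref{appendix:art}) and Le Cam's third lemma extend this to all $h$ with $\|h\|\le K$. Theorem~\ref{theorem:art} then supplies a limit-experiment design $\delta_2(A_1,U)$ and policy $c(A_1,A_2,U)$ with $(\delta_{2,n},c_n)\overset{h}{\leadsto}(\delta_2,c)$ for every such $h$.

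Next I show that the local regret $R_n(c,h)=nR(\pi_0+c/\sqrt n,\theta_0+h/\sqrt n)$ satisfies
\[
\liminf_n R_n(c_n,h_n)\ge R_\infty(c,Bh)\quad\text{whenever } c_n\to c,\ h_n\to h .
\]
Feasibility matters here: if $c\notin\mathcal{G}$, the left side is $+\infty$ or the policy is infeasible. This is the quadratic-approximation step. I use the second-order sensitivity theory for nonlinear programs under weak complementary slackness (Shapiro; Bonnans--Shapiro), together with Assumption~\ref{assumption:nlp}. A second-order Taylor expansion of $W$ around $(\pi_0,\theta_0)$, combined with the Lagrangian identity $D+\lambda_0'\nabla g(\pi_0)=0$, gives for feasible $\pi$
\[
nW(\pi_0,\theta)-nW(\pi,\theta)\approx r(c,Bh)+\sqrt n\,\lambda_0' \bigl(g(\pi_0+c/\sqrt n)-g(\pi_0)\bigr)\text{-type terms}.
\]
Nonnegativity of the latter on the feasible set yields the lower bound for the $W(\pi,\theta)$ part. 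The optimal-value part $n\max_\Pi W(\cdot,\theta)$ converges to $-\min_{\mathcal{G}} r(c,Bh)$ by the same expansion, using Items 3–4 and the compactness in Item 2 to localize the maximizer at rate $1/\sqrt n$.

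With the limiting inequality in hand, I apply the extended (portmanteau) Fatou lemma, using weak convergence and a lower semicontinuous nonnegative limit integrand. This gives $\liminf_n \E_h[R_n(c_n,h)]\ge \E_h[R_\infty(c,Bh)]$ pointwise in $h$. A second application of Fatou over $h$ handles the prior: since $q_{n,K}\to q_{\infty,K}$ uniformly on the ball by Assumption~\ref{assumption:prior}, I get $\liminf \mathcal{R}_{n,K}\ge \mathcal{R}_{\infty,K}(\delta_2,c)$. Finally, $(\delta_2,c)$ is a feasible design rule and policy in the limit experiment, so this is at least $\mathcal{R}^{**}_{\infty,K}$.

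The main obstacle I expect is the uniform (in $h$) control needed to swap limits. The Fatou steps need the regret to be nonnegative, which it is. They also need the convergence in distribution to hold jointly with a measurable dependence on $h$, so that integrating over $h$ is legitimate. I would handle this by working with the joint law of $(h,\text{data})$ under $q_{n,K}$, which is a mixture, and showing that it converges to the limit mixture; bounded $h$ and contiguity should make this straightforward.
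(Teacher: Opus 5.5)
Your proposal is correct and follows essentially the same route as the paper. Both arguments reduce to the Bayes-optimal policy $c^*_{n,K}$, use its tightness together with the asymptotic representation theorem (via contiguity and Le Cam's third lemma) to obtain a limit design and policy, and derive $\liminf_n R_n(c_n,h)\ge R_\infty(c,Bh)$ from a Lagrangian second-order expansion plus the $1/\sqrt{n}$ localization of the oracle optimizer, before finishing with Fatou over the data and over $h$. The differences are cosmetic: you use a weak-convergence (portmanteau) form of Fatou where the paper uses a Skorokhod representation, and your countable-dense-subset extraction is unnecessary, since joint convergence under $h=0$ with the score processes already gives convergence under every $h$.
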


Theorem \ref{theorem:lowerbound} combines Theorem \ref{theorem:art},
which justifies the Gaussian information environment,
with a second-order expansion of the oracle problem (\ref{eq:nlp-reference}).
The lower bound is more than just a Taylor approximation to welfare;
it is a joint approximation of the objective and constraints of the decision problem.
This approximation is related to existing results on sensitivity analysis of nonlinear programs under weak complementary slackness
(\cite{shapiroSecondOrderSensitivity1985}, \cite{bonnansPerturbationAnalysisOptimization2013}).
In particular, $V(b)$ provides a second-order directional derivative of the oracle problem (\ref{eq:nlp-reference}) 
in the direction $b$, up to policy-invariant centering terms.
Theorem \ref{theorem:lowerbound} then shows that the pointwise directional derivative 
provides an asymptotic lower bound
in the dynamic decision problem in the limit experiment.

We now show that our proposed experimental design $\hat\delta_{2,n}$ 
is asymptotically optimal in the sense that it attains the lower bound in Theorem \ref{theorem:lowerbound}.
To establish asymptotic optimality,
we construct a feasible sequence of terminal policies which attain the lower bound in Theorem \ref{theorem:lowerbound}.
Let $\hat A_{1,n}$ and $\hat A_{2,n}(\delta_2)$ be the feasible analogues of $A_{1,n}$ and $A_{2,n}(\delta_2)$,
where the score is evaluated at $\hat \theta_{1,n}$ rather than $\theta_0$.
Let $\hat J_n(\delta_2) = \frac{n_1}{n} \hat J_{1,n} + \frac{n_2}{n} \hat J_{2,n}(\delta_2)$ be the feasible analogue of $J(\delta_2)$.
Define the one-step estimator\footnote{
  The maximum likelihood estimator could also be used after strengthening the classical assumptions
  (e.g. Section 3.2 of \textcite{vandervaartWeakConvergenceEmpirical2013})
  to ensure uniform $\sqrt{n}$-consistency and asymptotic normality of the MLE across the set of designs.
  Given stochastic equicontinuity of the score process, the one-step estimator yields a simpler proof.
}
$\hat \theta_{2,n}^\text{os} = \hat\theta_{1,n} + \frac{1}{\sqrt n} \hat J_n(\hat\delta_{2,n})^{-1} \hat A_{2,n}(\hat\delta_{2,n})$
and $\hat \beta_{2,n}^\text{os} = \hat \beta_{1,n} + \hat B_n (\hat \theta_{2,n}^\text{os} - \hat \theta_{1,n})$.
The localized version is $\hat h_n^\text{os} = \sqrt{n}(\hat\theta_{2,n}^\text{os} - \theta_0)$.
Define
\begin{align*}
  \hat \pi_n^\text{os} = \arg\min_{\pi \in \Pi} \hat R_n(\pi, \hat\beta_{2,n}^\text{os}) \quad \text{s.t.} \quad g(\pi) \leq 0
\end{align*}
and let $\hat c_n^\text{os} = \sqrt{n}(\hat\pi_n^\text{os} - \pi_0)$ be the corresponding localized policy.
The one-step estimator must be regular in the following sense:

\begin{assumption}[Plug-in estimator]
  \label{assumption:plugin}
  \begin{enumerate}
    \item The first-wave MLE is $\sqrt{n}$ consistent \linebreak and interior to the parameter space.
    \item The score $\psi_\theta$ exists in a neighborhood of $\theta_0$ and satisfies
    \begin{align*}
      \sup_{\|h\|\leq K}
      \left\| \frac{1}{\sqrt{n}} \sum_{i \in \mathcal{I}_{1,n}} \left[\psi_{i,h}(\delta_1) - \psi_i(\delta_1)\right]
      + (1-\rho) J_1 h \right\|
      &= o_{P_0}(1)
      \\
      \sup_{d\in\Delta, \|h\| \leq K}
      \left\| \frac{1}{\sqrt{n}} \sum_{i \in \mathcal{I}_{2,n}} \left[\psi_{i,h}(d) - \psi_i(d) \right]
      + \rho J_2(d) h \right\|
      &= o_{P_0}(1)
      .
    \end{align*}
    \item $\|\hat J_{1,n} - J_1\| = o_{P_0}(1)$ and $\sup_{d\in\Delta} \|\hat J_{2,n}(d) - J_2(d)\| = o_{P_0}(1)$.
    \item The sequence $R_n(\hat c_n^\text{os}, h)$ is uniformly integrable under $q_{\infty,K}$ for any $K < \infty$.
  \end{enumerate}
\end{assumption}
This assumption ensures that the feasible one-step estimator $\hat \beta^\text{os}_{2,n}$ behaves
like the Gaussian limit experiment estimator $\hat b_{2,K}$ for large $K$.
Interiority of $\hat \theta_{1,n}$ implies that the first-wave score  $\hat A_{1,n} = 0$.
Together, items 1-3 imply that the one-step process $\hat A_{2,n}(\cdot)$ is asymptotically equivalent to the Gaussian limit experiment score process $A_2(\cdot)$,
just as Assumption \ref{assumption:stocheq} implies that the true second-wave score process $A_{2,n}(\cdot)$ is asymptotically equivalent to $A_2(\cdot)$.
The uniform score and information conditions are implied by uniform versions of standard
differentiability and dominance conditions on the likelihood.

We now show that a decision maker who uses $\hat\delta_{2,n}$ to design the second wave of the experiment
and subsequently chooses the one-step estimator $\hat \pi_n^\text{os}$ for the final policy 
attains the asymptotic lower bound in Theorem \ref{theorem:lowerbound}.

\begin{theorem}[Asymptotic optimality]
  \label{theorem:optimality}
  Suppose that for any $K$, \Crefrange{assumption:largewaves}{assumption:plugin} hold.
  Then
  \begin{align*}
    \lim_{K\to\infty} \limsup_{n\to\infty} \left\{ \mathcal{R}_{n,K}(\hat\delta_{2,n}, \hat c_n^\text{os}) - \mathcal{R}_{\infty,K}^{**}\right\} = 0
    .
  \end{align*}
\end{theorem}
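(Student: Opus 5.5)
Since Theorem~\ref{theorem:lowerbound} gives $\liminf_n \mathcal{R}_{n,K}(\hat\delta_{2,n},\hat c_n^\text{os}) \geq \mathcal{R}^{**}_{\infty,K}$ for every $K$, only the upper bound remains. We will show
\begin{align*}
  \lim_{K\to\infty}\limsup_{n\to\infty}\mathcal{R}_{n,K}(\hat\delta_{2,n},\hat c_n^\text{os}) \leq \lim_{K\to\infty}\mathcal{R}^{**}_{\infty,K}.
\end{align*}
The plan has three parts. First, identify the joint limit in distribution of $(\hat\delta_{2,n},\hat c_n^\text{os})$ under every local alternative $h$. Second, pass the Bayes regret to the limit using Assumption~\ref{assumption:plugin}.4. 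Third, compare the resulting limit-experiment procedure with the $K$-optimal one and show the gap vanishes as $K\to\infty$.

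\textbf{Step 1 (limit of the feasible procedure).} Under $\theta_0$, combine Assumption~\ref{assumption:plugin} with the stochastic equicontinuity in Assumption~\ref{assumption:stocheq}. This gives uniformly in $d\in\Delta$
\begin{align*}
  \hat A_{2,n}(d) = A_{2,n}(d) - \rho J_2(d)\sqrt n(\hat\theta_{1,n}-\theta_0) + o_P(1).
\end{align*}
Interiority gives $\sqrt n(\hat\theta_{1,n}-\theta_0) = ((1-\rho)J_1)^{-1}A_{1,n}+o_P(1)$. Contiguity, as in the proof of Theorem~\ref{theorem:art}, transfers these expansions to $\theta_0+h/\sqrt n$.

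Next consider the design step. The plug-in quantities $\hat D_n,\hat B_n,\hat H_n,\hat J$ are consistent, and $\hat\pi_{1,n}$ is $\sqrt n$-consistent by the sensitivity result underlying Theorem~\ref{theorem:lowerbound}. Hence the feasible problem (\ref{eq:hatVn}), written in local coordinates, converges to a flat-prior limit problem. In that problem $\hat b_2\sim N(\hat b_1, B(((1-\rho)J_1)^{-1}-J(\delta)^{-1})B')$, and the value is the local quadratic program around $\hat c_1$ with the linearized constraints.

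The key fact to use here is the following. The local objective $\hat R_n$, recentered at $\hat\pi_{1,n}$ and rescaled by $n$, becomes $r(c,\hat b)$ plus terms constant in $c$ and $o_P(1)$ terms, uniformly on compacts. The constraint $g(\pi)\le0$ near $\pi_0$ becomes $\mathcal G$ after absorbing curvature into $\bar H$. This uses the second-order sensitivity argument (\cite{shapiroSecondOrderSensitivity1985}) already needed for the lower bound.

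Continuity of the expected value in $\delta$ then allows an argmax-type theorem. Let $\delta^\flat(A_1)$ be the flat-prior optimal design, with a measurable selection if it is not unique. Then $\hat\delta_{2,n}\leadsto\delta^\flat(A_1)$, and likewise $\hat c_n^\text{os}\leadsto c^\flat=\arg\min_{c\in\mathcal G} r(c,B\hat h^\flat)$, where $\hat h^\flat=J(\delta^\flat)^{-1}(A_1+A_2)$ is the flat-prior posterior mean.

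\textbf{Step 2 (regret convergence).} The regret $R_n(c,h)$ converges to $R_\infty(c,Bh)$ uniformly on compacts. Combining this with the weak convergence of Step 1 and the uniform integrability in Assumption~\ref{assumption:plugin}.4, dominated convergence over $h$ in the ball of radius $K$ gives
\begin{align*}
  \mathcal{R}_{n,K}(\hat\delta_{2,n},\hat c_n^\text{os})\to\mathcal R_{\infty,K}(\delta^\flat,c^\flat).
\end{align*}
Assumption~\ref{assumption:prior} ensures $q_{n,K}\to q_{\infty,K}$ in total variation.

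\textbf{Step 3 (flat versus truncated prior).} This is the main obstacle: show $\mathcal R_{\infty,K}(\delta^\flat,c^\flat)-\mathcal R^{**}_{\infty,K}\to0$ as $K\to\infty$. I would first try the standard argument that a uniform prior on a growing ball is approximately flat. For $h$ in the interior ball $\|h\|\le K-K^{1/2}$, the posterior under $q_{\infty,K}$ differs from the flat Gaussian posterior by an amount that is exponentially small in $K$. This holds uniformly over designs, since $J(\delta)\succeq(1-\rho)J_1$ is positive definite. The shell has prior mass $O(K^{-1/2})$.

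The regret integrand must still be controlled on that shell. The quadratic growth of $R_\infty$ together with Gaussian tails bounds the shell contribution, but only after normalizing appropriately, since regret grows like $K^2$ there. If this bound is too crude, I would instead compare both procedures pointwise in $h$ and dominate them by a Gaussian-tail envelope. On the good region, the truncated and flat posteriors induce nearly identical value functions $V$ and design criteria, which are Lipschitz in the posterior mean on compacts by strong convexity (Assumption~\ref{assumption:nlp}.4). The near-optimality of $(\delta^\flat,c^\flat)$ for the $K$-problem then follows: its criterion is within $o(1)$ of the optimal $K$-criterion for each $A_1$. Integrating this completes the argument.
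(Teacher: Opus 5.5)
Your three-part architecture is the paper's: the lower bound from Theorem~\ref{theorem:lowerbound}, a fixed-$K$ upper bound from the limit of the plug-in procedure plus uniform integrability, and a flat-versus-truncated gap as $K\to\infty$. The gap is that Step 2 rests on a false claim. $R_n(c,h)$ does not converge to $R_\infty(c,Bh)$ uniformly on compacts.
\begin{itemize}
\item One has $R_n^s(c,h)=\sqrt{n}D'c+c'Bh+\tfrac{1}{2}c'Hc+o(1)$, with $\sqrt{n}D'c=-\sqrt{n}\sum_{j\in\mathcal{J}_s}\lambda_{0j}\nabla g_j(\pi_0)'c$. So even at fixed $c\in\mathcal{G}$ the limit is $c'Bh+\tfrac{1}{2}c'Hc-V(Bh)$, which omits the curvature term $\tfrac{1}{2}\sum_j\lambda_{0j}c'\nabla^2 g_j(\pi_0)c$ in $\bar H$.
\item For feasible $c_n\to c\in\mathcal{G}$, $R_n^s(c_n,h)\to r(c,Bh)$ holds only if each strictly active constraint is active up to $o(1/n)$ at $\pi_0+c_n/\sqrt{n}$. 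Slack $\varepsilon/n$ on constraint $j$ adds $\lambda_{0j}\varepsilon$ to the limiting regret. This is why Proposition~\ref{proposition:epiconvergence} gives only a $\liminf$, and why $\hat c_n^{\text{os}}\leadsto c^\flat$ alone cannot deliver $R_n(\hat c_n^{\text{os}},h)\leadsto R_\infty(c^\flat,Bh)$. Your Step~1 description of $n\hat R_n$ has the same problem.
\end{itemize}
What you need is a value statement for this particular sequence. First, $\hat S_n(c,\hat B_n m)$ and $R_n^s(c,m)-R_n^s(\hat c_{1,n},m)$ differ by $o_P(1)$ uniformly on compacts, because the divergent $\sqrt{n}\hat D_n'(c-\hat c_{1,n})$ terms cancel exactly. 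Second, $\hat\pi_n^{\text{os}}$ minimizes the surrogate over the exact nonlinear feasible set. Hence it nearly minimizes true local regret, and Lemma~\ref{lemma:valueapproximation} gives $R_n^s(\hat c_n^{\text{os}},\hat h_n^{\text{os}})=V(B\hat h_n^{\text{os}})+o_P(1)$ (Lemma~\ref{lemma:surrogateconvergence}(3)). The asymptotic linearity $R_n^s(c,m_1)-R_n^s(c,m_2)=c'B(m_1-m_2)+o(1)$ then moves this to $h$ and yields $r(c^\flat,Bh)$. Relatedly, without uniqueness the argmin theorem does not give $\hat\delta_{2,n}\leadsto\delta^\flat(A_1)$ for a pre-chosen selection. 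You must pass to subsequences via Corollary~\ref{corollary:art}, show every limit design minimizes the flat criterion, and prove Step~3 for arbitrary minimizers.

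Step~3, which you rightly call the crux, is also not closed, and parts of it are misdirected.
\begin{itemize}
\item The shell is not the issue. By Lemma~\ref{lemma:qplipschitz}(3), $R_\infty(c^\flat,Bh)\le\kappa^{-1}\|B(\hat h^\flat-h)\|^2$, whose expectation is bounded uniformly in $h$ and $K$ since $J(d)\succeq(1-\rho)J_1$.
\item Comparing the procedures pointwise in $h$ cannot work, since the $K$-optimal procedure is optimal only on average.
\item Your claim that the flat design is within $o(1)$ of the optimal $K$-criterion ``for each $A_1$'' fails for $A_1$ whose posterior puts non-negligible mass near or outside $\partial\mathcal{B}_K$.
\item ``Lipschitz in the posterior mean on compacts'' gives constants of order $K$, because posterior means are of order $K$. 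Meanwhile $\E_K\|\hat b_{2,K}-\hat b_{2,\infty}\|^2\to 0$ comes with no rate.
\end{itemize}
The missing device is the posterior representation with centering. Since $r$ is affine in $b$, $\mathcal{R}^*_{\infty,K}(\delta)=\E_K[V(\hat b_{2,K}(\delta))]-\E_K[V(Bh)]$. Subtract the design-independent $V(\hat b_{1,K})$ (respectively $V(\hat b_{1,\infty})$); the criteria become conditional expectations of $D_V(\hat b_1,\hat b_2)$. This quantity satisfies $-\tfrac{1}{2\kappa}\|b_2-b_1\|^2\le D_V(b_1,b_2)\le 0$, and it is Lipschitz with constant proportional to the increments $\|b_2-b_1\|$ rather than the levels. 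Combined with $L^2$ convergence of posterior means (Lemma~\ref{lemma:limitmean}) and total-variation convergence of the predictive laws, this gives $L^1(\E_K)$ agreement of the centered criteria for any design sequence (Lemma~\ref{lemma:limitloss}). Pointwise minimality of $\delta^\flat$ for the flat criterion then gives $o(1)$-optimality in the $K$-problem. The remaining policy gap is at most $\kappa^{-1}\E_K\|\hat b_{2,\infty}(\delta^\flat)-\hat b_{2,K}(\delta^\flat)\|^2\to 0$, as in Proposition~\ref{proposition:limitgap}.
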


This is the main result of the paper, which justifies the use of the method proposed in Section \ref{section:method}.
While Theorem \ref{theorem:lowerbound} states that no feasible design and policy in the finite-sample experiment can
achieve lower regret than the optimal design and policy in the limit experiment,
this bound is useful only insofar as the design and policy used in practice can attain it.
Theorem \ref{theorem:optimality} shows that the proposed method does attain this bound\footnote{
Since $\mathcal{R}_{n,K}^*(\hat \delta_{2,n}) \leq \mathcal{R}_{n,K}(\hat\delta_{2,n}, \hat c_n^\text{os})$,
Theorem \ref{theorem:optimality} implies that a decision maker who uses $\hat\delta_{2,n}$ to run 
the experiment and then chooses the fully ex-post Bayes optimal policy $c_{n,K}^*$
will also attain the asymptotic lower bound;
however, the one-step estimator is simpler to compute and is asymptotically equivalent.
}.
Since the limit experiment is defined under a truncated prior,
the flat-prior procedure in Section \ref{section:method} is not asymptotically optimal for any fixed $K$,
since it disregards the information of the truncated prior.
However, as $K$ grows large and the truncation becomes less informative,
the optimality gap between the flat-prior and truncated-prior procedure vanishes.

Previous asymptotic optimality results in adaptive experiments apply to decisions such as point estimation,
hypothesis testing, or discrete treatment choice 
(\cite{hiranoAsymptoticRepresentationsSequential2025,adusumilliRiskOptimalPolicies2025,adusumilliHowSampleWhen2025}).
To our knowledge, Theorem \ref{theorem:optimality} is the first asymptotic optimality result for adaptive experiments with continuous treatments,
nonlinear objectives, and nonlinear constraints.
A limitation of this result is that it is restricted to Bayes regret.
Theorem \ref{theorem:optimality} may be useful in studying asymptotic optimality under minimax regret, but we leave this to future work.

\section{Application to Progresa}
\label{section:progresa}
We apply our method to the Progresa cash transfer experiment in Mexico.
We consider the problem of choosing the size and targeting
of the cash transfer to maximize the secondary school completion rate
among children in the program.
When our proposed method is used to design a second wave of the Progresa
experiment, we find that the expected regret of the policy choice problem 
is reduced by up to 70\% relative to a second wave which uses the original
Progresa experimental design.

\subsection{Model}

We use a dynamic school choice model to learn about the effects
of counterfactual policies that the decision maker may choose.
This approach has been taken by
\textcite{toddAssessingImpactSchool2006} and \textcite{attanasioEducationChoicesMexico2012}.
One reason this model is useful for policy choice is that
the school completion rate is a long-term outcome which is not
immediately observed in the experiment.
Instead, only current schooling enrollment decisions are observed.
The dynamic model allows the decision maker to use the observed effects
of the program to learn about the long-term effects of different policies

In describing the model, we suppress the unit index $i$.
At each age $\tau$ before age $18$, households may choose whether a child in grade $s_\tau$
will attend school, denoted $y_\tau = 1$ or $y_\tau = 0$.
If the child attends school, they receive a subsidy $z_{s_\tau}$
which depends on the grade of enrollment.
The utility a household receives from a child enrolling in school
depends on the subsidy $z$ as well as 
the child's age $\tau$,
current grade level $s_\tau$,
a female sex indicator $f$,
and an indicator for attending secondary school.
The difference at age $\tau$ in the utility from enrolling in school
versus not enrolling is given by
\begin{align*}
    u_{\tau}(s, f, z) =&
    \gamma_0 + \gamma_1 \tau
    +\gamma_2 s
    + \gamma_3 f
    + \gamma_4 \1[s \geq 7] +
    \\
    &
    \gamma_5 (1-f) z
    + \gamma_6 (1-f) \tau z
    + \gamma_7 f z
    + \gamma_8 f \tau z
    + \epsilon_{\tau}
    .
\end{align*}

Households choose a sequence of school attendance $y_\tau$ to maximize long-term
utility.
Each period, if the child was enrolled in school they may advance a grade
with probability $p_\text{pass}(\tau, s_\tau)$ depending on age and grade,
or they may fail and remain at the same grade.
Failure is assumed to be exogenous.
Finally, at age $\tau=18$ households obtain a terminal value
\begin{align*}
    v_{18} = \gamma_9 \1[s_{18} \geq 10]
\end{align*}
which depends on whether the child completed secondary school.
Thus, households solve
\begin{align*}
    \max_{\{y_\tau\}_{\tau=6}^{17}} &\sum_{\tau=6}^{17}
    \eta^\tau \E\left[ y_\tau u_{\tau}(s_\tau, f, z_{s_\tau}) \right]
    + \eta^{18} \E\left[ v_{18} \right]
    \\
    s_{\tau+1} &= s_\tau + y_\tau \text{Bernoulli}(p_\text{pass}(\tau, s_\tau))
\end{align*}
where $\eta$ is a calibrated discount factor and households may observe $\epsilon_\tau$ before choosing $y_\tau$.
We assume that $\epsilon_{\tau}$ has a logistic
distribution, which
enables us to compute the value function for the households by backwards
induction.

\subsection{Decision problem}

The decision problem is to choose
(i) an experimental design and
(ii) a policy that maximizes the fraction of children graduating from
secondary school.
A policy $\pi$ is a four-dimensional vector specifying the subsidy amounts
$z_s$ for boys and girls in primary and secondary school.
The welfare function is
\begin{align*}
    W(\pi, \theta) = 
    P_{\pi, \theta} [s_{18} \geq 10]
\end{align*}
where $P_{\pi,\theta}$ represents the probability distribution 
of $s_{18}$ when $\theta$ is the true parameter value and
subsidy amounts are determined by $\pi$.

We suppose the decision maker has access to a first wave of experimental data
and wants to run a second wave to help choose the optimal policy.
The first wave consists of
a sample of $n_1$ children from the original Progresa experiment.
The design of the second wave can differ from the original Progresa experiment in two ways.
First, the decision maker can change the peso amount of the subsidy
offered to boys and girls in each grade.
Second, the decision maker can let the probability of treatment vary by gender.
Thus, the design $\delta_2$ is a six-dimensional vector describing
the amount of the subsidy offered to each group
and the probability of being treated for boys and girls.
We consider $n_1 = n_2 \in \{250, 500, 1000\}$.

Both the experimental subsidy and the chosen policy subsidy
are subject to constraints.
First, the total amount of the chosen subsidy cannot exceed the total amount
of the original Progresa subsidy, although the decision maker can
allocate the subsidy differently across gender and school level.
The decision maker may also allocate the subsidy differently across the treatment and control groups in the experiment,
but the treated-share-weighted average of the experimental subsidy cannot exceed that of the Progresa experiment.
Second, the subsidy must be nonnegative,
preventing the decision maker from taxing children in some grades
to subsidize children in other grades.
We also constrain the subsidy in each grade to be less than $1000$ pesos.

\subsection{Optimal experiment}

We now present the result of applying our proposed experimental design
to this decision problem.
Following Section \ref{section:limit}, we focus on the performance of the proposed method
in statistically challenging settings where the optimal policy and the active set of constraints
are difficult to learn.
Specifically, we evaluate the expected regret of our method when the true parameter is in 
a neighborhood of a point $\theta_0$ where (i) the optimal policy $\pi_0$ is a vertex of the feasible set and
(ii) the nonnegativity constraints are weakly binding, so the decision maker is unsure
which groups should receive a positive subsidy.
We calibrate these points by estimating $\theta_0$ by maximum likelihood on a large holdout set,
subject to the constraints that the KKT conditions of the policy choice problem
hold with $\lambda_{j0} = 0$ for the nonnegativity constraints $j$.
This yields four points $\theta_0$ corresponding to the four vertices of the feasible set
that lie on the budget constraint.

For each sample size and each $\theta_0$ estimated on the holdout set,
we run a Monte Carlo simulation to compute the expected regret of the proposed method.
We sample $\theta$ from a Uniform distribution on a neighborhood of $\theta_0$ and
generate a first wave of experimental data using the original Progresa design.
This first wave is used to estimate $\hat\delta_{2,n}$ following Section \ref{section:method}.
We then generate a second wave of experimental data using the $\hat\delta_{2,n}$
and use both waves to estimate the optimal policy $\hat \pi_{2,n}$.
For comparison, we also generate a second wave of experimental data using the original Progresa design
and a second wave using an oracle optimal design which uses the true $\theta$ and Fisher information
to choose the optimal design, rather than using first-wave plugins for these quantities.

\begin{figure}
    \centering
    \caption{Expected regret by experimental design}
    \label{figure:regret}
    \includegraphics[width=0.9\textwidth]
    {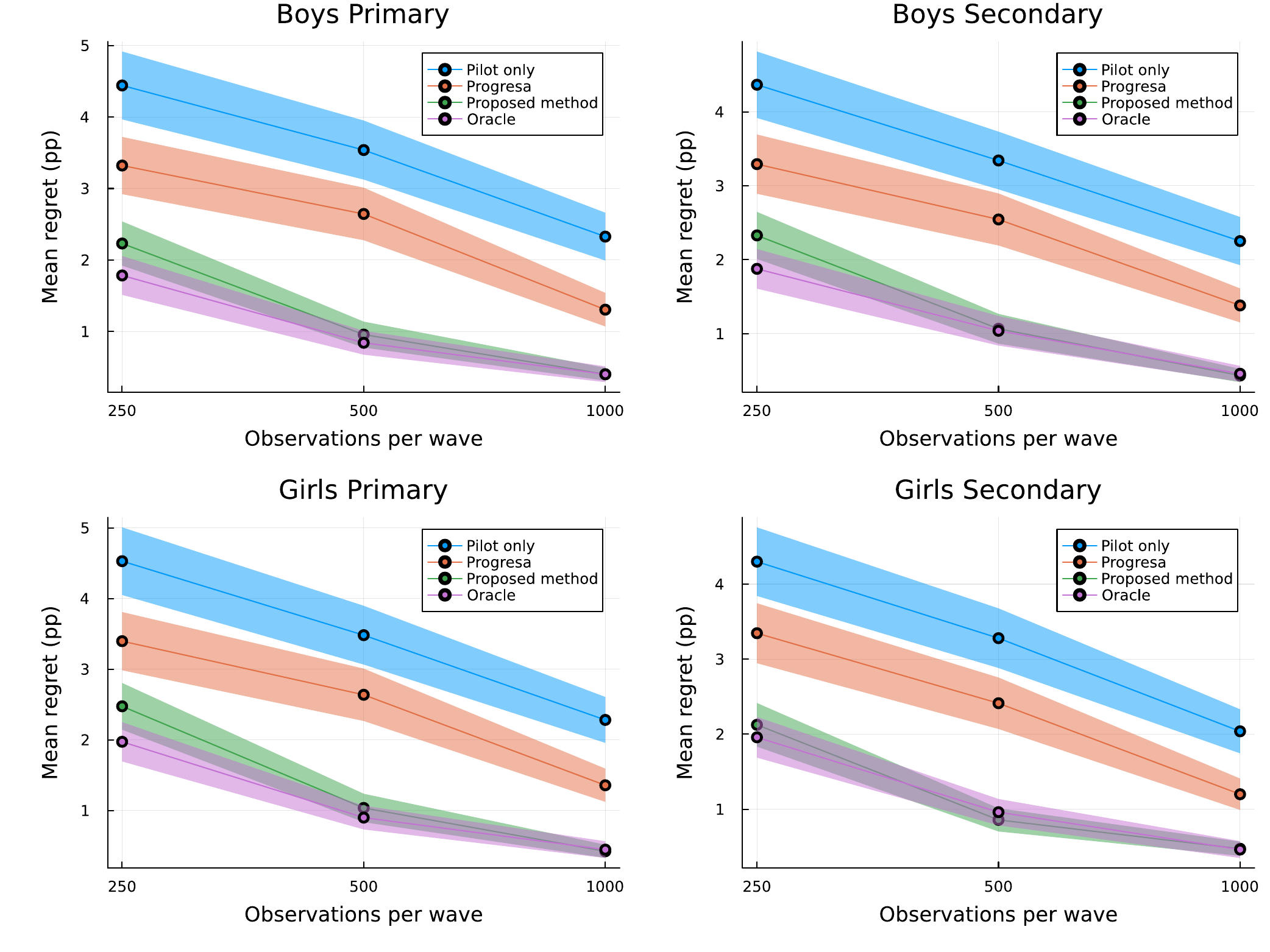}
    \par\smallskip
    \begin{minipage}{1.0\textwidth}
      \footnotesize
      \setstretch{1.0}
      \textit{Note: }
      Each panel shows expected regret and 95\% Monte Carlo confidence intervals
      when $\theta$ is local to one of four reference points $\theta_0$ corresponding to four vertices of the feasible set.
      Each vertex is characterized by a different group of children receiving a positive subsidy.
      ``Pilot only'' disregards the second wave.
      ``Progresa'' generates the second wave using the original Progresa design, just like the first wave.
      ``Proposed method'' generates the second wave using the method proposed in Section \ref{section:method}.
      ``Oracle'' uses the true $\theta$ and Fisher information to construct the approximations.
      Expected regret is computed based on the $>99\%$
      of replications that were successfully solved.
    \end{minipage}
\end{figure}

Figure \ref{figure:regret}
shows the headline expected regret results.
More detailed results are provided in Appendix \ref{appendix:progresa}.
Across all sample sizes and reference points, the proposed method substantially 
reduces expected regret relative to the original Progresa design.
The gain is particularly large for $n_1 = n_2 \in \{500, 1000\}$,
in which regret is reduced by 50-70\%.
The gains at $n_1 = n_2 = 250$ are smaller, suggesting that sampling error in the 
first wave is enough to degrade the quality of the estimated design,
but even this noisily estimated design is more valuable than doubling the sample size of the original Progresa design.
We use the same method to estimate the optimal policy in each experimental design,
so this difference is due to the value of the dataset generated by different designs.

Appendix \ref{appendix:progresa} provides additional details on how the proposed design
differs from the original Progresa design and how these differences affect the quality of the chosen policy.
The optimal design tends to focus exclusively on secondary school and give large experimental subsidies to a small treatment group.
This is despite $\theta_0$ being chosen so that the marginal effect of increasing the subsidy
is equalized across all four groups, so that allocating the subsidy to primary school children is optimal for some values of $\theta$.
This is therefore not an artifact of primary school being unimportant for the policy choice
but rather a reflection of how informative different experimental subsidies are for learning about the optimal policy,
which is jointly determined by the objective, constraints, and information of the model.
In turn, the more informative designs improve the quality of the chosen policy by increasing the 
probability of allocating more subsidy to the group with the highest response.
By focusing the experiment on learning about the decision-relevant
marginal effects of policy changes, the optimal experiment is able to
more effectively inform policy choice.

\begingroup
\setlength{\emergencystretch}{1em}
\printbibliography
\endgroup

\appendix

\section{Proof of Theorem~\ref{theorem:art}}
\label{appendix:art}
Let $\mathcal{D} = \mathcal{Y} \times \mathcal{Z}$ be the sample space for the random data across all units.
Let $\Omega_n = \mathcal{D}^{n_1} \times \mathcal{D}^{n_2} \times [0,1]$
be the sample space of the finite-sample experiment, along with an auxiliary uniform random variable $U$.
The statistics $\delta_{2n}, c_n$ are measurable maps on $\Omega_n$.

Let $\phi_n: \mathcal{D}^{n_1} \times [0,1] \to \Delta$ be a sequence of
adaptive experimental design rules,
so that $\delta_{2,n} = \phi_n(D_{1,n}, U)$.
Given a sequence of adaptive designs $\phi_n$, we construct a sequence
of probability measures $P_{n,h}^{\phi_n}$ on $\Omega_n$ as follows.
Given our potential outcomes model and the law $P_{\epsilon, \nu}$ 
of the structural error term and randomization device,
define the conditional law of the second wave as
\begin{align*}
  &P_{2,n,h}^{\phi_n}(dD_{2,n} \mid D_{1,n}, U)
  \\
  &= \prod_{i \in \mathcal{I}_{2,n}}
  P_{\epsilon, \nu}\left(
    dy\left(z\left(x_i, \nu_i, \phi_n(D_{1,n}, U)\right), x_i, \epsilon_i; \theta_0 + \frac{h}{\sqrt{n}}\right),
    dz\left(x_i, \nu_i; \phi_n(D_{1,n}, U)\right)
  \right)
\end{align*}
and the full law of the finite-sample experiment as
\begin{align*}
  P_{n,h}^{\phi_n}(dU, dD_{1,n}, dD_{2,n}) &= 
  \mu(dU) \times P_{1,n,h}(dD_{1,n}) \times P_{2,n,h}^{\phi_n}(dD_{2,n} \mid D_{1,n}, U)
  .
\end{align*}
The law $P_{n,h}^{\phi_n}$ reflects that the adaptive structure
induces dependence between the first and second wave of the experiment.

In this appendix, we use the superscript $\phi_n$ on probabilities and expectations in the finite-sample experiment
when those quantities depend on the adaptive design rule.
We omit it when $d \in \Delta$ is fixed or the quantity does not otherwise depend on the design rule.
Proofs of supporting lemmas are deferred to Appendix \ref{appendix:auxiliary}.

\begin{lemma}
  \label{lemma:lindeberg}
  Under Assumptions \ref{assumption:largewaves}, \ref{assumption:dqm}, and \ref{assumption:potentialoutcomes},
  the score satisfies the Lindeberg condition
  \begin{align*}
    \frac{1}{n} \sum_{i \in \mathcal{I}_{1,n}}
    \E_0\left[
      \|\psi_i(\delta_1)\|^2
      \1[\|\psi_i(\delta_1)\| > \varepsilon \sqrt{n}]
    \right]
    &\to 0
    \\
    \frac{1}{n} \sum_{i \in \mathcal{I}_{2,n}}
    \E_0\left[
      \|\psi_i(d)\|^2
      \1[\|\psi_i(d)\| > \varepsilon \sqrt{n}]
    \right]
    &\to 0
  \end{align*}
  for every $d \in \Delta$ and $\varepsilon > 0$.
\end{lemma}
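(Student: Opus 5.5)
The plan is to reduce the Lindeberg condition to one uniform tail bound on the score, one that holds across all $(z,x)$, and to get that bound from the uniformity in Assumption~\ref{assumption:dqm}. By Assumption~\ref{assumption:potentialoutcomes}, conditional on $z_i = z(x_i,\nu_i;d)$ the outcome $y_i$ has density $p_{y\mid z,x}(\cdot\mid z_i,x_i;\theta_0)$ under $\theta_0$. Conditioning on $\nu_i$ then gives
\begin{align*}
  \E_0\left[\|\psi_i(d)\|^2 \1[\|\psi_i(d)\|>\varepsilon\sqrt n]\right]
  \leq \sup_{z,x} T(z,x;\varepsilon\sqrt n),
  \quad
  T(z,x;M) = \int \|\psi(y\mid z,x)\|^2 \1[\|\psi(y\mid z,x)\|>M]\, p_{y\mid z,x}(y\mid z,x;\theta_0)\,\mu(dy).
\end{align*}
Since $n_1/n\le 1$ and $n_2/n \le 1$ (Assumption~\ref{assumption:largewaves} only fixes the limits), both displayed sums are bounded by $\sup_{z,x} T(z,x;\varepsilon\sqrt n)$. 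The same bound holds for every $d$ and for $\delta_1$. It therefore suffices to show $\sup_{z,x} T(z,x;M)\to 0$ as $M\to\infty$, i.e. that $\{\|\psi\|^2\}$ is uniformly integrable over $(z,x)$. Because $\|\psi\|^2 \le \ell\max_k \psi_k^2$ and $\{\|\psi\|>M\}\subseteq\bigcup_k\{|\psi_k|>M/\sqrt\ell\}$, it is enough to treat the scalar $u'\psi$ for each unit vector $u$ among the coordinate directions.

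Fix $u$ and put $s=\sqrt{p_{\theta_0}}$ and $s_t=\sqrt{p_{\theta_0+tu}}$, where the dependence on $(z,x)$ is suppressed. Let $g_t=(s_t-s)/t$ and $r_t = \tfrac12 u'\psi\, s - g_t$. Assumption~\ref{assumption:dqm} with $h=tu$ gives $\sup_{z,x}\|r_t\|_{L^2(\mu)}\to0$ as $t\to0$.

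The first consequence is a uniform second-moment bound. Since $\|g_t\|_{L^2}\le 2/t$, fixing some $t_0$ with $\sup\|r_{t_0}\|\le1$ gives $\sup_{z,x}\int (u'\psi)^2 p \le 4(2/t_0+1)^2$. Chebyshev then yields $\sup_{z,x} P_{\theta_0}(A_M)\le C/M^2$, where $A_M=\{|u'\psi|>M\}$.

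Next, given $\eta>0$, fix $t$ small enough that $\sup\|r_t\|\le\eta$. Then
\begin{align*}
  \tfrac14 T_u(z,x;M) = \int \left(\tfrac12 u'\psi\, s\right)^2 \1_{A_M} \;\le\; 2\int g_t^2 \1_{A_M} + 2\eta^2,
\end{align*}
and it remains to show $\sup_{z,x}\int g_t^2\1_{A_M}\to0$ as $M\to\infty$ for this fixed $t$. Split $A_M$ according to whether $s_t\le L s$.
\begin{itemize}
  \item On $\{s_t\le Ls\}$ we have $g_t^2\le (L+1)^2 s^2/t^2$. This piece is therefore at most $(L+1)^2 C/(t^2M^2)$, which vanishes as $M\to\infty$ for fixed $t$ and $L$.
  \item On $\{s_t>Ls\}\cap A_M$, the approach is to absorb the piece into the remainder $r_t$. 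There $g_t \ge (1-1/L)s_t/t$, which is large relative to $s/t$. If in addition $|u'\psi|\le 2L/t$, then $\tfrac12 u'\psi\,s \le s_t/t$ is comparable to $g_t$. I have not yet located the trade-off that makes this piece small, so this step is still open.
\end{itemize}

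The main obstacle is the second piece, the region where $p_{\theta_0+tu}$ dominates $p_{\theta_0}$. On that region Hellinger-type bounds give only $O(t^2)$ mass, which is not small after dividing by $t^2$. To get a vanishing bound, the first attempt will be to run the argument along a sequence $t_k\to0$ with $\eta_k=\sup\|r_{t_k}\|\to0$ and to choose $L$ and $M$ jointly with $t$. The aim is to show that the mass of $g_t^2$ on $\{s_t>Ls\}$ is controlled by $\|r_t\|^2$ plus $\int (u'\psi)^2 s^2\1[|u'\psi|>2L/t]$. The latter is again a tail term, now at the larger threshold $2L/t$. If this circularity cannot be closed directly, the fallback is the classical device used to prove LAN for triangular arrays (as in the proof of Lemma~7.6 / Theorem~7.2 of \textcite{vandervaartAsymptoticStatistics2000}), applied uniformly in $(z,x)$: compare $\sum_i$ of the squared DQM remainders at $h=\varepsilon u\sqrt n\cdot n^{-1/2}$-type perturbations and read the Lindeberg tail off directly from uniform convergence of the Hellinger expansion.
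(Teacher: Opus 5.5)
\textbf{There is a genuine gap: the key step of your proof is left open.} Your reduction matches the paper's. You bound both sums by $\sup_{z,x}T(z,x;\varepsilon\sqrt n)$ and then prove uniform-in-$(z,x)$ integrability of $(u'\psi)^2$ one coordinate direction at a time. Your uniform second-moment bound and the first piece of your split are also correct. The piece you label open is the whole difficulty, not a technicality. A large positive $u'\psi$ is exactly the case in which $\sqrt{p_{\theta_0+tu}}$ can sit far above $\sqrt{p_{\theta_0}}$. So with the same-sign perturbation $h=tu$, nothing forces the remainder $r_t$ to be large on $\{s_t>Ls\}\cap A_M$: there $s_t$ can simply track $s+\tfrac t2\,u'\psi\,s$, and your split pushes the tail into the one region you cannot control.

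Your fallback does not close this either. The i.i.d.\ argument behind Theorem~7.2 of van der Vaart (2000) uses only a finite second moment of a single score, so it gives no tail bound uniform in $(z,x)$. Also, a perturbation of size $\varepsilon u$ is a fixed $h$, about which Assumption~\ref{assumption:dqm} (an $o(\|h\|^2)$ statement as $h\to 0$) says nothing.

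\textbf{The missing idea is to perturb in the direction opposite to the tail and use nonnegativity of the density.} Let $\omega(h)=\sqrt{p_{\theta_0+h}}-\sqrt{p_{\theta_0}}-\tfrac12 h'\psi\sqrt{p_{\theta_0}}$. On the event $\{u'\psi>4/t\}$,
\[
\omega(-tu)=\sqrt{p_{\theta_0-tu}}+(\tfrac t2\,u'\psi-1)\sqrt{p_{\theta_0}}\ \ge\ (\tfrac t2\,u'\psi-1)\sqrt{p_{\theta_0}}\ \ge\ \tfrac t4\,u'\psi\,\sqrt{p_{\theta_0}}\ \ge\ 0 .
\]
Squaring and integrating gives $\int (u'\psi)^2\,\mathbf{1}\{u'\psi>4/t\}\,p_{\theta_0}\,d\mu\le 16\,t^{-2}\|\omega(-tu)\|_{L^2(\mu)}^2$. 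Symmetrically, $\omega(tu)\ge \tfrac t4\,|u'\psi|\sqrt{p_{\theta_0}}$ on $\{u'\psi<-4/t\}$, because $\sqrt{p_{\theta_0+tu}}\ge 0$.

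The uniform DQM in Assumption~\ref{assumption:dqm} gives $\sup_{z,x}\|\omega(\pm tu)\|^2=o(t^2)$. Hence
\[
\sup_{z,x}T_u(z,x;4/t)\le 16\,t^{-2}\sup_{z,x}\bigl(\|\omega(tu)\|^2+\|\omega(-tu)\|^2\bigr)\to 0 \quad\text{as } t\to 0,
\]
which is exactly the uniform integrability you need. This is the paper's proof. The point is that in the opposite direction the density cannot drop below zero, so a large score forces a large DQM remainder. It makes your Chebyshev step, the $g_t$/$r_t$ decomposition, and the $L$-split unnecessary.
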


The following result extends the standard log-likelihood expansion
to our setting which conditions on covariates and uses the
uniform DQM in Assumption \ref{assumption:dqm}.

\begin{lemma}
    \label{lemma:loglik}
    Suppose Assumptions \ref{assumption:largewaves},
    \ref{assumption:dqm},
    \ref{assumption:potentialoutcomes},
    and \ref{assumption:information}
    hold.
    Then for any fixed $d \in \Delta$ and any $h$,
    \begin{align*}
      \Lambda_{2,n}^h(d)
        =
        h' A_{2,n}(d) - \frac{1}{2} h' \rho J_2(d) h + o_{P_0}(1)
        .
    \end{align*}
\end{lemma}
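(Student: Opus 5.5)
The plan is to adapt the standard proof of local asymptotic normality under DQM (Theorem 7.2 of \textcite{vandervaartAsymptoticStatistics2000}) to independent but non-identically distributed observations. Since $d$ is fixed, the second-wave observations $(y_i,z_i)$, $i\in\mathcal{I}_{2,n}$, are independent across $i$ by Assumption \ref{assumption:potentialoutcomes}. Their laws differ only through the fixed covariates $x_i$.

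Write $h_n = h/\sqrt{n}$. Define
\begin{align*}
W_{ni} = 2\left(\sqrt{p_{y\mid z,x}(y_i\mid z_i,x_i;\theta_0+h_n)/p_{y\mid z,x}(y_i\mid z_i,x_i;\theta_0)}-1\right),
\end{align*}
so that $\Lambda_{2,n}^h(d)=\sum_{i\in\mathcal{I}_{2,n}} 2\log(1+W_{ni}/2)$.

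First, I would show that $\sum_i W_{ni}$ has mean close to $-\frac{1}{4}h'\rho J_2(d)h$ and is close to $h'A_{2,n}(d)$. Condition on $(z_i,x_i)$. The conditional mean of $W_{ni}$ equals minus the squared Hellinger distance between the conditional densities at $\theta_0+h_n$ and $\theta_0$ (up to possible mass singular to $p_{\theta_0}$, which only helps). By the uniform DQM in Assumption \ref{assumption:dqm}, this is $-\frac14 h_n'\E[\psi\psi'\mid z_i,x_i]h_n + o(\|h\|^2/n)$, uniformly in $(z,x)$. Summing over $i\in\mathcal{I}_{2,n}$ and using Assumption \ref{assumption:information} with $d_1=d_2=d$ gives $\E_0\sum_i W_{ni}\to -\frac14 h'\rho J_2(d)h$. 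The uniformity over $(z,x)$ in the DQM is what makes the $n_2$ remainder terms sum to $o(1)$; this is where the $\sup_{z,x}$ is used. Similarly, DQM gives
\begin{align*}
\mathrm{Var}_0\Big(\sum_i W_{ni} - h_n'\sum_i\psi_i(d)\Big)\le \sum_i \E_0\big(W_{ni}-h_n'\psi_i(d)\big)^2 = n_2\cdot o(1/n)\to0 .
\end{align*}
Hence $\sum_i W_{ni} = h'A_{2,n}(d) - \frac14 h'\rho J_2(d)h + o_{P_0}(1)$. This step uses that $\psi$ has conditional mean zero, which follows from DQM.

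Next, I would expand the logarithm: $2\log(1+x/2) = x - x^2/4 + x^2R(x)$ with $R(x)\to0$ as $x\to0$. This gives $\Lambda^h_{2,n}(d)=\sum W_{ni}-\frac14\sum W_{ni}^2+\sum W_{ni}^2R(W_{ni})$. I would then show $\sum_i W_{ni}^2 = \sum_i (h_n'\psi_i(d))^2 + o_{P_0}(1)$ from the $L^2$ bound above and Cauchy--Schwarz. I would also show $\sum_i (h_n'\psi_i(d))^2 \to h'\rho J_2(d)h$ in probability, via a weak law of large numbers for triangular arrays. The inputs here are Assumption \ref{assumption:information} for the mean and the Lindeberg condition of Lemma \ref{lemma:lindeberg} to truncate and control variance. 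Combining the pieces gives $\sum W_{ni}-\frac14\sum W_{ni}^2 = h'A_{2,n}(d)-\frac12 h'\rho J_2(d)h+o_{P_0}(1)$.

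The remaining remainder term is the main obstacle in the non-i.i.d.\ setting. It suffices to show $\max_{i\in\mathcal{I}_{2,n}}|W_{ni}|\to0$ in probability, since then $|\sum W_{ni}^2R(W_{ni})|\le \max_i|R(W_{ni})|\sum W_{ni}^2 = o_P(1)\cdot O_P(1)$. I would bound
\begin{align*}
P_0(\max_i|W_{ni}|>\varepsilon)\le \sum_i P_0(|h_n'\psi_i|>\varepsilon/2)+\sum_iP_0(|W_{ni}-h_n'\psi_i|>\varepsilon/2).
\end{align*}
The second sum is at most $4\varepsilon^{-2}\sum_i\E_0(W_{ni}-h_n'\psi_i)^2\to0$. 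The first is at most $4\varepsilon^{-2}n^{-1}\sum_i\E_0\big[(h'\psi_i)^2\1[|h'\psi_i|>\varepsilon\sqrt n/2]\big]$, which tends to zero exactly by Lemma \ref{lemma:lindeberg}.
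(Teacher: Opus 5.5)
Your proposal is correct and follows essentially the same route as the paper. Both use the variable $2(\sqrt{p_n/p}-1)$ with the expansion $2\log(1+x/2)=x-x^2/4+o(x^2)$, and both control its sum by a mean/variance argument based on the uniform DQM of Assumption \ref{assumption:dqm} and Assumption \ref{assumption:information}; both then compare its sum of squares in $L^2$ with $\frac{1}{n}\sum_i (h'\psi_i(d))^2$ and apply a Lindeberg weak law, and finally use a union bound with Lemma \ref{lemma:lindeberg} to show the maximal term vanishes, which is the same non-i.i.d.\ adaptation of van der Vaart's Theorem 7.2 that the paper uses.
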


\begin{proposition}
  \label{proposition:lan}
  Suppose \Crefrange{assumption:largewaves}{assumption:information} hold.
  Then every subsequence has a further subsequence along which
  $(A_{1,n}, A_{2,n}(\cdot)) \overset{0}{\leadsto} (A_{1,\infty}^0, A_{2,\infty}^0(\cdot))$
  where $A_{1,\infty}^0 \sim N(0, (1-\rho) J_1)$
  and $A_{2,\infty}^0$ is a Gaussian process with mean zero and covariance kernel $\rho J_2(\cdot, \cdot)$,
  independent of $A_{1,\infty}^0$ and $\delta_{2,\infty}^0$.
  Along this subsequence,
  \begin{align*}
    \log \frac{dP_{n,h}^{\phi_n}}{dP_{n,0}^{\phi_n}}
    \overset{0}{\leadsto}
    h' A_{1,\infty}^0
    - \frac{1}{2} h' (1-\rho) J_1 h
    + h' A_{2,\infty}^0(\delta_{2,\infty}^0)
    - \frac{1}{2} h' \rho J_2(\delta_{2,\infty}^0) h
    =: \Lambda_h
  \end{align*}
  for any $h$.
  Moreover, $P_{n,h}^{\phi_n}$ is contiguous to $P_{n,0}^{\phi_n}$ along the full sequence.
\end{proposition}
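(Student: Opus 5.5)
We aim to establish joint weak convergence of the first-wave score, the second-wave score process, and the design $\delta_{2,n}=\phi_n(D_{1,n},U)$ under $P_{n,0}^{\phi_n}$. We then substitute the random design into the pointwise log-likelihood expansion of Lemma~\ref{lemma:loglik}, using the stochastic equicontinuity in Assumption~\ref{assumption:stocheq}. Contiguity then follows from Le Cam's first lemma.

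\textbf{Step 1 (marginal limits).} Under $\theta_0$, the second-wave data built from $(\nu_i,\epsilon_i)_{i\in\mathcal{I}_{2,n}}$ through the potential-outcomes maps of Assumption~\ref{assumption:potentialoutcomes} are independent of $(D_{1,n},U)$. This holds because the score process $A_{2,n}(\cdot)$ is defined for every fixed $d$ using fresh randomness; the adaptive design only selects which index $d$ is realized. The finite-dimensional marginals of $(A_{1,n},A_{2,n}(d_1),\dots,A_{2,n}(d_m))$ are sums of independent mean-zero vectors, since the DQM score has mean zero. By Lemma~\ref{lemma:lindeberg} and Assumption~\ref{assumption:information}, the Lindeberg--Feller CLT with Cram\'er--Wold gives Gaussian limits with covariances $(1-\rho)J_1$ and $\rho J_2(d_j,d_k)$, with the two waves independent. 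Stochastic equicontinuity on the compact set $\Delta$ upgrades this to weak convergence of $A_{2,n}(\cdot)$ in $\ell^\infty(\Delta)$ to a tight Gaussian process $A^0_{2,\infty}$. The design $\delta_{2,n}$ lies in the compact set $\Delta$, so it is tight. Prohorov's theorem therefore yields, along every subsequence, a further subsequence on which $(A_{1,n},\delta_{2,n},A_{2,n}(\cdot))$ converges jointly. Independence of $A_{2,n}(\cdot)$ from $(A_{1,n},\delta_{2,n})$ for each $n$ passes to the limit. Hence $A^0_{2,\infty}$ is independent of $(A^0_{1,\infty},\delta^0_{2,\infty})$.

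\textbf{Step 2 (likelihood ratio).} The likelihood ratio factorizes as $\log dP^{\phi_n}_{n,h}/dP^{\phi_n}_{n,0}=\Lambda^h_{1,n}+\Lambda^h_{2,n}(\delta_{2,n})$. The $U$ factor cancels, and the design density $p_{z\mid x}$ cancels because it does not depend on $\theta$. For the first term, Lemma~\ref{lemma:loglik} applies directly with the fixed design $\delta_1$. The second term is the main obstacle, because it is evaluated at a random, data-dependent index. We handle it as follows. Define the remainder process $\Lambda^h_{2,n}(d)-h'A_{2,n}(d)+\tfrac12 h'\rho J_2(d)h$. This process is $o_{P_0}(1)$ pointwise by Lemma~\ref{lemma:loglik}. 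It is stochastically equicontinuous because both $\Lambda^h_{2,n}$ and $A_{2,n}$ are, and $J_2(d)$ is continuous. Continuity of $J_2$ follows from the covariance kernel being the limit of equicontinuous processes; if needed, we derive it from uniform $L^2$ equicontinuity. A finite $\eta$-net argument on compact $\Delta$ then makes the remainder $o_{P_0}(1)$ uniformly in $d$. In particular, it is negligible at $d=\delta_{2,n}$.

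\textbf{Step 3 (conclusion).} By the extended continuous mapping theorem, the map $(a_1,\delta,a_2(\cdot))\mapsto h'a_1+h'a_2(\delta)-\tfrac12 h'[(1-\rho)J_1+\rho J_2(\delta)]h$ is continuous on $\mathbb{R}^\ell\times\Delta\times C(\Delta)$, and the limit process has continuous paths. Applying this map gives $\Lambda_h$ along the subsequence. For contiguity, condition on $(A^0_{1,\infty},\delta^0_{2,\infty})$. The conditional law of $\Lambda_h$ is then $N(-\tfrac12\sigma^2,\sigma^2)$ with $\sigma^2=h'J(\delta^0_{2,\infty})h$, so $E[e^{\Lambda_h}]=1$. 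Le Cam's first lemma then gives $P^{\phi_n}_{n,h}\triangleleft P^{\phi_n}_{n,0}$ along each subsequence. Because every subsequence has such a further subsequence, contiguity holds along the full sequence.
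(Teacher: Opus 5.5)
Your proposal follows the paper's proof essentially step for step. The shared steps are: a finite-dimensional Lindeberg CLT plus stochastic equicontinuity for the score process; joint subsequential convergence with $\delta_{2,n}$ by compactness of $\Delta$; upgrading Lemma~\ref{lemma:loglik} to a uniform-in-$d$ expansion via Assumption~\ref{assumption:stocheq} so the random index can be plugged in; the extended continuous mapping theorem; and Le Cam's first lemma applied subsequence-wise. Your Step~1 observation is a useful addition. You note that $A_{2,n}(\cdot)$ is a function of the second-wave primitives $(\nu_i,\epsilon_i)_{i\in\mathcal{I}_{2,n}}$, which are independent of $(D_{1,n},U)$. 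This justifies the independence of $A^0_{2,\infty}(\cdot)$ from $\delta^0_{2,\infty}$ more explicitly than the paper, which asserts that independence and then uses it in its contiguity step.

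One claim in Step~3 is wrong as written. Conditional on $(A^0_{1,\infty},\delta^0_{2,\infty})$, the first-wave term is fixed. So $\Lambda_h$ is conditionally $N\bigl(h'A^0_{1,\infty}-\tfrac12 h'J(\delta^0_{2,\infty})h,\ \rho\,h'J_2(\delta^0_{2,\infty})h\bigr)$, not $N(-\tfrac12\sigma^2,\sigma^2)$ with $\sigma^2=h'J(\delta^0_{2,\infty})h$. Nor is $\Lambda_h$ Gaussian unconditionally, because the design may depend on $A^0_{1,\infty}$. The fix is the paper's two-stage iterated expectation. Using the independence you established, the inner step gives $\E_0[e^{\Lambda_h}\mid A^0_{1,\infty},\delta^0_{2,\infty}]=\exp\bigl(h'A^0_{1,\infty}-\tfrac12(1-\rho)h'J_1h\bigr)$. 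Its expectation is $1$ because $A^0_{1,\infty}\sim N(0,(1-\rho)J_1)$.
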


\begin{proof}
  Assumptions \ref{assumption:largewaves}, \ref{assumption:information},
  Lemma \ref{lemma:lindeberg} and the Lindeberg-Feller CLT implies
  finite-dimensional convergence of the score process.
  Assumption \ref{assumption:stocheq} then implies that
  \begin{align*}
    (A_{1,n}, A_{2,n}(\cdot)) \overset{0}{\leadsto} (A_{1,\infty}^0, A_{2, \infty}^0(\cdot))
  \end{align*}
  where $A_{1,\infty}^0 \sim N(0, (1-\rho) J_1)$ and
  $A_{2,\infty}^0(\cdot)$ is a Gaussian process with mean zero and covariance kernel $\rho J_2(\cdot, \cdot)$,
  independent of $A_{1,\infty}^0$.
  In particular, the covariance kernel of $A_{2,\infty}^0(\cdot)$ must be continuous.
  Since $\Delta$ is compact,
  the random elements $(A_{1,n}, A_{2,n}(\cdot), \delta_{2,n})$ are jointly uniformly tight
  and therefore converge in distribution along a subsequence to some limit 
  $(A_{1,\infty}^0, A_{2,\infty}^0(\cdot), \delta_{2,\infty}^0)$.

  The likelihood ratio may be written as
  \begin{align*}
    \frac{P_{n,h}^{\phi_n}}{P_{n,0}^{\phi_n}}(dU, dD_{1,n}, dD_{2,n}) 
    &= \frac{P_{1,n,h}}{P_{1,n,0}}(dD_{1,n}) 
    \times \frac{P_{2,n,h}^{\phi_n}}{P_{2,n,0}^{\phi_n}}(dD_{2,n} \mid D_{1,n}, U)
    ,
  \end{align*}
  where $h$-independent terms cancel in the likelihood ratio. In particular,
  \begin{align*}
    \frac{P_{2,n,h}^{\phi_n}}{P_{2,n,0}^{\phi_n}}(dD_{2,n} \mid D_{1,n}, U)
    &= \prod_{i \in \mathcal{I}_{2,n}} \frac{p_{y \mid z,x }(y_i(\phi_n(D_{1,n}, U)) \mid z_i(\phi_n(D_{1,n}, U)), x_i; \theta_0 + h/\sqrt{n})}{
      p_{y \mid z,x }(y_i(\phi_n(D_{1,n}, U)) \mid z_i(\phi_n(D_{1,n}, U)), x_i; \theta_0)
    }
    .
  \end{align*}
  By Lemma \ref{lemma:loglik}, we have that
  \begin{align*}
    \sum_{i \in \mathcal{I}_{2,n}} \log\frac{p_{y \mid z,x }(y_i(d) \mid z_i(d), x_i; \theta_0 + h/\sqrt{n})}{p_{y \mid z,x }(y_i(d) \mid z_i(d), x_i; \theta_0)}
    &= h' A_{2,n}(d) - \frac{1}{2} h' \rho J_2(d) h + o_{P_0}(1)
  \end{align*}
  pointwise in $d$.
  By Assumption \ref{assumption:stocheq} and continuity of $J_2(\cdot)$, both the left and right-hand sides
  are stochastically equicontinuous in $d$ and therefore the convergence is uniform in $d$.
  A similar expansion holds for the first wave of the experiment, where $\delta_1$ is fixed.

  Since $J_2(\cdot, \cdot)$ is continuous and $A_{2,\infty}(\cdot)$ has almost surely continuous sample paths,
  we can apply the extended continuous mapping theorem to conclude that
  \begin{align*}
    \log \frac{dP_{n,h}^{\phi_n}}{dP_{n,0}^{\phi_n}}
    \overset{0}{\leadsto}
    h' A_{1,\infty}^0 - \frac{1}{2} h' (1-\rho) J_1 h
    + h' A_{2,\infty}^0(\delta_{2,\infty}^0) - \frac{1}{2} h' \rho J_2(\delta_{2,\infty}^0) h
  \end{align*}
  along this subsequence.

  To verify contiguity along this subsequence,
  it suffices to show that $\E_0[\exp(\Lambda_h)] = 1$
  by Le Cam's first lemma (Theorem 6.4 of \textcite{vandervaartAsymptoticStatistics2000}).
  By the law of iterated expectations,
  \begin{align*}
    \E_0[\exp(\Lambda_h)]
    &=
    \E_0\left[
      \exp\left(h' A_{1,\infty}^0 - \frac{1}{2} h' (1-\rho) J_1 h \right)
    \right.
    \\ &\quad
    \left.
    \times \E_0\left[ 
      \exp\left(h' A_{2,\infty}^0(\delta_{2,\infty}^0) - \frac{1}{2} h' \rho J_2(\delta_{2,\infty}^0) h\right)
      \mid A_{1,\infty}^0, \delta_{2,\infty}^0
    \right]
    \right]
    .
  \end{align*}
  Since the process $A_{2,\infty}^0(\cdot)$ is independent of $A_{1,\infty}^0$ and $\delta_{2,\infty}^0$,
  the distribution of the randomly evaluated process
  $A_{2,\infty}^0(\delta_{2,\infty}^0)$  conditional on $\delta_{2,\infty}^0$ is $N(0, \rho J_2(\delta_{2,\infty}^0))$.
  Thus, the inner conditional expectation is $1$ almost surely by properties of log-normal random variables,
  and likewise the outer expectation.
  This establishes contiguity along the subsequence, and since the subsequence was arbitrary, contiguity holds along the full sequence.
\end{proof}

\begin{proof}[Proof of Theorem \ref{theorem:art}]
  By Proposition \ref{proposition:lan}, $P_{n,h}^{\phi_n}$ is contiguous to $P_{n,0}^{\phi_n}$. 
  Since $(\delta_{2n}, c_n)$ is tight under $h=0$ by assumption,
  Proposition \ref{proposition:lan} implies that
  $(\delta_{2n}, c_n, A_{1,n}, A_{2,n}(\cdot), \\ \log (dP_{n,h}^{\phi_n}/dP_{n,0}^{\phi_n}))$
  is jointly tight under $h=0$.
  Thus, there exists a subsequence along which 
  \begin{align}
    \label{eq:jointconvergence}
    \left(
      \delta_{2,n}, c_n, 
      A_{1,n}, A_{2,n}(\cdot),
      \log\frac{dP_{n,h}^{\phi_n}}{dP_{n,0}^{\phi_n}}
    \right)
    \overset{0}{\leadsto}
    \left(
      \delta_{2,\infty}^0, c_\infty^0, A_{1,\infty}^0, A_{2,\infty}^0, \Lambda_h
    \right)
  \end{align}
  for every $h$.
  The log-likelihood ratio in (\ref{eq:jointconvergence}) is between
  two fixed sequences of probability measures that are contiguous.
  Thus, we can apply Le Cam's third lemma
  (\textcite{vandervaartAsymptoticStatistics2000} Theorem 6.6)
  to conclude that the function
  \begin{align*}
    L_h(E) = \E_0\left[ \1[(\delta_{2,\infty}^0, c_\infty^0) \in E] \exp(\Lambda_h) \right]
  \end{align*}
  defines a probability measure and
  $(\delta_{2,n}, c_n) \overset{h}{\leadsto} L_h$
  along the subsequence for which the convergence in (\ref{eq:jointconvergence}) holds.
  Moreover, since the marginal sequence $(\delta_{2,n}, c_n)$
  converges in distribution under $h$ along the full sequence,
  $L_h$ must be the limiting distribution of $(\delta_{2,n}, c_n)$ under $h$ as well.

  Having derived the limiting distribution of $(\delta_{2,n}, c_n)$ under every $h$,
  we now construct a statistic in a Gaussian environment
  which has this distribution for every $h$.
  The construction is similar to the proof of Theorem 3 in
  \textcite{hiranoAsymptoticRepresentationsSequential2025}
  and uses conditional vector quantile functions
  (\cite{carlierVectorQuantileRegression2016}).

  Let $A_1$ be a Gaussian random vector with mean $(1-\rho) J_1 h$ and covariance matrix $(1-\rho) J_1$,
  $A_2(\cdot)$ be a Gaussian process independent of $A_1$ 
  with mean $\rho J_2(\cdot) h$ and covariance function $\rho J_2(\cdot, \cdot)$,
  and $U$ be a uniform random variable on $[0,1]$ independent of $(A_1, A_2(\cdot))$.
  Let $(\delta_{2,\infty}^h, c_\infty^h)$ be the weak limit of $(\delta_{2,n}, c_n)$ under $h$,
  which exist by assumption.
  Our goal is to construct statistics $(\delta_2,  c)$
  where
  $\delta_2 = \phi(A_1, U)$ and
  $c =  C(A_1, A_2(\phi(A_1, U)), U)$
  such that for any $h$,
  $(\delta_2, c) \overset{h}{\sim} (\delta_{2,\infty}^h,  c_\infty^h)$.
  Let $Q_{\delta}(u \mid a_1)$ be the conditional vector quantile
  function of $\delta_{2,\infty}^0$ given $A_{1,\infty}^0 = a_1$.
  Let $Q_{c}(u \mid a_1, d, a_2)$
  be the conditional vector quantile function of $c_\infty^0$
  given $A_{1,\infty}^0 = a_1$, $\delta_{2\infty}^0 = d$, and $A_{2,\infty}^0(d) = a_2$.
  Note that $(A_{1,\infty}^0, A_{2,\infty}^0(\cdot), U)$
  has the same distribution as $(A_1, A_2(\cdot), U)$
  when $h=0$.

  From $U$, construct uniform random vectors $U_1$ and $U_2$
  on the unit cube that are
  independent of each other and of $(A_1, A_2(\cdot))$.
  Define
  \begin{align*}
      \delta_2 = Q_{\delta}(U_1 \mid A_1)
      &&
      c = Q_{c}(U_2 \mid A_1, \delta_2, A_2(\delta_2))
      .
  \end{align*}
  By construction,
  $(\delta_2,  c) \overset{0}{\sim} (\delta_{2,\infty}^0,  c_\infty^0)$
  and therefore
  $(\delta_{2,n},  c_n) \overset{0}{\leadsto} (\delta_2,  c)$.

  We now verify that the statistic $(\delta_2,  c)$ has 
  the desired distribution under $h \neq 0$ as well.
  For some values of $d$ in $\Delta$,
  $J_2(d)$ may be singular, and therefore $A_2(d)$
  may not admit a density with respect to Lebesgue measure on $\R^\ell$.
  However, the likelihood ratio of $A_2(d)$ under $h$ with respect to $h=0$ is well-defined
  on the support of $A_2(d)$, which is $\text{range}(J_2(d))$.
  This ratio is given by
  \begin{align*}
    \frac{dN(\rho J_2(d) h, \rho J_2(d))}{dN(0, \rho J_2(d))}(a)
    = \exp\left(
      h' a - \frac{1}{2} h' \rho J_2(d) h
    \right).
  \end{align*}

  Let $E_1, E_2$ be Borel sets in $\Delta$ and $\R^k$ respectively.
  Then
  \begin{align*}
      P_h[\delta_2 \in E_1, c \in E_2]
      &= \E_h [ \E_h[ \1[\delta_2 \in E_1, c \in E_2] \mid U_1, U_2] ]
      \\
      &= \E_0[ \E_h[ \1[\delta_2 \in E_1, c \in E_2] \mid U_1, U_2] ]
  \end{align*}
  by the law of iterated expectations and the fact
  that the distribution of $U_1$ and $U_2$ does not depend on $h$.
  Using the likelihood ratio of $A_1$ and $A_2(d)$ under $h$ with respect to $h=0$,
  this is equal to 
  \begin{align*}
    &\E_0\bigg[
      \1\left[
        Q_{\delta}(U_1 \mid A_1) \in E_1,
        Q_{c}(U_2 \mid A_1, Q_{\delta}(U_1 \mid A_1), A_2(Q_{\delta}(U_1 \mid A_1))) \in E_2
      \right]
      \\ & \quad \times
      \exp\left(
        h' A_1 - \frac{1}{2} h' (1-\rho) J_1 h
        +
        h' A_2(Q_{\delta}(U_1 \mid A_1)) - \frac{1}{2} h' \rho J_2(Q_{\delta}(U_1 \mid A_1)) h
      \right)
    \bigg]
    \\
    &= \E_0\left[ 
      \1[ (\delta_{2,\infty}^0, c_\infty^0) \in E_1 \times E_2]
      \exp(\Lambda_h)
    \right]
    = L_h(E_1 \times E_2)
  \end{align*}
  and therefore $(\delta_{2,n}, c_n) \overset{h}{\leadsto} (\delta_2, c)$.
\end{proof}

\begin{corollary}
  \label{corollary:art}
  Suppose \Crefrange{assumption:largewaves}{assumption:information} hold.
  If $(\delta_{2,n}, c_n)$ is tight under $h=0$,
  then every subsequence has a further subsequence along which $(\delta_{2,n}, c_n) \overset{h}{\leadsto} (\delta_2, c)$ for any $h$
  and $(\delta_2, c)$ is a design/policy in the limit experiment.
\end{corollary}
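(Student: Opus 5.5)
The plan is to reduce the corollary to Theorem~\ref{theorem:art} through a diagonal subsequence argument. Tightness under $h=0$ is the only new input, and the theorem's proof already uses it. First, fix a subsequence. Since $(\delta_{2,n}, c_n)$ is tight under $h=0$ and Proposition~\ref{proposition:lan} gives tightness of $(A_{1,n}, A_{2,n}(\cdot))$ and of the log-likelihood ratios, Prohorov's theorem yields a further subsequence along which the joint convergence \eqref{eq:jointconvergence} holds under $h=0$. To obtain this simultaneously for all $h$, I would pass to the subsequence on which $(\delta_{2,n}, c_n, A_{1,n}, A_{2,n}(\cdot))$ converges jointly. The log-likelihood ratio for each $h$ then converges automatically to $\Lambda_h$, because Proposition~\ref{proposition:lan} represents it as a continuous functional of $(A_{1,\infty}^0, A_{2,\infty}^0(\cdot), \delta_{2,\infty}^0)$. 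Working with this representation avoids a separate diagonal argument over countably many $h$.

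Second, along this subsequence, contiguity (Proposition~\ref{proposition:lan}) and Le Cam's third lemma give $(\delta_{2,n}, c_n) \overset{h}{\leadsto} L_h$ for every $h$, where $L_h(E) = \E_0[\1[(\delta_{2,\infty}^0, c_\infty^0)\in E]\exp(\Lambda_h)]$. This is exactly the argument in the proof of Theorem~\ref{theorem:art}, with the standing hypothesis of convergence along the full sequence replaced by convergence along the subsequence. That hypothesis was used only to identify the limit $L_h$ with the full-sequence limit, a step the corollary does not need.

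Third, I would repeat the quantile construction from that proof. I take $A_1$, the Gaussian process $A_2(\cdot)$ and $U$ as specified there, and set $\delta_2 = Q_\delta(U_1\mid A_1)$ and $c = Q_c(U_2\mid A_1,\delta_2,A_2(\delta_2))$. These match $(\delta_{2,\infty}^0, c_\infty^0)$ in law under $h=0$. The likelihood-ratio computation, which remains valid when $J_2(d)$ is singular because the density is taken on $\mathrm{range}(J_2(d))$, then shows that their law under $h$ equals $L_h$. By construction $\delta_2$ depends only on $(A_1,U)$, and $c$ depends only on $(A_1, A_2(\delta_2), U)$. Moreover, since $A_2(\cdot)$ is independent of $(A_1,U)$, the conditional law of $A_2(\delta_2)$ given $(A_1,U)$ is $N(\rho J_2(\delta_2)h, \rho J_2(\delta_2))$. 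Hence $(\delta_2,c)$ is a design and policy in the limit experiment.

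The main obstacle is the first step, namely securing one subsequence that works for every $h$ at once. Tightness of the score process as an element of a function space on the compact set $\Delta$ requires Assumption~\ref{assumption:stocheq}. Obtaining $\Lambda_h$ for all $h$ from a single limit requires the uniform-in-$d$ expansion from the proof of Proposition~\ref{proposition:lan}, together with the extended continuous mapping theorem; this removes the need to diagonalize over $h$.
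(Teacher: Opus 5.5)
Your proposal is correct and follows the paper's proof. Both extract a subsequence along which $(\delta_{2,n}, c_n, A_{1,n}, A_{2,n}(\cdot))$ converges under $h=0$, use the uniform log-likelihood expansion to obtain \eqref{eq:jointconvergence} for every $h$ at once, and apply Le Cam's third lemma; the paper then invokes Theorem~\ref{theorem:art} along that subsequence, whereas you re-run its quantile construction directly, which is the same argument.
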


\begin{proof}
  Extract a subsequence along which $(\delta_{2n}, c_n, A_{1,n}, A_{2,n}(\cdot))$ converges weakly under $h=0$.
  Along this subsequence, the convergence in (\ref{eq:jointconvergence})
  holds for any $h$ by the argument in the proof of Theorem \ref{theorem:art},
  since this argument does not use the weak convergence of $(\delta_{2n}, c_n)$ under $h$.
  Then we can use Le Cam's third lemma to conclude that $(\delta_{2n}, c_n) \overset{h}{\leadsto} L_h$
  along this subsequence for any $h$.
  Hence, the conditions of Theorem \ref{theorem:art} hold along this subsequence and the conclusion follows.
\end{proof}

\section{Proof of Theorem~\ref{theorem:lowerbound}}
\label{appendix:lowerbound}
For the remainder of the proofs,
we suppress the superscript $\phi_n$ on $\E_h$ and $P_h$ indicating the dependence
of the data on the adaptive design rule.
Probabilities and expectations in finite samples use the law induced by the specified
adaptive rule $\phi_n$ generating the design $\delta_{2,n}$ given in the statements
of Theorems \ref{theorem:lowerbound} and \ref{theorem:optimality}.
Probabilities and expectations in the limit experiment are likewise taken
over the distribution of a random policy $c$ induced by the associated limiting design $\delta_2$.

\begin{lemma}
  \label{lemma:qplipschitz}
  Under Assumption \ref{assumption:nlp},
  $\Gamma(b) = \argmin_{c \in \mathcal{G}} r(c,b)$ exists and is unique for all $b \in \R^k$.
  Moreover,
  \begin{enumerate}
    \item For any $b_1, b_2 \in \R^k$, $\|\Gamma(b_1) - \Gamma(b_2)\| \leq \kappa^{-1} \|b_1 - b_2\|$
    \item $V$ is continuously differentiable with $\nabla V(b) = \Gamma(b)$
    \item For any $b_1, b_2 \in \R^k$ we have
      $0 \leq r(\Gamma(b_1), b_2) - r(\Gamma(b_2), b_2) \leq \kappa^{-1} \| b_1 - b_2 \|^2$
  \end{enumerate}
\end{lemma}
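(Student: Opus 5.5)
The plan is to treat the limit problem $\min_{c\in\mathcal{G}} r(c,b)$ as a strongly convex quadratic program over a closed convex cone. From it we extract the first-order variational inequality, and all three items then follow from elementary manipulations of that inequality and of the definition $V(b)=\min_{c\in\mathcal{G}} r(c,b)$. The one structural point to watch is that Assumption \ref{assumption:nlp}(4) gives strong convexity of $c\mapsto r(c,b)$ only on $\text{span}(\mathcal{G})$, not on all of $\R^k$. So every quadratic form we bound must be evaluated at a \emph{difference} of elements of $\mathcal{G}$, which lies in $\text{span}(\mathcal{G})$. I expect keeping track of this to be the only genuine subtlety; the rest is routine.

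\emph{Existence and uniqueness.} $\mathcal{G}$ is a finite intersection of closed half-spaces and hyperplanes through the origin, so it is a nonempty ($0\in\mathcal{G}$), closed, convex polyhedral cone. Restricted to the subspace $\text{span}(\mathcal{G})$, the quadratic $r(\cdot,b)$ satisfies $r(c,b)\ge \tfrac{\kappa}{2}\|c\|^2-\|b\|\|c\|$, so it is coercive there. A minimizing sequence in $\mathcal{G}$ is therefore bounded, and closedness gives a minimizer. If $c_1,c_2$ are both minimizers, then $c_1-c_2\in\text{span}(\mathcal{G})$. Strict convexity along the segment, which lies in $\mathcal{G}$ by convexity, gives $r(\tfrac{c_1+c_2}{2},b)<r(c_1,b)$ unless $c_1=c_2$, so $\Gamma(b)$ is unique.

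\emph{Item 1.} Write $\Gamma_i=\Gamma(b_i)$. The first-order optimality condition over the convex set $\mathcal{G}$ is
\begin{align*}
(b_i+\bar H\Gamma_i)'(c-\Gamma_i)\ge 0 \quad\text{for all } c\in\mathcal{G}.
\end{align*}
Take $c=\Gamma_2$ in the inequality for $i=1$ and $c=\Gamma_1$ in the inequality for $i=2$, then add the two. This gives
\begin{align*}
(b_1-b_2)'(\Gamma_2-\Gamma_1)\ge(\Gamma_1-\Gamma_2)'\bar H(\Gamma_1-\Gamma_2)\ge\kappa\|\Gamma_1-\Gamma_2\|^2,
\end{align*}
where the last step uses $\Gamma_1-\Gamma_2\in\text{span}(\mathcal{G})$. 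Cauchy--Schwarz then yields $\|\Gamma_1-\Gamma_2\|\le\kappa^{-1}\|b_1-b_2\|$.

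\emph{Items 2 and 3.} Since $r(c,b)$ is affine in $b$, we have $r(\Gamma_1,b_2)=V(b_1)+\Gamma_1'(b_2-b_1)$. Optimality of $\Gamma_2$ at $b_2$, and symmetrically of $\Gamma_1$ at $b_1$, gives the sandwich
\begin{align*}
(\Gamma_2-\Gamma_1)'(b_2-b_1)\le V(b_2)-V(b_1)-\Gamma_1'(b_2-b_1)\le 0.
\end{align*}
By Item 1, the left side is at least $-\kappa^{-1}\|b_2-b_1\|^2$. Hence $V(b_2)=V(b_1)+\Gamma_1'(b_2-b_1)+O(\|b_2-b_1\|^2)$, so $V$ is differentiable with $\nabla V=\Gamma$. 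The gradient is continuous because $\Gamma$ is Lipschitz, which proves Item 2.

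For Item 3, nonnegativity of $r(\Gamma_1,b_2)-r(\Gamma_2,b_2)$ is just optimality of $\Gamma_2$. For the upper bound, use the identity above to write
\begin{align*}
r(\Gamma_1,b_2)-r(\Gamma_2,b_2)=-\big[V(b_2)-V(b_1)-\Gamma_1'(b_2-b_1)\big].
\end{align*}
The left inequality of the sandwich bounds this by $(\Gamma_1-\Gamma_2)'(b_2-b_1)\le\|\Gamma_1-\Gamma_2\|\,\|b_1-b_2\|$. Applying Item 1 once more gives the bound $\kappa^{-1}\|b_1-b_2\|^2$.
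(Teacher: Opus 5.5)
Your proposal is correct and follows the paper's proof essentially step for step: the same summed variational inequalities give item 1, and the same identity (optimality of $\Gamma(b_2)$ plus affinity of $r$ in $b$) gives item 3. The differences are minor. Where the paper cites Danskin's theorem for item 2, you prove differentiability directly from the two-sided bound $(\Gamma_2-\Gamma_1)'(b_2-b_1)\le V(b_2)-V(b_1)-\Gamma_1'(b_2-b_1)\le 0$. You are also more explicit than the paper that strong convexity is only needed on differences of elements of $\mathcal{G}$.
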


For $\theta$ in a neighborhood of $\theta_0$, let
$\pi^*(\theta) \in \arg\max_{\pi \in \Pi} W(\pi,\theta)$
be an oracle policy that the decision maker would like to choose if $\theta$ were known.
We decompose regret into the smooth and oracle components
$R_n(c, h) = R_n^s(c, h) + R_n^o(h)$ where
\begin{align*}
  R_n^s(c, h) &= n [ W(\pi_0, \theta_0 + h/\sqrt{n}) - W(\pi_0 + c/\sqrt{n}, \theta_0 + h/\sqrt{n}) ] \\
  R_n^o(h) &= n [ W(\pi^*(\theta_0 + h/\sqrt{n}), \theta_0 + h/\sqrt{n}) - W(\pi_0, \theta_0 + h/\sqrt{n}) ].
\end{align*}

\begin{proposition}
  \label{proposition:epiconvergence}
  Suppose Assumption \ref{assumption:nlp} holds.
  If $m_n \to m$ and $c_n \to c$ are deterministic sequences such that $\pi_0 + c_n/\sqrt{n} \in \Pi$ for all $n$, then
  \begin{align*}
    \liminf_{n\to\infty} R_n^s(c_n, m_n) \geq \begin{cases}
      r(c, B m), & \text{if } c \in \mathcal{G}, \\
      \infty, & \text{otherwise.}
    \end{cases}
  \end{align*}
  Moreover, for every $c \in \mathcal{G}$ and $m \in \R^\ell$ there exists
  a deterministic sequence $c_n \to c$ such that $\pi_0 + c_n/\sqrt{n} \in \Pi$ for all $n$ and
  $R_n^s(c_n, m_n) \to r(c, B m)$ whenever $m_n \to m$.
\end{proposition}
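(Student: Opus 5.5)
The plan is a second-order Taylor expansion of $W$ in $\pi$ around $(\pi_0,\theta_0)$, with the Lagrangian used to bring in the curvature of the constraints. Write $\theta_n = \theta_0 + m_n/\sqrt{n}$ and $\pi_n = \pi_0 + c_n/\sqrt{n}$. By Assumption \ref{assumption:nlp}.1, $W$ is $C^2$ near $(\pi_0,\theta_0)$, so a mean-value expansion in $\pi$ gives
\begin{align*}
  R_n^s(c_n, m_n) = \sqrt{n}\, c_n' D + c_n' B m_n + \tfrac{1}{2} c_n' H c_n + o(1).
\end{align*}
Here $-\nabla_\pi W(\pi_0,\theta_n) = D + B m_n/\sqrt{n} + o(1/\sqrt{n})$, and the Hessian at intermediate points converges to $-H$. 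The key term is the $O(\sqrt n)$ piece $\sqrt{n}\,c_n'D$. We rewrite it using the KKT identity $D = -\nabla g(\pi_0)'\lambda_0$, so that $\sqrt{n}\,c_n'D = -\sum_j \lambda_{0j}\sqrt{n}\,\nabla g_j(\pi_0)'c_n$. Only active indices $j\in\mathcal{J}_s$ have $\lambda_{0j}\neq 0$.

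\emph{Step 1: the liminf bound.} For each active $j$, expand feasibility $g_j(\pi_n)\le 0$ with $g_j(\pi_0)=0$:
\begin{align*}
  0 \ge n\, g_j(\pi_n) = \sqrt{n}\,\nabla g_j(\pi_0)'c_n + \tfrac12 c_n'\nabla^2 g_j(\pi_0) c_n + o(1).
\end{align*}
Hence $-\lambda_{0j}\sqrt{n}\,\nabla g_j(\pi_0)'c_n \ge \tfrac12\lambda_{0j} c_n'\nabla^2 g_j(\pi_0)c_n + o(1)$. Summing over $j$ gives
\begin{align*}
  R_n^s \ge c_n'Bm_n + \tfrac12 c_n'\bar H c_n + \sum_{j\in\mathcal{J}_s}\lambda_{0j}\, n\,|g_j(\pi_n)| + o(1),
\end{align*}
with the right-hand side tending to $r(c,Bm)$ plus a nonnegative term. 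We then split into cases according to whether $c\in\mathcal{G}$.

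If $c\notin\mathcal{G}$, there are two possibilities. First, some active $j$ may have $\nabla g_j(\pi_0)'c>0$. Then $n g_j(\pi_n)\sim\sqrt n\,\nabla g_j(\pi_0)'c\to+\infty$, which contradicts feasibility for large $n$; so this cannot occur along a feasible sequence, and the liminf is vacuously $\infty$ by convention. Second, the violation may be $\nabla g_j(\pi_0)'c<0$ for some $j\in\mathcal{J}_s$. Then $\lambda_{0j}n|g_j(\pi_n)|\gtrsim \lambda_{0j}\sqrt n\,|\nabla g_j(\pi_0)'c|\to\infty$, while the other terms stay bounded because $c_n\to c$, so $R_n^s\to\infty$. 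If $c\in\mathcal{G}$, dropping the nonnegative term yields $\liminf R_n^s\ge r(c,Bm)$.

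\emph{Step 2: the recovery sequence.} This is the main obstacle, because the constraints are curved: $\pi_0+c/\sqrt n$ need not be feasible, and we need $\sqrt n\,\nabla g_j(\pi_0)'c_n\to -\tfrac12 c'\nabla^2 g_j(\pi_0)c$ for every $j\in\mathcal{J}_s$. I would set $c_n = c + d_n/\sqrt{n}$ with $d_n$ bounded, chosen by an implicit-function or Lyusternik-type argument using linear independence of the active gradients (Assumption \ref{assumption:nlp}.3). The requirements on $d_n$ are:
\begin{itemize}
  \item $g_j(\pi_n)=0$ exactly for $j\in\mathcal{J}_s$, and also for those $j\in\mathcal{J}_w$ with $\nabla g_j(\pi_0)'c=0$;
  \item strict inactivity, or a second-order margin, for the remaining constraints.
\end{itemize}
Concretely, the linear system $\nabla g_j(\pi_0)'d = -\tfrac12 c'\nabla^2 g_j(\pi_0)c - \eta$ (for a small slack $\eta\ge0$ on the weak ones) is solvable by linear independence, and an implicit function theorem correction of order $o(1)$ ensures exact feasibility. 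The weak-active $j$ with $\nabla g_j(\pi_0)'c<0$ are strictly satisfied for large $n$, and inactive constraints are satisfied by continuity.

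With this choice, the expansion in Step 1 holds with equality up to $o(1)$: the $\lambda_{0j}n\,g_j(\pi_n)$ terms vanish, and $d_n/\sqrt n$ contributes only $o(1)$ to the quadratic terms. Hence $R_n^s(c_n,m_n)\to r(c,Bm)$ for any $m_n\to m$. Since $c_n$ does not depend on $m_n$, the sequence works uniformly over $m_n\to m$. If exact equality proves awkward, a fallback is to aim for $\limsup \le r(c,Bm)+\epsilon$ and take a diagonal sequence in $\epsilon$.
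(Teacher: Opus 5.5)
Your proposal is correct and follows essentially the same route as the paper: the $\liminf$ bound uses the same Taylor expansion of $R_n^s$, the same KKT substitution $D=-\sum_{j\in\mathcal{J}_s}\lambda_{0j}\nabla g_j(\pi_0)$, and the same second-order expansion of the active constraints, and the recovery sequence $c+\tilde c_n/\sqrt{n}$ comes from a right inverse of the active gradient matrix. The only difference is in the recovery sequence: where you add an implicit-function correction to get exact equality, the paper solves the linear system with a vanishing slack $\nu_n$ on $\mathcal{J}_s$ (chosen to dominate the Taylor remainder) and slack $1$ on the weakly active constraints with $\nabla g_j(\pi_0)'c=0$, which gives feasibility for large $n$ directly while the multiplier terms still vanish in the limit.
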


\begin{proof}
  For this proof, use $a_j = \nabla g_j(\pi_0)$ and $G_j = \nabla^2 g_j(\pi_0)$.
  Taylor expansions of centered regret and an active constraint $j$ yield
  \begin{align*}
    R_n^s(c_n, m_n) &= \sqrt{n} D' c_n + c_n' B m_n + \frac{1}{2} c_n' H c_n + o(1)
    \\
    0 &\geq n^{-1/2} a_j' c_n + n^{-1} \frac{1}{2} c_n' G_j c_n + o(n^{-1}).
  \end{align*}
  The latter implies that $a_j' c \leq 0$ for all active constraints $j$.
  Stationarity implies $D = - \sum_{j \in \mathcal{J}_s} \lambda_{0j} a_j$.
  Thus, if $a_j' c < 0$ for some strictly active constraint $j \in \mathcal{J}_s$, 
  then $R_n^s(c_n, m_n) \to \infty$ and the result holds.
  On the other hand, if $c \in \mathcal{G}$, then $a_j' c = 0$ for all $j \in \mathcal{J}_s$ and
  substituting for $D$ in the Taylor expansion of $R_n^s$ yields
  \begin{align*}
    R_n^s(c_n, m_n) &= - \sqrt{n} \sum_{j \in \mathcal{J}_s} \lambda_{0j} a_j' c_n + c_n' B m_n + \frac{1}{2} c_n' H c_n + o(1)
    \\
    &\geq c_n' B m_n + \frac{1}{2} c_n' H c_n + \frac{1}{2} \sum_{j \in \mathcal{J}_s} \lambda_{0j} c_n' G_j c_n + o(1)
  \end{align*}
  and hence $\liminf_{n\to\infty} R_n^s(c_n, m_n) \geq r(c, Bm)$ as desired.
  
  We now turn to constructing the recovery sequence.
  Fix $c \in \mathcal{G}$ and define $\mathcal{J}_c = \{ j \in \mathcal{J}_w \cup \mathcal{J}_s : a_j' c = 0 \}$.
  Let $\nu_n \downarrow 0$ be a sequence dominating the Taylor remainder in the expansion of $g_j$ for all $j \in \mathcal{J}_w \cup \mathcal{J}_s$.
  Linear independence of the constraints implies that there exists a right inverse of the gradient matrix $A = [a_j']_{j \in \mathcal{J}_c}$,
  which we can use to solve for a vector $\tilde c_n$ such that
  $a_j' \tilde c_n = -\frac{1}{2} c' G_j c - \nu_n$ for all $j \in \mathcal{J}_s$
  and $a_j' \tilde c_n = - \frac{1}{2} c' G_j c - 1$ for all $j \in \mathcal{J}_c \cap \mathcal{J}_w$.
  Then the sequence
  $c_n = c + n^{-1/2} \tilde c_n$
  is feasible for all sufficiently large $n$.
  Substituting $c_n$ into the Taylor expansion of $R_n^s$ and taking limits yields
  $R_n^s(c_n, m_n) \to r(c, B m)$ as desired.
\end{proof}

\begin{lemma}
  \label{lemma:valueapproximation}
  Suppose Assumption \ref{assumption:nlp} holds.
  For every compact set $M \subset \R^\ell$,
  \begin{align*}
    \sup_{m\in M} \| \sqrt{n}(\pi^*(\theta_0 + m/\sqrt{n}) - \pi_0) - \Gamma(B m) \| \to 0
    \\
    \sup_{m\in M} | R_n^o(m) + V(B m) | \to 0
  \end{align*}
\end{lemma}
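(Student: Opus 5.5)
Both claims follow from $\Gamma$-convergence (epi-convergence) of the localized oracle problem, which Proposition~\ref{proposition:epiconvergence} supplies, combined with a compactness argument. The first step is to note that $-R_n^o(m) = \min_{c_n:\,\pi_0 + c_n/\sqrt{n} \in \Pi} R_n^s(c_n, m)$. The oracle policy $\pi^*(\theta_0 + m/\sqrt{n})$ is exactly the minimizer of this localized problem, whose localized version is $c_n^*(m) = \sqrt{n}(\pi^*(\theta_0+m/\sqrt{n}) - \pi_0)$. The identity holds because $R_n^s(c,h) = n[W(\pi_0,\theta) - W(\pi,\theta)]$ and $R_n^o(h) = n[\max_{\Pi} W - W(\pi_0,\theta)]$. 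We also have $V(b) = \min_{c\in\mathcal{G}} r(c,b) = r(\Gamma(b), b)$. So the lemma says that minimal values and minimizers of the localized problems converge, uniformly on compacts, to those of the limiting quadratic program.

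I argue by contradiction using sequences: uniform convergence on compact $M$ is equivalent to convergence along every sequence $m_n \to m \in M$. The key preliminary is \emph{tightness of the minimizers}: $\sup_{m\in M}\|c_n^*(m)\| = O(1)$. To get it, I would first use Assumption~\ref{assumption:nlp}.2. Uniqueness of $\pi_0$ and the compact level-set condition give consistency $\pi^*(\theta_0 + m_n/\sqrt{n}) \to \pi_0$ by a standard argmax-consistency argument (continuity of $W$ on the compact $C_\pi$). Once $\pi^*$ lies in the neighborhood where $W$ is $C^2$, I would get the $\sqrt{n}$ rate by comparing with a feasible recovery point. Take $c=0$ in Proposition~\ref{proposition:epiconvergence} (or simply $\pi_0$ itself), which shows $R_n^s(c_n^*, m_n) \le R_n^s(0,m_n) = 0$.

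For the lower side, I would expand $R_n^s$ as in the proof of Proposition~\ref{proposition:epiconvergence}. The term $\sqrt{n}D'c_n$ is nonnegative up to curvature, since it is lower bounded via stationarity and $a_j'c_n \le O(\|c_n\|^2/\sqrt{n})$. The quadratic term is controlled by strong convexity (item 4, using $H$ on $\operatorname{span}(\Pi-\Pi)$ and $\bar H$ on $\mathcal{G}$). Together these give $R_n^s(c_n^*,m_n) \ge \tfrac{\kappa}{4}\|c_n^*\|^2 - C\|c_n^*\|\|m_n\| - o(\|c_n^*\|^2)$. This forces boundedness. I expect this rate step to be the \textbf{main obstacle}, because the Taylor remainder is only $o(\|c_n\|^2)$ when $c_n/\sqrt{n}\to 0$, which is why consistency must be established first. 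If the direct bound is awkward, I would instead pass to a normalized sequence $c_n^*/\|c_n^*\|$ and obtain a contradiction from the liminf inequality in Proposition~\ref{proposition:epiconvergence} applied after rescaling.

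With tightness in hand, I would take any subsequence along which $c_n^* \to \bar c$. The liminf part of Proposition~\ref{proposition:epiconvergence} gives $\liminf R_n^s(c_n^*, m_n) \ge r(\bar c, Bm)$, with $\bar c\in\mathcal{G}$ since otherwise the liminf would be infinite, contradicting $R_n^s(c_n^*,m_n)\le 0$. The recovery sequence for $\Gamma(Bm)$ gives $\limsup R_n^s(c_n^*,m_n) \le \lim R_n^s(c_n^{\Gamma},m_n) = r(\Gamma(Bm),Bm) = V(Bm)$. Hence $r(\bar c,Bm)\le V(Bm)$. By uniqueness of the minimizer in Lemma~\ref{lemma:qplipschitz}, $\bar c = \Gamma(Bm)$, and $-R_n^o(m_n) = R_n^s(c_n^*,m_n)\to V(Bm)$. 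Since every subsequence has a further subsequence with the same limit, and $\Gamma$ and $V$ are continuous by Lemma~\ref{lemma:qplipschitz}, the sequential statements upgrade to the two uniform-on-$M$ convergences claimed.
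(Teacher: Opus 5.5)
Your architecture is the paper's. You establish consistency from the level-set and uniqueness conditions, then a $\sqrt{n}$-rate bound, then use the liminf half of Proposition~\ref{proposition:epiconvergence} with a recovery sequence for $\Gamma(Bm)$, uniqueness from Lemma~\ref{lemma:qplipschitz}, and a subsequence argument for uniformity over $M$. Your identification and value steps are essentially verbatim what the paper does. The difference is the rate step. The paper takes quadratic growth $W(\pi_0,\theta_0)-W(\pi,\theta_0)\geq a\|\pi-\pi_0\|^2$ on $\Pi$ near $\pi_0$ as a consequence of Assumption~\ref{assumption:nlp}. It combines this with $W(\pi^*(\theta),\theta)\geq W(\pi_0,\theta)$ and a mean-value bound $[W(\pi^*,\theta)-W(\pi^*,\theta_0)]-[W(\pi_0,\theta)-W(\pi_0,\theta_0)]\leq C\|\theta-\theta_0\|\,\|\pi^*-\pi_0\|$. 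That gives $\|\pi^*(\theta)-\pi_0\|\leq a^{-1}C\|\theta-\theta_0\|$ in one line, without expanding the first-order term.

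Your direct bound, as described, does not deliver $\tfrac{\kappa}{4}\|c_n^*\|^2$. Bounding $\sqrt{n}D'c_n$ through stationarity and the active-constraint expansions only gives $\sqrt{n}D'c_n\geq \tfrac12 c_n'(\sum_{j\in\mathcal{J}_s}\lambda_{0j}\nabla^2 g_j(\pi_0))c_n-o(\|c_n\|^2)$. The quadratic form you can then control is $\tfrac12 c_n'\bar H c_n$, but item 4 makes $\bar H$ positive definite only on $\operatorname{span}(\mathcal{G})$. The vector $c_n^*$ may point into the interior of a strictly active constraint, and $\nabla^2 g_j(\pi_0)$ can be indefinite even though $\Pi$ is convex. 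If you use $H$ on $\operatorname{span}(\Pi-\Pi)$ instead, the $-O(\|c_n\|^2)$ from the stationarity bound is left uncontrolled. The fix is to drop stationarity: convexity of $\Pi$ and first-order optimality of $\pi_0$ give $D'(\pi-\pi_0)\geq 0$ on $\Pi$. Then $\sqrt{n}D'c_n\geq 0$ exactly, and the $H$-part of item 4 yields $\|c_n^*\|\leq C\|m_n\|$; this is the paper's quadratic-growth argument in another form.

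Your rescaling fallback is valid and is the genuinely different route. Suppose $t_n=\|c_n^*\|\to\infty$ along a subsequence, with $t_n/\sqrt{n}\to 0$ by consistency. Then $R_n^s(c_n^*,m_n)=t_n^2R^s_{N_n}(c_n^*/t_n,m_n/t_n)$ with $N_n=n/t_n^2\to\infty$. The liminf part of Proposition~\ref{proposition:epiconvergence}, whose proof runs along any real $N_n\to\infty$, forces the limit direction $e$ into $\mathcal{G}$ with $\liminf R^s_{N_n}\geq \tfrac12 e'\bar H e\geq \kappa/2$, or gives $+\infty$. Either way this contradicts $R_n^s(c_n^*,m_n)\leq 0$. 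This version uses only curvature on the critical cone, the mechanism by which second-order sufficiency implies growth, and reuses a proposition you need anyway. The paper's route is shorter and gives an explicit Lipschitz constant, but it asserts quadratic growth rather than deriving it.
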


\begin{lemma}
  \label{lemma:tightness}
  Suppose Assumption \ref{assumption:nlp} holds.
  For every $K < \infty$ and data $(D_{1,n}, D_{2,n})$,
  there exists $n_K < \infty$ and $C_K < \infty$ such that 
  for all $n \geq n_K$, the optimal policy $\pi_{n,K}^*$ exists and is unique,
  and $\sqrt{n}\| \pi_{n,K}^* - \pi_0 \| \leq C_K$.
\end{lemma}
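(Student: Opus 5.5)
The plan is to work directly with the posterior expected welfare
\[
F_n(\pi) = \int W(\pi, \theta_0 + h/\sqrt{n}) \, q_{n,K}(h \mid D_{1,n}, D_{2,n}) \, dh ,
\]
where $q_{n,K}(\cdot \mid D_{1,n}, D_{2,n})$ is the posterior under the truncated prior $q_{n,K}$. Minimizing posterior expected regret over $\Pi$ is the same as maximizing $F_n$ over $\Pi$. The key observation is that, whatever the data, this posterior is supported on the ball $\{\theta : \|\theta - \theta_0\| \leq K/\sqrt{n}\}$. Every bound below therefore depends only on $K$ and $n$, not on $(D_{1,n}, D_{2,n})$, which is what makes $n_K$ and $C_K$ uniform in the data. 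The argument has three steps, in this order: existence and consistency, local strong concavity and uniqueness, and the $\sqrt{n}$ rate.

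\textbf{Existence and consistency.} Take $n$ large enough that the ball lies inside the neighborhood $N$ of Assumption~\ref{assumption:nlp}(2).
\begin{itemize}
\item For each such $\theta$, the upper level set $\{\pi \in \Pi : W(\pi,\theta) \geq \alpha\}$ is nonempty and contained in the compact set $C_\pi$.
\item Since $W$ is continuous on $\Pi \times \Theta$, it is uniformly continuous on $C_\pi$ times a compact neighborhood of $\theta_0$. Hence $\sup_{\pi \in C_\pi} |F_n(\pi) - W(\pi,\theta_0)| \to 0$, uniformly in the data.
\item Fix $\alpha' \in (\alpha, W(\pi_0,\theta_0))$. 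For large $n$ we have $F_n(\pi_0) > \alpha'$. Any $\pi \in \Pi$ with $F_n(\pi) \geq \alpha'$ must satisfy $W(\pi,\theta) \geq \alpha$ for some $\theta$ in the support, so $\pi \in C_\pi$.
\item So it suffices to maximize $F_n$ over the compact set $\Pi \cap C_\pi$ ($\Pi$ is closed). A maximizer exists by continuity.
\item Because $\pi_0$ is the unique maximizer of $W(\cdot,\theta_0)$, the standard argmax-consistency argument shows that every maximizer lies in an arbitrary fixed neighborhood $U$ of $\pi_0$ for $n \geq n_K$.
\end{itemize}

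\textbf{Local strong concavity and uniqueness.} By Assumption~\ref{assumption:nlp}(4), $H \succeq \kappa I$ on $\operatorname{span}(\Pi-\Pi)$. Since $\nabla^2_{\pi\pi} W$ is continuous near $(\pi_0,\theta_0)$, we can choose $U$ and $n_K$ so that $-\nabla^2_{\pi\pi} W(\pi,\theta) \succeq (\kappa/2) I$ on $\operatorname{span}(\Pi-\Pi)$ for all $\pi \in U$ and all $\theta$ in the support. Then $F_n$ is $(\kappa/2)$-strongly concave on the convex set $\Pi \cap U$ (taking $U$ convex). If there were two distinct maximizers, both would lie in $\Pi \cap U$ by the previous step. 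Strict concavity on that convex set then makes their midpoint strictly better, a contradiction. So the maximizer $\pi^*_{n,K}$ is unique.

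\textbf{Rate.} Write $d = \pi^*_{n,K} - \pi_0 \in \Pi - \Pi$. For each $\theta$ in the support, a Taylor expansion in $\pi$ around $\pi_0$ gives
\[
W(\pi^*_{n,K},\theta) - W(\pi_0,\theta) \leq \nabla_\pi W(\pi_0,\theta)' d - \frac{\kappa}{4}\|d\|^2 .
\]
Expanding the gradient in $\theta$ gives $\nabla_\pi W(\pi_0,\theta) = -D - B(\theta - \theta_0) + o(K/\sqrt{n})$. Two facts then bound the linear term.
\begin{itemize}
\item Since $\Pi$ is convex and $\pi_0$ maximizes $W(\cdot,\theta_0)$ over $\Pi$, the first-order condition gives $-D'd = \nabla_\pi W(\pi_0,\theta_0)'d \leq 0$. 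This is where convexity of $\Pi$ is used; it avoids having to handle the constraint curvature through $\bar H$.
\item $\|B(\theta-\theta_0)\| \leq \|B\| K/\sqrt{n}$.
\end{itemize}
Averaging over the posterior and using optimality, $F_n(\pi^*_{n,K}) \geq F_n(\pi_0)$, we get
\[
0 \leq \bigl(\|B\| + o(1)\bigr) \frac{K}{\sqrt{n}} \|d\| - \frac{\kappa}{4}\|d\|^2 ,
\]
so $\sqrt{n}\|d\| \leq C_K := 8\|B\|K/\kappa$ for $n \geq n_K$.

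\textbf{Main obstacle.} The delicate step is the first one: ruling out maximizers far from $\pi_0$ uniformly over data. This rests entirely on the bounded support of the truncated posterior together with the level-set compactness condition. Once that localization is in hand, uniqueness and the rate are routine local arguments.
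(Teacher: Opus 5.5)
Your proof is correct and follows the same route as the paper: the truncated posterior is supported in a $K/\sqrt{n}$-ball around $\theta_0$, which together with the level-set and local strong-concavity conditions gives existence and uniqueness, and then optimality of $\pi^*_{n,K}$ against $\pi_0$ pits a quadratic lower bound against an $O(K\|d\|/\sqrt{n})$ cross-partial perturbation. The paper phrases that last step as quadratic growth of $W(\cdot,\theta_0)$ plus a mixed-difference bound, rather than your first-order variational inequality plus strong concavity at each $\theta$, and it only asserts the uniform-in-data localization that you prove explicitly. One small fix: take $C_K = 4(\|B\|+1)K/\kappa$ so the bound also covers $B=0$.
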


\begin{proof}[Proof of Theorem \ref{theorem:lowerbound}]
  Since $\mathcal{R}_{n,K}(\delta_{2,n}, c_n) \geq \mathcal{R}^*_{n,K}(\delta_{2,n})$ for any $c_n$, 
  it suffices to restrict attention to the optimal policy $c_{n,K}^* = \sqrt{n}(\pi^*_{n,K} - \pi_0)$.

  We first take a subsequence along which $\mathcal{R}_{n,K}^*(\delta_{2,n}) \to \liminf_{n\to\infty} \mathcal{R}_{n,K}^*(\delta_{2,n})$.
  If the $\liminf$ is infinite, the result holds trivially, and so we proceed under the assumption that it is finite.
  Since $c_{n,K}^*$ is tight by Lemma \ref{lemma:tightness},
  Theorem \ref{theorem:art} and Corollary \ref{corollary:art} imply that every subsequence has a further subsequence along which 
  $(\delta_{2,n}, c_{n,K}^*) \overset{h}{\leadsto} (\delta_2, c)$ for any $h$,
  where $(\delta_2, c)$ is a design/policy in the limit experiment.

  Since the sequence $c_{n,K}^*$ converges in distribution to $c$ along the subsequence for every $h$,
  we can extract a Skorokhod representation for each $h$ such that $c_{n,K}^* \to c$ almost surely.
  Proposition \ref{proposition:epiconvergence} implies that $\liminf_{n\to\infty} R_n^s(c_{n,K}^*, h) \geq r(c, B h)$ 
  for every $h$ if $c \in \mathcal{G}$ and $\liminf_{n\to\infty} R_n^s(c_{n,K}^*, h) = \infty$ otherwise.
  Lemma \ref{lemma:valueapproximation} implies that $R_n^o(h) \to - V(B h)$ for every $h$.
  Thus, $\liminf_{n\to\infty} R_n(c_{n,K}^*, h) \geq R_\infty(c, B h)$ for every $h$.

  Thus, by Fatou's Lemma and Assumption \ref{assumption:prior},
  \begin{align*}
    \liminf_{n\to\infty} \mathcal{R}_{n,K}^*(\delta_{2,n})
    &=
    \liminf_{n\to\infty} \int \E_h [ R_n(c_{n,K}^*, h) ] q_{n,K}(h) dh
    \\
    &\geq
    \int \liminf_{n\to\infty} \E_h \left[ R_n(c_{n,K}^*, h) \right] q_{\infty,K}(h) dh
    \\
    & \geq
    \int \E_h [ R_\infty(c, B h) ] q_{\infty,K}(h) dh
  \end{align*}
  and hence $\liminf_{n\to\infty} \mathcal{R}_{n,K}^*(\delta_{2,n}) \geq \mathcal{R}_{\infty,K}^{**}$.
\end{proof}

\section{Proof of Theorem \ref{theorem:optimality}}
\label{appendix:optimality}
\subsection{Fixed-\texorpdfstring{$K$}{K} upper bound}

Define the local first-wave estimators
$\hat h_{1,n} = \sqrt{n}(\hat \theta_{1,n} - \theta_0)$ and
$\hat c_{1,n} = \sqrt{n}(\hat \pi_{1,n} - \pi_0)$.
We also define 
\begin{align*}
  \hat S_n(c,b) &= n \hat R_n\left(
    \pi_0 + \frac{c}{\sqrt{n}}, \hat B_n \theta_0 + \frac{b}{\sqrt{n}}
  \right)
  \\
  \hat \Gamma_n(b) &= \argmin_c \quad \hat S_n(c,b)
  \quad \text{s.t.} \quad
  g(\pi_0 + c/\sqrt{n}) \leq 0
\end{align*}
and note that $\hat c_n^\text{os} = \hat \Gamma_n(\hat B_n \hat h_n^\text{os})$.
We also define $\hat \beta_n(m) = \hat B_n (\theta_0 + m/\sqrt{n})$.

We use $\E_K$ to denote expectations with respect to the $K$-truncated prior.
Let $\xi \sim N(0, I)$ independently of the first-wave data.
The simulated posterior mean of $h$ 
and corresponding finite-sample loss minimized by $\hat\delta_{2,n}$ are
\begin{align*}
  \tilde h_n(d, \xi) &= \hat h_{1,n} + \left( \frac{n}{n_1} \hat J_{1,n}^{-1} - \hat J_n(d)^{-1} \right)^{1/2} \xi
  \\
  \hat L_n(d) &= \E_\xi \left[ n \hat V_n(\hat \beta_n(\tilde h_n(d, \xi))) \right]
  .
\end{align*}
The expectation $\E_\xi$ integrates only over the randomness in $\xi$ and not the first-wave data.
The posterior mean of $h$ in the limit experiment after the first wave
and the flat-prior and $K$-truncated losses in the limit experiment are
\begin{align*}
  \hat h_{1,\infty} &= (1-\rho)^{-1} J_1^{-1} A_1
  \\
  L_\infty(d) &= \E_\xi \left[ V( B(\hat h_{1,\infty} + ((1-\rho)^{-1} J_1^{-1} - J(d)^{-1})^{1/2} \xi)) \right]
  \\
  L_\infty^K(d) &= \E_K\left[V(\hat b_{2,K}(d)) \mid A_1, U\right]
  .
\end{align*}
We also define the centered loss criteria
\begin{align*}
  \hat L_n^c(d) = \hat L_n(d) - \hat L_n(d_0)
  &&
  L_\infty^c(d) = L_\infty(d) - L_\infty(d_0)
\end{align*}
for a fixed $d_0 \in \Delta$.
This centering does not affect the minimizer.
We make the dependence of $\hat h_n^\text{os}$ on the design $d \in \Delta$ explicit by
writing it as a process $\hat h_n^\text{os}(d)$ where necessary.
The limit analog
and corresponding flat-prior policy are
\begin{align*}
  \hat h^\text{os}(d) = J(d)^{-1} (A_1 + A_2(d))
  &&
  c_\infty^\text{flat} = \Gamma( B \hat h^\text{os}(d))
  .
\end{align*}
To compare the estimated and limiting value functions,
we define the centered value $D_V(b_1, b_2) = V(b_2) - V(b_1) - \Gamma(b_1)' (b_2 - b_1)$.
The next few results establish convergence of various optimizers and values to their limiting counterparts.

\begin{lemma}
  \label{lemma:estimatedqpstability}
  Suppose Assumptions \ref{assumption:nlp} and \ref{assumption:plugin} hold.
  Then
  \begin{enumerate}
    \item for every compact set $M \subset \R^\ell$,
      $\sup_{m\in M} \| \hat \Gamma_n(\hat B_n m) - \hat c_{1,n} \| = O_P(1)$
    \item With probability approaching one, the optimization problem defining $\hat \Gamma_n$ has a unique solution, and
      $0 \leq - n \hat V_n(\hat \beta_n(m)) \leq \kappa^{-1} \| \hat B_n \|^2 \| m - \hat h_{1,n} \|^2$ for every $m \in \R^\ell$.
  \end{enumerate}
\end{lemma}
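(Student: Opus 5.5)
The plan is to treat the estimated problem defining $\hat\Gamma_n$ as a small perturbation of the reference nonlinear program (\ref{eq:nlp-reference}) after the $\sqrt{n}$ localization, and to import the stability estimates of Lemma~\ref{lemma:qplipschitz} to the estimated quadratic $\hat S_n$.

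First, item 2, the existence, uniqueness and value bound. Expanding $\hat S_n(c,b)$ gives
\[
\hat S_n(c,b) = (c-\hat c_{1,n})'\bigl(\sqrt n \hat D_n + b - \hat B_n \hat h_{1,n}\bigr) + \tfrac12 (c-\hat c_{1,n})'\hat H_n (c-\hat c_{1,n}),
\]
where $b$ is centered at $\hat B_n\theta_0$. Under Assumption~\ref{assumption:plugin}.1 we have $\hat\theta_{1,n}\to\theta_0$, and by uniqueness of $\pi_0$ and the level-set compactness in Assumption~\ref{assumption:nlp}.2 also $\hat\pi_{1,n}\to\pi_0$. Continuity of second derivatives then gives $\hat H_n\to H$. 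Since $\Pi$ is convex and $H$ is strongly convex on $\mathrm{span}(\Pi-\Pi)$ with modulus $\kappa$, the estimated $\hat H_n$ is strongly convex on the same span with modulus close to $\kappa$ with probability approaching one. A strongly convex quadratic on a closed convex feasible set has a unique minimizer, which settles uniqueness. (I will absorb the $(1-o(1))$ slack into $\kappa$, or read $\kappa$ as any constant strictly below the modulus.) For the value bound:
\begin{itemize}
\item The upper bound $n\hat V_n\le 0$ follows from feasibility of $c=\hat c_{1,n}$, which makes the objective zero.
\item For the lower bound, $\hat\pi_{1,n}$ optimizes $W(\cdot,\hat\theta_{1,n})$ over $\Pi$, so the first-order condition gives $\hat D_n'(\pi-\hat\pi_{1,n})\ge 0$ for all $\pi\in\Pi$. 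The objective is therefore at least $(c-\hat c_{1,n})'\hat B_n(m-\hat h_{1,n}) + \tfrac{\kappa}{2}\|c-\hat c_{1,n}\|^2$. Minimizing over $c$ gives $\ge -\tfrac{1}{2\kappa}\|\hat B_n\|^2\|m-\hat h_{1,n}\|^2$, which implies the stated bound.
\end{itemize}

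Second, item 1. Let $\hat c=\hat\Gamma_n(\hat B_n m)$. Comparing its objective value with that at $\hat c_{1,n}$, and using the same first-order inequality, yields
\[
\tfrac{\kappa}{2}\|\hat c-\hat c_{1,n}\|^2 \le \|\hat B_n\|\,\|m-\hat h_{1,n}\|\,\|\hat c-\hat c_{1,n}\|,
\]
hence $\|\hat c-\hat c_{1,n}\|\le 2\kappa^{-1}\|\hat B_n\|\,\|m-\hat h_{1,n}\|$. The right-hand side is $O_P(1)$ uniformly over compact $M$, because $\hat h_{1,n}=O_P(1)$ by $\sqrt n$-consistency and $\hat B_n\to B$.

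The main obstacle is the uniform strong convexity step. It requires the first-wave plug-ins, in particular $\hat\pi_{1,n}$, to lie in the neighborhood where $W$ is $C^2$, so that $\hat H_n$ is close to $H$ on $\mathrm{span}(\Pi-\Pi)$. I would get this from consistency of $\hat\pi_{1,n}$ via a Berge-type argmax continuity argument using Assumption~\ref{assumption:nlp}.2.
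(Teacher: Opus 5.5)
Your proposal is correct and follows the paper's proof. Consistency of the first-wave plug-ins gives $\hat H_n \to H$, hence strong convexity on $\text{span}(\Pi-\Pi)$ with probability approaching one. Convexity of $\Pi$ then gives uniqueness, and the variational inequality from optimality of $\hat\pi_{1,n}$ yields both bounds, as in the argument of Lemma~\ref{lemma:qplipschitz} that the paper invokes.

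The differences are minor. You get consistency of $\hat\pi_{1,n}$ from a direct argmax argument instead of citing Lemma~\ref{lemma:valueapproximation}. Your explicit completion of the square with modulus $\kappa/2$ gives the stated constant $\kappa^{-1}$ exactly, so no slack needs to be absorbed. Reusing the Lemma~\ref{lemma:qplipschitz} argument verbatim would instead lose a factor of two.
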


\begin{lemma}
  \label{lemma:surrogateconvergence}
  Suppose \Crefrange{assumption:largewaves}{assumption:plugin} hold.
  Then for every fixed $h$ and every compact $M \subset \R^\ell$,
  \begin{enumerate}
    \item $\sup_{m \in M} | n \hat V_n(\hat \beta_n(m)) - D_V(B \hat h_{1,n}, Bm) | = o_{P_h}(1)$
    \item $\sup_{m \in M} \| \hat \Gamma_n(\hat B_n m) - \Gamma(Bm) \| = o_{P_h}(1)$
    \item $\sup_{m \in M} | R_n^s(\hat \Gamma_n(\hat B_n m), m) - V(B m) | = o_{P_h}(1)$.
  \end{enumerate}
\end{lemma}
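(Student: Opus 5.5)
The approach is to reduce all three items to a deterministic statement about perturbed quadratic programs, and then apply Proposition \ref{proposition:epiconvergence} and Lemma \ref{lemma:qplipschitz} along almost-surely convergent representations. Contiguity (Proposition \ref{proposition:lan}) lets us work under $P_0$ and transfer $o_{P_0}(1)$ conclusions to $o_{P_h}(1)$. The first step is an exact algebraic rewriting of the feasible objective. Write $\hat c_1 = \hat c_{1,n}$ and $\hat h_1 = \hat h_{1,n}$. Substituting $\pi = \pi_0 + c/\sqrt n$ and $\beta = \hat\beta_n(m)$ into $\hat R_n$ gives
\begin{align*}
  \hat S_n(c, \hat B_n m) = (c-\hat c_1)'\sqrt n \hat D_n + (c - \hat c_1)'\hat B_n (m - \hat h_1) + \tfrac12 (c-\hat c_1)'\hat H_n (c-\hat c_1).
\end{align*}
Next, Taylor-expand $\hat D_n = -\nabla_\pi W(\hat\pi_{1,n},\hat\theta_{1,n})$ around $(\pi_0,\theta_0)$. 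This uses $\sqrt n$-consistency of $\hat\theta_{1,n}$ (Assumption \ref{assumption:plugin}), tightness of $\hat c_1$ (Lemma \ref{lemma:estimatedqpstability}, or Lemma \ref{lemma:valueapproximation} applied to $m = \hat h_1$), and continuity of second derivatives, which give $\hat B_n \to B$ and $\hat H_n \to H$. The result is $\sqrt n\hat D_n = \sqrt n D + H\hat c_1 + B\hat h_1 + o_P(1)$. Collecting terms, the objective becomes
\begin{align*}
  \hat S_n(c,\hat B_n m) = \sqrt n D'(c - \hat c_1) + \big[c'Bm + \tfrac12 c'Hc\big] - \big[\hat c_1' B m + \tfrac12 \hat c_1' H\hat c_1\big] + o_P(1),
\end{align*}
uniformly over $m\in M$ and over $c$ in compacts. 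The remainder is uniform because it is linear in $m$ with tight coefficients.

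The second step handles the $\sqrt n D'(\cdot)$ terms. Pass to a subsequence and a Skorokhod representation along which $\hat h_1$ and $\hat c_1$ converge almost surely. By Lemma \ref{lemma:valueapproximation}, $\hat c_1 \to \Gamma(B\hat h_{1})$ in the limit. Then run the argument of Proposition \ref{proposition:epiconvergence} pathwise. For feasible $c_n \to c$, stationarity $D = -\sum_{j\in\mathcal J_s}\lambda_{0j}a_j$ together with the constraint expansions gives $\liminf \sqrt n D'c_n \ge \tfrac12\sum_{j \in \mathcal J_s}\lambda_{0j}c'G_jc$, and the limit is $+\infty$ unless $c\in\mathcal G$. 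The recovery sequences of Proposition \ref{proposition:epiconvergence} attain equality. For the term $\sqrt n D'\hat c_1$ we need equality, not just an inequality. I would obtain it from optimality of $\hat\pi_{1,n}$ for the first-wave problem: since its strictly active constraints must be essentially binding, $\sqrt n D'\hat c_1 \to \tfrac12 \sum_{j \in \mathcal J_s} \lambda_{0j} \Gamma(B\hat h_1)'G_j\Gamma(B\hat h_1)$. Combining, $c \mapsto \hat S_n(c,\hat B_n m)$ epi-converges to $r(c,Bm) - r(\Gamma(B\hat h_1),Bm)$ restricted to $\mathcal G$. Here $\bar H$ appears because the $\lambda_0$-curvature terms merge with $H$.

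The third step extracts the three items from this epi-convergence. The limit objective is strongly convex on $\mathcal G$ by Assumption \ref{assumption:nlp}(4), and minimizers are tight by Lemma \ref{lemma:estimatedqpstability}(1). Epi-convergence therefore yields convergence of minimizers, $\hat\Gamma_n(\hat B_n m)\to\Gamma(Bm)$, which is item 2. It also yields convergence of minimal values to $V(Bm) - r(\Gamma(B\hat h_1),Bm)$. Using Lemma \ref{lemma:qplipschitz}(2), $V(Bm) - r(\Gamma(B\hat h_1),Bm) = V(Bm) - V(B\hat h_1) - \Gamma(B\hat h_1)'(Bm - B\hat h_1) = D_V(B\hat h_1,Bm)$, which is item 1. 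Uniformity over compact $M$ comes from two facts. First, the limit functions are Lipschitz in $m$ (Lemma \ref{lemma:qplipschitz}(1),(3)). Second, the finite-sample maps are equi-Lipschitz with probability approaching one, since the objective is linear in $m$ with tight coefficients and the strong convexity constant is uniform. Pointwise convergence on a finite grid then upgrades to uniform convergence. For item 3, apply the first expansion in the proof of Proposition \ref{proposition:epiconvergence} to the feasible, convergent sequence $\hat\Gamma_n(\hat B_n m)$. The liminf part gives $\ge r(\Gamma(Bm),Bm) = V(Bm)$. For the matching upper bound, compare with a recovery sequence using optimality of $\hat\Gamma_n$ for $\hat S_n$, and note that $\hat S_n$ and $R_n^s$ differ only by the $\hat c_1$-dependent constant and $o_P(1)$ terms.

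I expect the main obstacle to be the exact handling of the $\sqrt n D'(c-\hat c_1)$ term. It is individually unbounded and cancels only because both $\hat c_1$ and the candidate $c$ nearly saturate the strictly active constraints. Making this cancellation uniform in $m$ and random in $\hat c_1$ requires care with the constraint Taylor remainders. If the direct pathwise argument is messy, my fallback is to reparametrize locally by the active constraints. Using linear independence (Assumption \ref{assumption:nlp}(3)), write $c = c^\perp + $ a correction determined by $g_j(\pi_0+c/\sqrt n)$ for $j\in\mathcal J_s$. In these coordinates the $\sqrt n$ term becomes $-\sqrt n\sum_{j \in \mathcal J_s}\lambda_{0j} g_j$, which is nonnegative on the feasible set, plus the curvature term.
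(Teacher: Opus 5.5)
Your proposal is correct, and its first step is in substance the paper's. The expansion you derive is exactly $R_n^s(c,m) - R_n^s(\hat c_{1,n},m) + o_P(1)$, uniformly over compact sets of $(c,m)$. That is the paper's opening display, which it obtains by expanding around $(\hat\pi_{1,n},\hat\theta_{1,n})$ rather than $(\pi_0,\theta_0)$.

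The routes diverge after that. You re-run the epi-convergence argument of Proposition~\ref{proposition:epiconvergence} on $\hat S_n$ itself. This forces you to track the unbounded $\sqrt n D'(c-\hat c_{1,n})$ term. It also forces you to pin down $\sqrt n D'\hat c_{1,n}$ by a separate argument: LICQ persists near $\pi_0$, the multipliers are continuous, and so the $\mathcal J_s$ constraints bind exactly at $\hat\pi_{1,n}$ with probability approaching one.

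The paper never handles that term. Once $\hat S_n$ is identified with $R_n^s(\cdot,m) - R_n^s(\hat c_{1,n},m)$, minimizing over $c$ and invoking Lemma~\ref{lemma:valueapproximation} gives $\min_c R_n^s(c,m) = -R_n^o(m) = V(Bm)+o(1)$ uniformly on $M$. Because $\hat\pi_{1,n} = \pi^*(\hat\theta_{1,n})$, the same lemma gives $R_n^s(\hat c_{1,n},\hat h_{1,n}) = V(B\hat h_{1,n}) + o_{P_h}(1)$. This is precisely the equality you wanted for $\sqrt n D'\hat c_{1,n}$, obtained without any multiplier argument. A first-order expansion of $R_n^s(\hat c_{1,n},\cdot)$ in $m$ then yields $D_V$. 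Item 2 follows by a Skorokhod/contradiction argument close to yours, and item 3 falls out of the item-1 computation. Uniformity thus comes from the uniform statements of Lemma~\ref{lemma:valueapproximation} and a contradiction along $m_n$, not from an equi-Lipschitz-plus-grid argument.

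What each route buys: yours is more self-contained and makes visible why $\bar H$ rather than $H$ appears and why the epi-limit is $r(c,Bm) - r(\Gamma(B\hat h_{1,n}),Bm)$ on $\mathcal G$. The paper's is shorter because it reuses the oracle sensitivity lemma, and it makes your anticipated obstacle and your reparametrization fallback unnecessary.

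Two small corrections. First, tightness of $\hat c_{1,n}$ comes from Lemma~\ref{lemma:valueapproximation}, not Lemma~\ref{lemma:estimatedqpstability}(1), which only controls $\hat\Gamma_n(\hat B_n m) - \hat c_{1,n}$. Second, the identity $V(Bm) - r(\Gamma(B\hat h_{1,n}),Bm) = D_V(B\hat h_{1,n},Bm)$ follows from linearity of $r$ in $b$ alone, without Lemma~\ref{lemma:qplipschitz}(2).
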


\begin{lemma}
  \label{lemma:onestepconvergence}
  Suppose \Crefrange{assumption:largewaves}{assumption:plugin} hold.
  Then
  \begin{enumerate}
    \item $\sup_{d\in\Delta} \| \hat h_n^\text{os}(d) - J(d)^{-1}( A_{1,n} + A_{2,n}(d)) \| = o_{P_0}(1)$
      and $\hat h_n^\text{os}(\cdot) \overset{0}{\leadsto} \hat h^\text{os}(\cdot)$ in $\ell^\infty(\Delta)$
    \item $\hat L_n^c(\cdot) \overset{h}{\leadsto} L_\infty^c(\cdot)$ in $\ell^\infty(\Delta)$ for any $h$.
  \end{enumerate}
\end{lemma}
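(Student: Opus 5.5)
We prove the two parts in order. For part 1 we work under $\theta_0$ (i.e.\ $h=0$). For part 2 we transfer to local alternatives using contiguity from Proposition \ref{proposition:lan}, and then apply a continuous-mapping argument. Throughout, $\hat h_{1,n}$ is tight by item 1 of Assumption \ref{assumption:plugin}. So for any $\varepsilon>0$ there is a $K$ with $\|\hat h_{1,n}\|\le K$ with probability at least $1-\varepsilon$. On that event we may substitute $h=\hat h_{1,n}$ into the uniform-in-$h$ score expansions of item 2.

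\emph{Part 1.} Interiority gives $\hat A_{1,n}=0$. The first-wave expansion in item 2 then yields $0=A_{1,n}-(1-\rho)J_1\hat h_{1,n}+o_{P_0}(1)$, i.e.\ $(1-\rho)J_1\hat h_{1,n}=A_{1,n}+o_{P_0}(1)$. The second-wave expansion, uniform over $d\in\Delta$ and $\|h\|\le K$, gives
\begin{align*}
\hat A_{2,n}(d)=A_{2,n}(d)-\rho J_2(d)\hat h_{1,n}+o_{P_0}(1)\quad\text{uniformly in } d.
\end{align*}
By item 3 and Assumption \ref{assumption:largewaves}, $\sup_d\|\hat J_n(d)-J(d)\|=o_{P_0}(1)$. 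Since $J(d)\succeq(1-\rho)J_1\succ0$, the inverses $J(d)^{-1}$ are uniformly bounded, and so $\sup_d\|\hat J_n(d)^{-1}-J(d)^{-1}\|=o_{P_0}(1)$. Moreover, $A_{2,n}(\cdot)$ is uniformly tight by Proposition \ref{proposition:lan}. Combining these facts, $\hat h_n^{\text{os}}(d)=\hat h_{1,n}+\hat J_n(d)^{-1}\hat A_{2,n}(d)$ equals
\begin{align*}
J(d)^{-1}\big(J(d)\hat h_{1,n}-\rho J_2(d)\hat h_{1,n}+A_{2,n}(d)\big)+o_{P_0}(1)=J(d)^{-1}\big(A_{1,n}+A_{2,n}(d)\big)+o_{P_0}(1),
\end{align*}
uniformly in $d$. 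This is the first claim.

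For weak convergence, consider the map $(a_1,a_2(\cdot))\mapsto J(\cdot)^{-1}(a_1+a_2(\cdot))$. It is continuous from $\R^\ell\times\ell^\infty(\Delta)$ to $\ell^\infty(\Delta)$, because $J(\cdot)$ is continuous on compact $\Delta$ with uniformly bounded inverse. By Proposition \ref{proposition:lan}, every subsequence of $(A_{1,n},A_{2,n}(\cdot))$ has a further subsequence converging to the same Gaussian limit. Hence the whole sequence converges, and the continuous mapping theorem delivers $\hat h_n^{\text{os}}(\cdot)\overset{0}{\leadsto}\hat h^{\text{os}}(\cdot)$.

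\emph{Part 2.} The first step is an algebraic simplification. By Lemma \ref{lemma:surrogateconvergence}(1), we have $n\hat V_n(\hat\beta_n(m))=D_V(B\hat h_{1,n},Bm)+o_P(1)$ uniformly on compacts. Set $m=\tilde h_n(d,\xi)=\hat h_{1,n}+\Sigma_n(d)^{1/2}\xi$, where $\Sigma_n(d)=\frac{n}{n_1}\hat J_{1,n}^{-1}-\hat J_n(d)^{-1}$. The term $V(B\hat h_{1,n})$ in $D_V$ does not depend on $d$, so it cancels in the centered criterion. The linear term $\Gamma(B\hat h_{1,n})'B\Sigma_n(d)^{1/2}\xi$ has $\E_\xi$-mean zero. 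Hence, up to the approximation error,
\begin{align*}
\hat L_n^c(d)\approx F(\hat h_{1,n},\Sigma_n(d))-F(\hat h_{1,n},\Sigma_n(d_0)),\qquad F(a,\Sigma)=\E_\xi\big[V(B(a+\Sigma^{1/2}\xi))\big].
\end{align*}
The limit $L^c_\infty$ has exactly the same form, with $(\hat h_{1,\infty},\Sigma(d))$ in place of $(\hat h_{1,n},\Sigma_n(d))$.

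The main obstacle is to make this approximation uniform in $d$ despite the unbounded $\xi$. We handle it by splitting the $\xi$-integral at $\|\xi\|\le M$. On the bounded part, we use Lemma \ref{lemma:surrogateconvergence}(1) on the compact set $\{\hat h_{1,n}+\Sigma_n(d)^{1/2}\xi\}$; this set lies in a fixed compact with high probability, since $\Sigma_n$ is uniformly bounded. On the tail, we bound both $n\hat V_n$ (Lemma \ref{lemma:estimatedqpstability}(2)) and $|V(b)|$ (Lemma \ref{lemma:qplipschitz}, with $V(0)=0$ and $\nabla V=\Gamma$ Lipschitz) by a constant times $\|\Sigma_n(d)^{1/2}\xi\|^2$, plus a term linear in $\xi$ that is handled the same way. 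Their contribution is therefore at most $C\,\E[\|\xi\|^2\1\{\|\xi\|>M\}]$, uniformly in $d$, and this vanishes as $M\to\infty$. These $o_{P_0}(1)$ statements remain $o_{P_h}(1)$ by the contiguity in Proposition \ref{proposition:lan}.

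Finally we pass to the limit under $h$. Item 3 gives $\sup_d\|\Sigma_n(d)-\Sigma(d)\|\to0$, where $\Sigma(d)=(1-\rho)^{-1}J_1^{-1}-J(d)^{-1}$. The matrix square root is continuous on the PSD cone, and $F$ is jointly continuous by dominated convergence and the quadratic growth of $V$. By Part 1 together with Le Cam's third lemma, $\hat h_{1,n}=(1-\rho)^{-1}J_1^{-1}A_{1,n}+o_P(1)$ converges under $h$ to $\hat h_{1,\infty}$, with $A_1\sim N((1-\rho)J_1h,(1-\rho)J_1)$. The map $a\mapsto F(a,\Sigma(\cdot))-F(a,\Sigma(d_0))$ into $\ell^\infty(\Delta)$ is continuous, so the continuous mapping theorem gives $\hat L_n^c(\cdot)\overset{h}{\leadsto}L_\infty^c(\cdot)$.
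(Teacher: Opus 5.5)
Your proposal is correct and follows essentially the same route as the paper: for item 1, the score equation $\hat A_{1,n}=0$, the uniform expansions of Assumption~\ref{assumption:plugin}, and uniform consistency of $\hat J_n$; for item 2, Lemma~\ref{lemma:surrogateconvergence}(1), the quadratic domination from Lemma~\ref{lemma:estimatedqpstability}(2), contiguity, and the continuous mapping theorem. Your additions are more explicit versions of steps the paper compresses into its uniform-integrability and centering arguments: localizing $\hat h_{1,n}$ to a $K$-ball before substituting it into the uniform-in-$h$ expansions, truncating at $\|\xi\|\le M$, and cancelling the $d$-free term and the mean-zero linear term of $D_V$ before passing to the limit.
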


\begin{proposition}
  \label{proposition:upperbound}
  Fix $K < \infty$.
  Suppose \Crefrange{assumption:largewaves}{assumption:plugin} hold.
  Then there exists some $\delta_{2,\infty,K}^\text{flat}$ in the limit experiment such that
  $\delta_{2,\infty,K}^\text{flat}$ minimizes $L_\infty(\cdot)$ and
  \begin{align*}
    \limsup_{n\to\infty} \mathcal{R}_{n,K}(\hat\delta_{2,n}, \hat c_n^\text{os}) 
    \leq \mathcal{R}_{\infty,K}(\delta_{2,\infty,K}^\text{flat}, c_\infty^\text{flat})
  \end{align*}
\end{proposition}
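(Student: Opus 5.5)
The plan is to combine the weak convergence of the criterion process from Lemma \ref{lemma:onestepconvergence} with an argmin continuous-mapping argument, and then to control the terminal regret through Lemma \ref{lemma:surrogateconvergence}.

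\textbf{Step 1: design limit via argmin.} Since $\hat\delta_{2,n}$ minimizes $\hat L_n(\cdot)$, it also minimizes $\hat L_n^c(\cdot)$. By Lemma \ref{lemma:onestepconvergence}(2), $\hat L_n^c \overset{h}{\leadsto} L_\infty^c$ in $\ell^\infty(\Delta)$, and $\Delta$ is compact. I will also carry $\hat h_n^\text{os}(\cdot)$ and $(A_{1,n},A_{2,n}(\cdot))$ jointly, with $\hat h_n^\text{os}(\cdot)\overset{0}{\leadsto}\hat h^\text{os}(\cdot)$ transferred to every $h$ by contiguity (Proposition \ref{proposition:lan}) and Le Cam's third lemma.

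Along a subsequence, the triple $(\hat L_n^c,\hat h_n^\text{os}(\cdot),\hat\delta_{2,n})$ is jointly tight, because $\Delta$ is compact. Any limit point $\delta^\text{flat}_{2,\infty,K}$ is then a minimizer of $L_\infty^c$, and hence of $L_\infty$. This follows from the standard argument for approximate minimizers of a uniformly convergent process evaluated at a random argument (a Skorokhod representation plus $\sup_d|\hat L_n^c-L_\infty^c|\to0$). No uniqueness is needed, because the claim only asserts existence of some minimizer.

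Because $L_\infty$ depends on the data only through $A_1$, the limit design is measurable in $(A_1,U)$; Theorem \ref{theorem:art} and Corollary \ref{corollary:art} identify it as a legitimate limit-experiment design $\phi(A_1,U)$. I expect this step to be the main obstacle. It requires the joint argmin/randomization construction to keep $\delta_{2}$ a function of $(A_1,U)$ only, i.e.\ independent of $A_2(\cdot)$. I would first try to achieve this by applying the vector-quantile construction of Theorem \ref{theorem:art} to the limit pair $(A_1,\delta^\text{flat})$.

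\textbf{Step 2: policy limit.} By the extended continuous mapping theorem and the continuity of $A_2$'s sample paths, $\hat h_n^\text{os}(\hat\delta_{2,n})\overset{h}{\leadsto}\hat h^\text{os}(\delta^\text{flat}_{2,\infty,K})$. Next I combine three facts: $\hat c_n^\text{os}=\hat\Gamma_n(\hat B_n\hat h_n^\text{os})$, Lemma \ref{lemma:surrogateconvergence}(2) on compacts, and the tightness of $\hat h_n^\text{os}$. Together these give $\hat c_n^\text{os}\overset{h}{\leadsto}c_\infty^\text{flat}=\Gamma(B\hat h^\text{os})$, using the Lipschitz continuity of $\Gamma$ from Lemma \ref{lemma:qplipschitz}.

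\textbf{Step 3: regret convergence and integration.} Decompose $R_n=R_n^s+R_n^o$. For the oracle term, Lemma \ref{lemma:valueapproximation} gives $R_n^o(h)\to -V(Bh)$ uniformly on $\|h\|\le K$. For the smooth term, I want $R_n^s(\hat c_n^\text{os},h)\to r(c_\infty^\text{flat},Bh)$ in distribution. The plan is to write $R_n^s$ through a Taylor expansion around a recovery-type argument, with the $\sqrt n D'c$ term controlled by feasibility; this is the analogue of Lemma \ref{lemma:surrogateconvergence}(3), evaluated at the random $m=\hat h_n^\text{os}$ with true parameter $h$. Concretely, I will show $R_n^s(\hat\Gamma_n(\hat B_n m),h)\to r(\Gamma(Bm),Bh)$ uniformly over $m,h$ in compacts, mirroring the Taylor computation in Proposition \ref{proposition:epiconvergence} with $c_n$ feasible.

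These pieces give $R_n(\hat c_n^\text{os},h)\overset{h}{\leadsto}R_\infty(c_\infty^\text{flat},Bh)$ for each $h$. Uniform integrability (Assumption \ref{assumption:plugin}.4) upgrades this to convergence of $\E_h$. Then dominated convergence in $h$ over the ball, together with $q_{n,K}\to q_{\infty,K}$ uniformly (Assumption \ref{assumption:prior}), yields the bound along the subsequence. Choosing the subsequence at the outset to attain the $\limsup$ gives the stated inequality.
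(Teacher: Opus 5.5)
\textbf{Where Step 3 fails.} Your skeleton matches the paper's: a subsequence attaining the $\limsup$, Corollary~\ref{corollary:art} for a limit design/policy, an argmin continuous-mapping step, $\hat h_n^\text{os}(\hat\delta_{2,n})\overset{h}{\leadsto}\hat h^\text{os}(\delta_{2,\infty,K}^\text{flat})$, $\hat c_n^\text{os}\overset{h}{\leadsto}\Gamma(B\hat h^\text{os})$, and then uniform integrability. But Step~3, as you describe it, does not give the direction the proposition needs.

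Expand $R_n^s(c_n,h)=\sqrt n D'c_n+c_n'Bh+\tfrac12 c_n'Hc_n+o(1)$ and control $\sqrt n D'c_n$ ``by feasibility'' as in Proposition~\ref{proposition:epiconvergence}. This yields only
\[
\sqrt n D'c_n=-\sqrt n\sum_{j\in\mathcal J_s}\lambda_{0j}\nabla g_j(\pi_0)'c_n\geq \tfrac12\sum_{j\in\mathcal J_s}\lambda_{0j}c_n'\nabla^2 g_j(\pi_0)c_n+o(1),
\]
that is, $\liminf R_n^s\geq r$. That is a lower bound on regret, while the proposition is an \emph{upper} bound.

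Knowing $\hat c_n^\text{os}\to\Gamma(B\hat h^\text{os})\in\mathcal G$ does not close this. It only gives $\nabla g_j(\pi_0)'\hat c_n^\text{os}\to 0$, and leaves $\sqrt n D'\hat c_n^\text{os}$ uncontrolled from above. What you need is that $\hat\pi_n^\text{os}$ stays on the strongly active constraints to second order. That comes from near-optimality of $\hat\Gamma_n$, not from feasibility. If your ``recovery-type argument'' means comparing $\hat\Gamma_n$ with a recovery sequence, that supplies the upper bound only for $R_n^s(\cdot,m)$ at the parameter $m$ fed into $\hat\Gamma_n$. This diagonal statement is exactly item~3 of Lemma~\ref{lemma:surrogateconvergence}, and it is not the off-diagonal claim at the true $h$ that you set out to prove.

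\textbf{The paper's fix.} The paper moves from $m=\hat h_n^\text{os}(\hat\delta_{2,n})$ to $h$ with a cross-derivative identity. By twice continuous differentiability of $W$, $R_n^s(c,m_1)-R_n^s(c,m_2)=c'B(m_1-m_2)+o(1)$ uniformly over compacts, because the parameter-independent $\sqrt n D'c$ and $\tfrac12 c'Hc$ terms cancel in the difference. With tightness of $(\hat c_n^\text{os},\hat h_n^\text{os})$ and item~3 at the random $m=\hat h_n^\text{os}$, this gives
\[
R_n^s(\hat c_n^\text{os},h)=V(B\hat h_n^\text{os})+\hat c_n^{\text{os}\prime}B(h-\hat h_n^\text{os})+o_{P_h}(1)\overset{h}{\leadsto} r(c_\infty^\text{flat},Bh),
\]
using $V(b_2)+\Gamma(b_2)'(b-b_2)=r(\Gamma(b_2),b)$. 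Replace your direct Taylor/feasibility computation with this.

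\textbf{Two smaller points.}
\begin{enumerate}
  \item The ``main obstacle'' you flag in Step~1 is not one. Corollary~\ref{corollary:art} already returns a limit-experiment pair with $\delta_2=\delta_2(A_1,U)$. Moreover, $A_{2,n}(\cdot)$ is exactly independent of $(D_{1,n},U)$ under $P_0$ by Assumption~\ref{assumption:potentialoutcomes}.
  \item Assumption~\ref{assumption:plugin}(4) is uniform integrability under the mixture over $h\sim q_{\infty,K}$. So pass to the limit under that joint law, via Vitali with $q_{n,K}/q_{\infty,K}\to 1$ uniformly on the ball. Do not argue pointwise in $h$ and then use dominated convergence: the assumption gives neither per-$h$ uniform integrability nor a dominating function in $h$.
\end{enumerate}
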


\begin{proof}
  Pass to a subsequence for which the $\limsup$ is attained.
  By Lemmas \ref{lemma:surrogateconvergence} and \ref{lemma:onestepconvergence},
  $\hat c_n^\text{os}$ is tight.
  Corollary \ref{corollary:art} then implies that
  there exists a further subsequence and $\delta_{2,\infty,K}^\text{flat}$ in the limit experiment
  such that for every $h$,
  \begin{align*}
    \left(
      \hat \delta_{2,n}, \hat c_n^\text{os}, A_{1,n}, A_{2,n}(\hat\delta_{2,n})
    \right) \overset{h}{\leadsto}
    \left(
      \delta_{2,\infty,K}^\text{flat}, c_\infty, A_1, A_2(\delta_{2,\infty,K}^\text{flat})
    \right).
  \end{align*}
  First, we show that $\delta_{2,\infty,K}^\text{flat}$ minimizes $L_\infty(\cdot)$.
  Define $f(l, d) = l(d) - \inf_{d\in \Delta} l(d)$.
  This is a continuous function of $(l,d)$ in $C(\Delta) \times \Delta$,
  and the limit process has almost surely continuous sample paths.
  Item 2 of Lemma \ref{lemma:onestepconvergence}, along with joint convergence of
  $\hat\delta_{2,n}$ and $A_{1,n}$, implies that
  \begin{align*}
    \left( \hat \delta_{2,n}, \hat L_n^c(\cdot) \right) \overset{h}{\leadsto} \left( \delta_{2,\infty,K}^\text{flat}, L_\infty^c(\cdot) \right).
  \end{align*}
  Since $\hat \delta_{2,n}$ minimizes $\hat L_n^c(\cdot)$,
  the extended continuous mapping theorem implies that
  $f(L_\infty^c, \delta_{2,\infty,K}^\text{flat}) = 0$,
  and so $\delta_{2,\infty,K}^\text{flat}$ minimizes $L_\infty(\cdot)$.

  Item 1 of Lemma \ref{lemma:onestepconvergence} implies that
  \begin{align*}
    \hat h_n^\text{os}(\hat \delta_{2,n}) - J(\hat \delta_{2,n})^{-1} (A_{1,n} + A_{2,n}(\hat \delta_{2,n})) = o_{P_0}(1)
  \end{align*}
  Contiguity as established in Proposition \ref{proposition:lan} then implies that this is $o_{P_h}(1)$ for every $h$.
  Thus,
  \begin{align*}
    \hat h_n^\text{os}(\hat\delta_{2,n}) \overset{h}{\leadsto} J(\delta_{2,\infty,K}^\text{flat})^{-1} (A_1 + A_2(\delta_{2,\infty,K}^\text{flat})) = \hat h^\text{os}(\delta_{2,\infty,K}^\text{flat}).
  \end{align*}

  Items 2 and 3 of Lemma \ref{lemma:surrogateconvergence} imply that
  \begin{align*}
    \| \hat c_n^\text{os} - \Gamma(B \hat h_n^\text{os}(\hat \delta_{2,n})) \| &= o_{P_h}(1) \\
    R_n^s( \hat c_n^\text{os}, \hat h_n^\text{os}(\hat\delta_{2,n})) &= V(B \hat h_n^\text{os}(\hat\delta_{2,n})) + o_{P_h}(1)
  \end{align*}
  which implies that 
  $(\hat h_n^\text{os}(\hat \delta_{2,n}), \hat c_n^\text{os}) \overset{h}{\leadsto} (\hat h^\text{os}(\delta_{2,\infty,K}^\text{flat}), \Gamma(B \hat h^\text{os}(\delta_{2,\infty,K}^\text{flat})))$
  which identifies the limiting policy as
  $c_\infty = c_\infty^\text{flat} = \Gamma(B \hat h^\text{os}(\delta_{2,\infty,K}^\text{flat}))$ almost surely.

  Twice continuous differentiability of $W$ implies that
  \begin{align*}
    \sup_{c \in C} \sup_{m_1, m_2 \in M} | R_n^s(c, m_1) - R_n^s(c,m_2) - c'B(m_1 - m_2) | \to 0
  \end{align*}
  for every compact $C$ and $M$.
  Since $(\hat c_n^\text{os}, \hat h_n^\text{os}(\hat\delta_{2,n}))$ is tight, this implies that
  \begin{align*}
    R_n^s(\hat c_n^\text{os}, h) &= R_n^s(\hat c_n^\text{os}, \hat h_n^\text{os}(\hat\delta_{2,n})) + \hat c_n^{\text{os}\prime} B (h - \hat h_n^\text{os}(\hat\delta_{2,n})) + o_{P_h}(1) \\
    &= V(B \hat h_n^\text{os}(\hat\delta_{2,n})) + \hat c_n^{\text{os}\prime} B (h - \hat h_n^\text{os}(\hat\delta_{2,n})) + o_{P_h}(1) \\
    &\overset{h}{\leadsto} r(c_\infty^\text{flat}, Bh)
  \end{align*}
  by the continuous mapping theorem.
  Since $R_n^o(h) \to - V(Bh)$ by Lemma \ref{lemma:valueapproximation}, we have
  $R_n(\hat c_n^\text{os}, h) \overset{h}{\leadsto} R_\infty(c_\infty^\text{flat}, B h)$ for every $h$.
  The uniform integrability condition of Assumption \ref{assumption:plugin} then implies that
  \begin{align*}
    \lim_{n\to\infty} \mathcal{R}_{n,K}(\hat\delta_{2,n}, \hat c_n^\text{os}) = \mathcal{R}_{\infty,K}(\delta_{2,\infty,K}^\text{flat}, c_\infty^\text{flat})
  \end{align*}
  along our subsequence, establishing the $\limsup$ along the full sequence.
 \end{proof}

 \subsection{Taking the large-\texorpdfstring{$K$}{K} limit}

It remains to close the gap between the upper bound given in Proposition \ref{proposition:upperbound} 
and the lower bound given in Theorem \ref{theorem:lowerbound}.
Define
\begin{align*}
  \hat b_{1,K} = \E_K[Bh \mid A_1],
  &&
  \hat b_{1,\infty} = B \hat h_{1,\infty},
  &&
  \hat b_{2,\infty}(d) = B \hat h^\text{os}(d).
\end{align*}
The next results show that the posterior means and losses
of the $K$-truncated limit problem
converge to their flat-prior counterparts as $K \to \infty$.

\begin{lemma}
  \label{lemma:limitmean}
  Suppose Assumption \ref{assumption:information} holds.
  Then for any sequence of designs $\delta_K$ in the limit experiment,
  \begin{align*}
    \E_K \left[\| \hat b_{1,K} - \hat b_{1,\infty}\|^2 \right] \to 0
    \\
    \E_K \left[\| \hat b_{2,K}(\delta_K) - \hat b_{2,\infty}(\delta_K) \|^2 \right] \to 0
    .
  \end{align*}
\end{lemma}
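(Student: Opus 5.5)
We need to show that, under the uniform prior on the ball of radius $K$, the Bayes posterior mean of $b=Bh$ approaches the flat-prior (GLS) estimate in mean square. We work with the Gaussian likelihood in the limit experiment.

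\emph{Reduction to posterior means of $h$.} Since $B$ is fixed, it suffices to show $\E_K[\|\E_K[h\mid A_1]-\hat h_{1,\infty}\|^2]\to 0$, and similarly for the two-wave posterior mean $\E_K[h\mid A_1,A_2(\delta_K),U]$ against $\hat h^\text{os}(\delta_K)=J(\delta_K)^{-1}(A_1+A_2(\delta_K))$. Conditional on $(A_1,U)$, the design $\delta_K=\phi_K(A_1,U)$ is fixed. The likelihood in $h$ given $(A_1,A_2(\delta_K))$ is then proportional to
\begin{align*}
\exp\left(-\tfrac12 (h-\hat h)'\,\Sigma^{-1}(h-\hat h)\right),
\end{align*}
where $\hat h=\hat h^\text{os}(\delta_K)$ and $\Sigma^{-1}=J(\delta_K)$ (or $(1-\rho)J_1$ after one wave). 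These precision matrices are bounded below by $(1-\rho)J_1\succ 0$ uniformly over designs, by Assumption~\ref{assumption:information}, and bounded above because $J_2(\cdot)$ is continuous on the compact set $\Delta$. The posterior under $q_{\infty,K}$ is therefore the Gaussian $N(\hat h,\Sigma)$ truncated to the ball $\{\|h\|\le K\}$, and we can write the error as $\E[X\mid X\in \mathcal{B}_K]-\hat h$ with $X\sim N(\hat h,\Sigma)$ and $\mathcal{B}_K$ the ball of radius $K$.

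\emph{Pointwise bound.} For a truncated Gaussian,
\begin{align*}
\|\E[X\mid X\in\mathcal{B}_K]-\hat h\|\le \frac{\E[\|X-\hat h\|\,\1\{X\notin\mathcal{B}_K\}]}{P(X\in\mathcal{B}_K)}.
\end{align*}
This holds because $\E[X-\hat h]=0$, so $\E[(X-\hat h)\1_{\mathcal{B}_K}]=-\E[(X-\hat h)\1_{\mathcal{B}_K^c}]$. Because $\Sigma$ is uniformly bounded, both the numerator and $1-P(X\in\mathcal{B}_K)$ are small whenever $\|\hat h\|\le K-R$ for large $R$. Concretely, when $\|\hat h\|\le K-\sqrt K$, the numerator is at most $C e^{-cK}$ and the denominator is at least $1/2$. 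Uniformity over $\delta_K$ comes entirely from the uniform eigenvalue bounds, which is why an arbitrary sequence of designs is allowed.

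\emph{Integrating over $h$ and the data.} Under $\E_K$, $h$ is uniform on $\mathcal{B}_K$ and $\hat h-h$ is Gaussian with uniformly bounded covariance, so $\hat h$ lies near $h$. We split on the event $E_K=\{\|\hat h\|\le K-\sqrt K\}$.
\begin{itemize}
\item On $E_K$, the squared error is exponentially small.
\item The complement $E_K^c$ requires either $\|h\|>K-2\sqrt K$, which has uniform-prior probability $O(K^{-1/2})$ because the volume of the shell relative to the ball is $O(\sqrt K/K)$, or $\|\hat h-h\|>\sqrt K$, which has probability $O(e^{-cK})$.
\end{itemize}
The main obstacle is bounding the squared error on $E_K^c$, since the crude bound $\|\hat h\|+K$ has a square of order $K^2$ and would overwhelm the $K^{-1/2}$ probability. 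Instead we use the sharper bound that the truncated mean lies within $O(1+(\|\hat h\|-K)_+)$ of $\hat h$. Let $\hat h_\perp$ denote the component of $X$ normal to the sphere near $\hat h$. Truncating a Gaussian to the ball moves the mean by at most the inverse Mills-type ratio in the direction normal to the sphere, which is $O(1+\mathrm{dist}(\hat h,\mathcal{B}_K))$ since the ball is locally approximately a half-space at scale $K$. Moreover, $\mathrm{dist}(\hat h,\mathcal{B}_K)\le\|\hat h-h\|$, which has bounded second moments. Hence the contribution of $E_K^c$ is $O(1)\cdot P(\|h\|>K-2\sqrt K)+o(1)=O(K^{-1/2})\to 0$. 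If the half-space comparison proves delicate, we can fall back on log-concavity: the truncated posterior is strongly log-concave with covariance bounded by $\Sigma$, and Brascamp–Lieb gives the $O(1)$ bound on the deviation of its mean in the direction normal to the sphere. Applying the same argument with only the first wave yields the first display of the lemma.
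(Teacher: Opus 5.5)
Your architecture matches the paper's: the same truncated-Gaussian inequality on the event that the flat estimate $\hat h$ lies well inside $\mathcal{B}_K$, plus a boundary shell of prior mass $O(M/K)$. You take $M=\sqrt K$; the paper fixes $M$, lets $K\to\infty$, then $M\to\infty$. The gap is exactly at the step you flag as the main obstacle. The bound $\|\E_K[h\mid A_1,A_2,U]-\hat h\|=O(1+\text{dist}(\hat h,\mathcal{B}_K))$ is asserted rather than proved. The half-space/inverse-Mills comparison is only a heuristic: nothing controls the curvature correction uniformly over $\hat h$ and over anisotropic $J(\delta_K)$. The Brascamp--Lieb fallback does not deliver the bound either. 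Brascamp--Lieb bounds the \emph{covariance} of the truncated posterior, i.e.\ its spread about its own mean. It says nothing about where that mean sits relative to $\hat h$, which is what you need. The bound is in fact true, but it needs its own argument. For example, integrate $(X-x^*)'J(\delta_K)(X-\hat h)$ by parts against the truncated density, with $x^*$ the constrained mode, and use convexity of the ball to sign the boundary term; this gives $\E\|X-x^*\|^2\le \ell/\lambda_{\min}(J(\delta_K))$. You would then still have to compare $x^*$ with $\hat h$, which in Euclidean norm also requires an upper bound on $J(\delta_K)$.

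None of this is needed. Since $\hat h$ is a function of the data, $\E_K[h\mid A_1,A_2,U]-\hat h=\E_K[h-\hat h\mid A_1,A_2,U]$, so conditional Jensen gives
\[
\E_K\big\|\E_K[h\mid A_1,A_2,U]-\hat h\big\|^4\le \E_K\|h-\hat h\|^4 .
\]
The right side is bounded uniformly in $K$ and in the design sequence. Indeed, $\hat h-h=J(\delta_K)^{-1}\{(A_1-(1-\rho)J_1h)+(A_2-\rho J_2(\delta_K)h)\}$ is $J(\delta_K)^{-1}$ times Gaussian noise, with $J(\delta_K)\succeq(1-\rho)J_1$ and $\rho J_2(\delta_K)\preceq J(\delta_K)$. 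Cauchy--Schwarz then bounds the contribution of $E_K^c$ by $(\E_K\|\E_K[h\mid A_1,A_2,U]-\hat h\|^4)^{1/2}P_K(E_K^c)^{1/2}=O(K^{-1/4})$, with no geometry of the truncated Gaussian at all. This is how the paper closes the argument. It uses only $J(\delta_K)\succeq(1-\rho)J_1$, so it also avoids the upper bound on $J_2$ that you take from continuity, which Assumption~\ref{assumption:information} alone does not provide.
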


\begin{lemma}
  \label{lemma:limitloss}
  Suppose Assumptions \ref{assumption:information} and \ref{assumption:nlp} hold.
  Then for every sequence of designs $\delta_K$ in the limit experiment,
  \begin{align*}
    \E_K\left| 
      L_\infty^K(\delta_K) - V(\hat b_{1,K}) - L_\infty(\delta_K) + V(\hat b_{1,\infty})
    \right| \to 0.
  \end{align*}
\end{lemma}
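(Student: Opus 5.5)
The plan is to rewrite both centered losses as conditional expectations of the Bregman-type remainder $D_V$. I would then compare them in two stages: first replace the truncated-prior posterior means by their flat-prior counterparts, then replace the truncated-prior predictive law by the Gaussian one.

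\emph{Step 1 (representation via $D_V$).} By Lemma \ref{lemma:qplipschitz}, $V$ is $C^1$ with $\nabla V=\Gamma$, and $\Gamma$ is $\kappa^{-1}$-Lipschitz. Hence $0\le D_V(b_1,b_2)\le (2\kappa)^{-1}\|b_2-b_1\|^2$; nonnegativity holds because $V$ is concave as a minimum of functions affine in $b$, taken with the appropriate sign. Also
\begin{align*}
|D_V(a,b)-D_V(a',b')|\le C\left(\|a-a'\|+\|b-b'\|\right)\left(\|b-a\|+\|b'-a'\|+\|a-a'\|+\|b-b'\|\right).
\end{align*}
Since $\hat b_{1,K}=\E_K[\hat b_{2,K}(\delta_K)\mid A_1,U]$ by the tower property, with $\delta_K$ a function of $(A_1,U)$, we get
\begin{align*}
L_\infty^K(\delta_K)-V(\hat b_{1,K})=\E_K\left[D_V(\hat b_{1,K},\hat b_{2,K}(\delta_K))\mid A_1,U\right].
\end{align*}
Likewise, $L_\infty(\delta_K)-V(\hat b_{1,\infty})=\E_\xi[D_V(\hat b_{1,\infty},\hat b_{1,\infty}+B\Sigma(\delta_K)^{1/2}\xi)]$, where $\Sigma(d)=(1-\rho)^{-1}J_1^{-1}-J(d)^{-1}$. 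The linear term vanishes because $\xi$ has mean zero.

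\emph{Step 2 (swap posterior means).} Using the local Lipschitz bound on $D_V$, Cauchy--Schwarz, and Lemma \ref{lemma:limitmean}, I would show
\begin{align*}
\E_K\left|D_V(\hat b_{1,K},\hat b_{2,K}(\delta_K))-D_V(\hat b_{1,\infty},\hat b_{2,\infty}(\delta_K))\right|\to0.
\end{align*}
This needs $\E_K\|\hat b_{2,\infty}(\delta_K)-\hat b_{1,\infty}\|^2$ to be bounded uniformly in $K$. That follows because, given $h$, both quantities are Gaussian with mean within $O(\|h\|)$ of each other and with covariances bounded uniformly in $d$, since $J(d)\succeq(1-\rho)J_1\succ0$ and $\Delta$ is compact. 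The $\E_K$-second moments of the truncated-prior means are handled by Lemma \ref{lemma:limitmean} itself, together with the triangle inequality.

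\emph{Step 3 (swap predictive laws; the main obstacle).} It remains to show
\begin{align*}
\E_K\left|\E_K[D_V(\hat b_{1,\infty},\hat b_{2,\infty}(\delta_K))\mid A_1,U]-\E_\xi[D_V(\hat b_{1,\infty},\hat b_{1,\infty}+B\Sigma(\delta_K)^{1/2}\xi)]\right|\to0.
\end{align*}
Conditioning on $(A_1,U,h)$, set $f(A_1,U,h)=\E[D_V(\cdot)\mid A_1,U,h]$, so that $f\le C(1+\|h-\hat h_{1,\infty}\|^2)$. Then the first term is $\int f\,\pi_K(h\mid A_1)\,dh$, where $\pi_K$ is the $N(\hat h_{1,\infty},(1-\rho)^{-1}J_1^{-1})$ posterior truncated to the ball $\|h\|\le K$. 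Under the flat prior, the posterior predictive of $\hat b_{2,\infty}-\hat b_{1,\infty}$ is exactly $N(0,B\Sigma B')$. So the second term equals $\int f\,\varphi(h\mid A_1)\,dh$, with $\varphi$ the untruncated Gaussian posterior.

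The difference is therefore bounded by $\int f\,|\pi_K-\varphi|$. Writing $\pi_K=\varphi\1\{\|h\|\le K\}/\varphi(\|h\|\le K)$, this is controlled by $C\,\varphi(\|h\|>K\mid A_1)^{1/2}$ times a Gaussian fourth moment. Averaging over $A_1$ under the $K$-uniform prior, $\E_K[\varphi(\|h\|>K\mid A_1)]$ is the probability that a Gaussian centered at a uniform point of the $K$-ball exits the ball. This is $O(1/K)$, driven by a boundary shell of width $O(\sqrt{\log K})$, and so tends to $0$. This averaging step is where the argument is most delicate, and I would handle it with this explicit boundary-shell estimate. Combining Steps 2 and 3 with the triangle inequality gives the lemma.
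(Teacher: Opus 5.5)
Your proposal is correct and follows essentially the same route as the paper. Both arguments write the centered losses as conditional expectations of $D_V$ via the martingale property, swap posterior means using Lemma \ref{lemma:limitmean} together with the Lipschitz bound on $\Gamma$, and then swap predictive laws by bounding the total-variation distance between the truncated and untruncated first-wave posteriors (equivalently, conditioning on $h$) plus Cauchy--Schwarz, with a boundary-shell estimate for the expected TV under the uniform prior.

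Two small slips need fixing, though neither affects the argument. First, concavity of $V$ gives $-\tfrac{1}{2\kappa}\|b_2-b_1\|^2\le D_V\le 0$, not $D_V\ge 0$. Second, the uniform-in-$K$ bound in Step 2 holds because $\hat b_{2,\infty}(\delta_K)-\hat b_{1,\infty}$ has conditional second moment $O(1+\|h-\hat h_{1,\infty}\|^2)$. An $O(\|h\|)$ mean gap, as you wrote it, would only give an $O(K^2)$ bound.
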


\begin{proposition}
  \label{proposition:limitgap}
  Suppose Assumptions \ref{assumption:information} and \ref{assumption:nlp} hold.
  Let $\delta_{2,\infty,K}^\text{flat}$ be a sequence of feasible designs,
  possibly depending on $K$,
  each of which minimizes $L_\infty(\cdot)$ in the limit experiment.
  Then $\mathcal{R}_{\infty,K}^{**}$ is finite and
  \begin{align*}
    \lim_{K\to\infty} \left\{ 
      \mathcal{R}_{\infty,K}(\delta_{2,\infty,K}^\text{flat}, c_\infty^\text{flat}) - \mathcal{R}_{\infty,K}^{**}
    \right\}
    =0
    .
  \end{align*}
\end{proposition}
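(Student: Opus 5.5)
We compare both $\mathcal{R}_{\infty,K}(\delta_{2,\infty,K}^\text{flat}, c_\infty^\text{flat})$ and $\mathcal{R}_{\infty,K}^{**}$ to a common flat-prior benchmark, using the Bayes structure of the limit experiment. Since $R_\infty(c,b) = r(c,b) - V(b)$ and $r$ is linear in $b$, the $K$-posterior risk of any policy $c$ given $(A_1,U,A_2)$ is $r(c,\hat b_{2,K}) - \E_K[V(Bh)\mid\cdot]$. The Bayes optimal policy is $\Gamma(\hat b_{2,K})$, and the term $\E_K[V(Bh)]$ does not depend on the design or the policy. Hence
\begin{align*}
  \mathcal{R}_{\infty,K}(\delta,\Gamma(\hat b_{2,\infty}(\delta))) - \mathcal{R}_{\infty,K}^*(\delta)
  = \E_K\big[ r(\Gamma(\hat b_{2,\infty}(\delta)),\hat b_{2,K}(\delta)) - r(\Gamma(\hat b_{2,K}(\delta)),\hat b_{2,K}(\delta))\big],
\end{align*}
and by item 3 of Lemma \ref{lemma:qplipschitz} this lies between $0$ and $\kappa^{-1}\E_K\|\hat b_{2,\infty}(\delta)-\hat b_{2,K}(\delta)\|^2$. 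Lemma \ref{lemma:limitmean} makes this vanish uniformly along any design sequence $\delta_K$, and in particular along $\delta_{2,\infty,K}^\text{flat}$. Finiteness of $\mathcal{R}_{\infty,K}^{**}$ follows because $0\le R_\infty(c,b)$ and, for example, $R_\infty(\Gamma(\hat b_{1,\infty}),Bh)\le\kappa^{-1}\|B\|^2\|\hat h_{1,\infty}-h\|^2$, whose expectation is finite.

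This reduces the problem to comparing designs. We must show
\begin{align*}
  \E_K\big[L_\infty^K(\delta_{2,\infty,K}^\text{flat})\big] - \E_K\big[L_\infty^K(\delta_{2,K}^*)\big] \to 0 .
\end{align*}
This is enough because the preposterior Bayes risk of a design $\delta$ equals $\E_K[V(\hat b_{2,K}(\delta))]$ minus a design-free constant. We use the optimality of each design on its own criterion. By definition $\delta_{2,K}^*$ minimizes $L_\infty^K$ pointwise in $(A_1,U)$, so the difference above is $\ge 0$. In the other direction,
\begin{align*}
  L_\infty^K(\delta^\text{flat}) - L_\infty^K(\delta^*)
  \le \big[L_\infty^K(\delta^\text{flat}) - L_\infty(\delta^\text{flat})\big] + \big[L_\infty(\delta^*) - L_\infty^K(\delta^*)\big],
\end{align*}
since $L_\infty(\delta^\text{flat})\le L_\infty(\delta^*)$. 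Adding and subtracting $V(\hat b_{1,K}) - V(\hat b_{1,\infty})$ in each bracket, the two added terms cancel. Each bracket is then exactly the quantity controlled by Lemma \ref{lemma:limitloss}, applied once with $\delta_K=\delta_{2,\infty,K}^\text{flat}$ and once with $\delta_K=\delta_{2,K}^*$. Taking $\E_K$, both brackets go to zero in $L^1$.

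The main obstacle I expect is the bookkeeping around measurability and randomization, more than any hard estimate. The designs are functions of $(A_1,U)$, and $\delta_{2,K}^*$ depends on $K$, which is why Lemmas \ref{lemma:limitmean} and \ref{lemma:limitloss} are stated for arbitrary sequences $\delta_K$. A measurable minimizer of $L_\infty^K$ exists by compactness of $\Delta$ and continuity of $J_2$; if it did not, I would replace $\delta_{2,K}^*$ by an $\varepsilon$-minimizer and let $\varepsilon\to0$. The identity linking the preposterior risk to $\E_K[V(\hat b_{2,K})]$ needs integrability, which Lemma \ref{lemma:qplipschitz} supplies through the quadratic growth of $V$ and bounded $h$.

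Combining the two steps gives $0\le \mathcal{R}_{\infty,K}(\delta^\text{flat},c^\text{flat}_\infty)-\mathcal{R}_{\infty,K}^{**}\le o(1)+o(1)$.
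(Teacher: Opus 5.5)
Your proposal is correct and follows essentially the same route as the paper. The policy gap is bounded by $\kappa^{-1}\E_K\|\hat b_{2,\infty}-\hat b_{2,K}\|^2$ via item 3 of Lemma~\ref{lemma:qplipschitz} and Lemma~\ref{lemma:limitmean}, and the design gap is handled by applying Lemma~\ref{lemma:limitloss} to both $\delta_{2,\infty,K}^\text{flat}$ and a (near-)optimal design, using that $\delta_{2,\infty,K}^\text{flat}$ minimizes $L_\infty$ so the $V(\hat b_{1,K})-V(\hat b_{1,\infty})$ terms cancel. The only differences are cosmetic: the paper certifies finiteness with the policy $c=0$ and $-V(Bh)\le\|Bh\|^2/(2\kappa)$ rather than $\Gamma(\hat b_{1,\infty})$, and it works directly with an $\varepsilon_K$-optimal design rather than an exact minimizer.
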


\begin{proof}
  We first establish that $\mathcal{R}_{\infty,K}^{**}$ is finite.
  For any $c \in \mathcal{G}$, we have $R_\infty(c,B h) = r(c,B h) - V(B h) \geq 0$.
  The feasible policy $c=0$ has regret
  $R_\infty(0, B h) = - V(Bh) \leq \frac{1}{2\kappa} \|Bh\|^2$
  by Assumption \ref{assumption:nlp}.
  Since $q_{\infty,K}$ has compact support, it has finite second moments, and therefore
  $\mathcal{R}_{\infty,K}^{**} \leq \int \frac{1}{2\kappa} \|Bh\|^2 q_{\infty,K}(h) dh < \infty$.

  Choose an admissible design sequence $\delta_{2,\infty,K}$ satisfying $\mathcal{R}_{\infty,K}^*(\delta_{2,\infty,K}) \leq \mathcal{R}_{\infty,K}^{**} + \varepsilon_K$
  for some $\varepsilon_K \to 0$.
  By this choice of $\delta_{2,\infty,K}$ and Lemma \ref{lemma:limitloss},
  \begin{align*}
    0 &\leq \E_K[ L_\infty^K(\delta_{2,\infty,K}^\text{flat}) - L_\infty^K(\delta_{2,\infty,K}) ] + \varepsilon_K \\
    &\leq \E_K[ L_\infty(\delta_{2,\infty,K}^\text{flat}) - L_\infty(\delta_{2,\infty,K}) ] + o(1) + \varepsilon_K 
    \to 0
  \end{align*}
  which shows that $0 \leq \mathcal{R}_{\infty,K}^*(\delta_{2,\infty,K}^\text{flat}) - \mathcal{R}_{\infty,K}^{**} \to 0$ as $K \to \infty$.
  Lemma \ref{lemma:qplipschitz} then implies that
  \begin{align*}
    0 &\leq \mathcal{R}_{\infty,K}(\delta_{2,\infty,K}^\text{flat}, c_\infty^\text{flat}) - \mathcal{R}_{\infty,K}^{*}(\delta_{2,\infty,K}^\text{flat}) \\
    &\leq \kappa^{-1} \E_K[\| \hat b_{2,\infty}(\delta_{2,\infty,K}^\text{flat}) - \hat b_{2,K}(\delta_{2,\infty,K}^\text{flat}) \|^2]
  \end{align*}
  which converges to zero by Lemma \ref{lemma:limitmean}.
  Combining the previous two displays gives the desired result.
\end{proof}

\begin{proof}[Proof of Theorem \ref{theorem:optimality}]
  By Theorem \ref{theorem:lowerbound},
  \begin{align*}
    \limsup_{n\to\infty} \mathcal{R}_{n,K}(\hat\delta_{2,n}, \hat c_n^\text{os}) \geq
    \liminf_{n\to\infty} \mathcal{R}_{n,K}(\hat\delta_{2,n}, \hat c_n^\text{os}) \geq \mathcal{R}_{\infty,K}^{**}.
  \end{align*}
  By Proposition \ref{proposition:upperbound},
  the left-hand side is bounded from above by $\mathcal{R}_{\infty,K}(\delta_{2,\infty,K}^\text{flat}, c_\infty^\text{flat})$
  where the limiting design $\delta_{2,\infty,K}^\text{flat}$ may depend on $K$.
  Thus,
  \begin{align*}
    0 \leq \limsup_{n\to\infty} \mathcal{R}_{n,K}(\hat\delta_{2,n}, \hat c_n^\text{os}) - \mathcal{R}_{\infty,K}^{**}
    \leq \mathcal{R}_{\infty,K}(\delta_{2,\infty,K}^\text{flat}, c_\infty^\text{flat}) - \mathcal{R}_{\infty,K}^{**}
  \end{align*}
  which converges to zero as $K \to \infty$ by Proposition \ref{proposition:limitgap}.
\end{proof}

\clearpage
\newrefsection
\makeatletter
\begin{center}
    {\normalfont\normalsize
        \MakeUppercase{Supplementary Appendix to}\\
        \MakeUppercase{``\@title''}\par
    }
    \vspace{0.5cm}
    {\normalfont\normalsize\scshape
        \@author\par
    }
    {\normalfont\footnotesize
        \@institute\par
        \@date\par
    }
\end{center}
\makeatother

\section{Proofs of Auxiliary Results}
\label{appendix:auxiliary}
\begin{proof}[Proof of Lemma \ref{lemma:lindeberg}]
  We prove the result for the second wave; the first is identical with a fixed design.
  Fix a direction $v \in \R^\ell$ and define $S = v' \psi(y \mid z, x)$.
  Define the DQM remainder
  \begin{align*}
    \omega_{z,x}(h) &= \sqrt{p_{y\mid z,x}(y \mid z,x; \theta_0 + h)}
    - \sqrt{p_{y\mid z,x}(y \mid z,x; \theta_0)} 
    \\ &\quad - \frac{1}{2} h' \psi(y \mid z, x) \sqrt{p_{y\mid z,x}(y \mid z,x; \theta_0)}
    .
  \end{align*}
  Fix $t>0$. On the event $S > 4/t$, $(t/2) S - 1 \geq (t/4) S$.
  Nonnegativity of densities implies that the DQM remainder satisfies
  \begin{align*}
    \omega_{z,x}(- t v) &=
    \sqrt{p_{y\mid z,x}(y \mid z,x; \theta_0 - tv)} + \left( -1 + \frac{t}{2} S \right) \sqrt{p_{y\mid z,x}(y \mid z,x; \theta_0)}
    \\ &\geq \frac{t}{4}|S|\sqrt{p_{y\mid z, x}(y \mid z, x; \theta_0)}
    .
  \end{align*}
  Thus,
  \begin{align*}
    \| \omega_{z,x}(- t v) \|^2 \geq \frac{t^2}{16} \E_0\left[
      \|S\|^2 \1[S > 4/t]
    \right]
    .
  \end{align*}
  Likewise, by considering the event $S < -4/t$, we have
  \begin{align*}
    \| \omega_{z,x}(tv) \|^2 \geq \frac{t^2}{16} \E_0\left[
      \|S\|^2 \1[S < -4/t]
    \right]
  \end{align*}
  and therefore
  \begin{align*}
    \sup_{z,x} \E_0\left[
      \|S\|^2 \1[\|S\| > 4/t]
    \right] \leq \frac{16}{t^2} \left( \sup_{z,x} \| \omega_{z,x}(- t v) \|^2 + \sup_{z,x} \| \omega_{z,x}(t v) \|^2 \right)
  \end{align*}
  which is $o(1)$ by Assumption \ref{assumption:dqm}.
  Since the score is finite-dimensional, this directional bound implies that for any $d\in\Delta$,
  $\E_0\left[\| \psi_i(d) \|^2 \1[\| \psi_i(d) \| > \varepsilon \sqrt{n}]\right] \to 0$
  uniformly in $z,x$ and the result follows.
\end{proof}

\begin{proof}[Proof of Lemma \ref{lemma:loglik}]
    In this proof, we will use the shorthand
    $p_n(\cdot \mid \cdot, \cdot) :=
    p_{y\mid z, x}(\cdot \mid \cdot, \cdot; \theta_0 + h/\sqrt{n})$ 
    and
    $p(\cdot \mid \cdot, \cdot) 
    := p_{y\mid z, x}(\cdot \mid \cdot, \cdot; \theta_0) $.
    We will also suppress the dependence of 
    $y_i$ and $z_i$ on $d$ to aid readability.

    Define the random variable 
    $S_{n,i} = 2 (\sqrt{p_n/p}(y_i \mid z_i, x_i) - 1 )$
    which is well-defined $P_0$-almost everywhere.
    We write a Taylor expansion of the log-likelihood ratio in terms of $S_{ni}$:
    \begin{align}
        \label{eq:taylor}
        \sum_{i \in \mathcal{I}_{2,n}}
        \log \frac{p_n}{p}(y_i \mid z_i, x_i )
        = \sum_{i \in \mathcal{I}_{2,n}} \left(
            S_{ni}
            - \frac{1}{4} S_{ni}^2 
            + \frac{1}{2} S_{ni}^2 \text{rem}(S_{ni})
        \right)
    \end{align}
    where $\text{rem}(s)\to 0 $ as $s \to 0$.

    First, observe that
    \begin{align*}
      \E\left[
        \sum_{i \in \mathcal{I}_{2,n}} S_{ni}
      \right]
      &=
      - \sum_{i \in \mathcal{I}_{2,n}} \E\left[
        \int (\sqrt{p_n} - \sqrt{p})^2 d\mu
      \right]
      \\
      &=
      - \sum_{i \in \mathcal{I}_{2,n}} \E\left[
        \int \left( \frac{1}{2\sqrt{n}} h' \psi_i(d) \sqrt{p} \right)^2 d\mu
        + s_n(z_i, x_i)
      \right]
      \\
      &\to
      - \frac{1}{4} h' \rho J_2(d) h
    \end{align*}
    where the last line follows from Assumptions \ref{assumption:largewaves}, \ref{assumption:dqm}, and \ref{assumption:information}.
    Next, observe that
    \begin{align*}
      &\Var\left(
        \sum_{i \in \mathcal{I}_{2,n}} \left(
          S_{ni} - \frac{1}{\sqrt{n}} h' \psi_i(d)
        \right)
      \right)
      \\& \leq 
      \sum_{i \in \mathcal{I}_{2,n}} \E\left[
        \left( S_{ni} - \frac{1}{\sqrt{n}} h' \psi_i(d) \right)^2
      \right]
      \\
      &= 4 \sum_{i \in \mathcal{I}_{2,n}} \E\left[
        \int \left( \sqrt{p_n} - \sqrt{p} - \frac{1}{2\sqrt{n}} h' \psi_i(d) \sqrt{p} \right)^2 d\mu
      \right]
      \\
      &\to 0
    \end{align*}
    again by DQM.
    Combining the previous two displays, we have that
    the mean and variance of
    \begin{align*}
      \sum_{i \in \mathcal{I}_{2,n}} S_{ni}
      - \frac{1}{\sqrt{n}} \sum_{i \in \mathcal{I}_{2,n}} h' \psi_i(d)
      + \frac{1}{4} h' \rho J_2(d) h
      \to 0
    \end{align*}
    and hence this expression converges to zero in probability.

    Now define $U_{ni} = \sqrt{n} S_{ni} - h' \psi_i(d)$.
    We have that
    \begin{align*}
      &\E\left| \frac{1}{n} \sum_{i \in \mathcal{I}_{2,n}} \left( 2 h' \psi_i(d) U_{ni} + U_{ni}^2 \right) \right|\\
      &\leq
      2 \left(
        \frac{1}{n} \sum_{i \in \mathcal{I}_{2,n}} \E (h' \psi_i(d))^2
      \right)^{1/2}
      \left(
        \frac{1}{n} \sum_{i \in \mathcal{I}_{2,n}} \E U_{ni}^2
      \right)^{1/2}
      + \frac{1}{n} \sum_{i \in \mathcal{I}_{2,n}} \E U_{ni}^2
      \\
      &= O(1) o(1) + o(1)
    \end{align*}
    by the preceding argument.
    Since $n S_{ni}^2 = (h'\psi_i(d))^2 + 2 h' \psi_i(d) U_{ni} + U_{ni}^2$,
    \begin{align*}
      \sum_{i \in \mathcal{I}_{2,n}} S_{ni}^2
      &= \frac{1}{n} \sum_{i \in \mathcal{I}_{2,n}} (h'\psi_i(d))^2 + o_{P_0}(1)
      \\
      &= \rho h' J_2(d) h + o_{P_0}(1)
    \end{align*}
    by Assumption \ref{assumption:largewaves}, Assumption \ref{assumption:information}, 
    and the Lindeberg law of large numbers, which applies to the scores by
    Lemma \ref{lemma:lindeberg}.

    Combining the preceding calculations, we have that
    \begin{align*}
        \sum_{i \in \mathcal{I}_{2,n}} \left(
            S_{ni}
            - \frac{1}{4} S_{ni}^2 
        \right)
        =
        \frac{1}{\sqrt{n}} \sum_{i \in \mathcal{I}_{2,n}}
        h' \psi_i(d)
        - \frac{1}{2} h' \rho J_2(d) h
        +
        o_{P_0}(1)
        .
    \end{align*}
    To complete the proof, it remains to show
    that the remainder term in (\ref{eq:taylor}) is $o_{P_0}(1)$.
    Observe that 
    \begin{align*}
      \sum_{i \in \mathcal{I}_{2,n}} S_{ni}^2 \text{rem}(S_{ni})
      \leq 
      \max_{i \in \mathcal{I}_{2,n}}
      |\text{rem}(S_{ni})| \sum_{i \in \mathcal{I}_{2,n}} S_{ni}^2
    \end{align*}
    where $\sum_{i \in \mathcal{I}_{2,n}} S_{ni}^2$ is $O_{P_0}(1)$,
    so we want to show that 
    $\max_{i \in \mathcal{I}_{2,n}} |\text{rem}(S_{ni})| = o_{P_0}(1)$.
    For any $\varepsilon > 0$ we have
    \begin{align*}
        P\left(
            \max_{i \in \mathcal{I}_{2,n}} |S_{ni}| > \varepsilon
        \right)
        &\leq \sum_{i \in \mathcal{I}_{2,n}} P(|S_{ni}| > \varepsilon)
        .
    \end{align*}
    Each term in the sum obeys
    \begin{align*}
        P(|S_{ni}| > \varepsilon)
        &\leq P\left((h'\psi_i(d))^2 > \frac{1}{4} n \varepsilon^2 \right)
        + P\left(U_{ni}^2 > \frac{1}{4} n \varepsilon^2\right)
        \\
        &= P\left(
            (h'\psi_i(d))^2 
            \1[(h'\psi_i(d))^2 
            > \frac{1}{4} n \varepsilon^2] 
            > \frac{1}{4} n \varepsilon^2 
        \right)
        + P(U_{ni}^2 > \frac{1}{4} n \varepsilon^2)
        \\
        &\leq 4 n^{-1} \varepsilon^{-2} \E \left[ 
            (h'\psi_i(d))^2 
            \1[(h'\psi_i(d))^2 
            > \frac{1}{4} n \varepsilon^2] 
        \right]
        + 4 n^{-1} \varepsilon^{-2} \E \left[ U_{ni}^2  \right]
        .
    \end{align*}
    By Lemma \ref{lemma:lindeberg},
    \begin{align*}
        \sum_{i \in \mathcal{I}_{2,n}}
        4 n^{-1} \varepsilon^{-2} \E \left[ 
            (h'\psi_i(d))^2  \1[(h'\psi_i(d))^2 > \frac{1}{4} n \varepsilon^2] 
        \right]
        \to 0
    \end{align*}
    and moreover
    \begin{align*}
        \sum_{i \in \mathcal{I}_{2,n}}
        4 n^{-1} \varepsilon^{-2} \E \left[ U_{ni}^2  \right]
        \to 0
        .
    \end{align*}
    This shows that $\max_{i \in \mathcal{I}_{2,n}} |\text{rem}(S_{ni})| = o_{P_0}(1)$
    and the last term in the Taylor expansion
    is $o_{P_0}(1) O_{P_0}(1) = o_{P_0}(1)$.
\end{proof}

\begin{proof}[Proof of Lemma \ref{lemma:qplipschitz}]
  Existence and uniqueness follow from the fact that $\mathcal{G}$ is a closed convex cone
  and $r(\cdot, b)$ is strongly convex for all $b$.
  Optimality of $\Gamma(b_1)$ implies that
  $(\bar H \Gamma(b_1) + b_1)' (\Gamma(b_2) - \Gamma(b_1)) \geq 0$.
  Add this to the corresponding inequality for $\Gamma(b_2)$ and use strong convexity to obtain
  \begin{align*}
    \kappa \| \Gamma(b_1) - \Gamma(b_2) \|^2
    &\leq (b_1 - b_2)' (\Gamma(b_2) - \Gamma(b_1))
    \leq \|b_1 - b_2\| \|\Gamma(b_1) - \Gamma(b_2)\|
  \end{align*}
  which proves item 1.
  Danskin's theorem implies that $\nabla V(b) = \Gamma(b)$ and continuity follows from item 1,
  thus proving item 2.
  For item 3, the first inequality is implied by optimality of $\Gamma(b_2)$
  and the second is because
  \begin{align*}
    r(\Gamma(b_1), b_2) - r(\Gamma(b_2), b_2)
    &= r(\Gamma(b_1), b_1) - r(\Gamma(b_2), b_1) + (b_2 - b_1)' (\Gamma(b_1) - \Gamma(b_2)) \\
    &\leq (b_2 - b_1)' (\Gamma(b_1) - \Gamma(b_2)) \\
    &\leq \kappa^{-1} \| b_1 - b_2 \|^2
  \end{align*}
  where the inequalities are from (i) optimality of $\Gamma(b_1)$ and (ii) item 1.
\end{proof}

\begin{proof}[Proof of Lemma \ref{lemma:valueapproximation}]
  Assumption \ref{assumption:nlp} implies there exists an $a > 0$ and a neighborhood $N_\pi$ of $\pi_0$ such that
  the quadratic growth condition holds:
  \begin{align*}
    W(\pi_0, \theta_0) - W(\pi, \theta_0) \geq a \| \pi - \pi_0 \|^2
  \end{align*}
  for all $\pi \in \Pi \cap N_\pi$.
  The compact level set condition, continuity of $W$, and uniqueness of $\pi_0$ imply that
  any sequence of feasible optimizers $\pi^*(\theta)$ for $\theta \to \theta_0$ eventually lies in $N_\pi$.
  Continuity of $\nabla^2_{\pi,\pi} W$ and positive definiteness of $H$ imply
  that $W(\cdot, \theta)$ is strongly concave on $N_\pi \cap \Pi$ and hence $\pi^*(\theta)$ is
  unique for all $\theta$ sufficiently close to $\theta_0$.
  For such a sequence $\pi^*(\theta)$, the previous display implies
  \begin{align*}
    a \| \pi^*(\theta) - \pi_0 \|^2
    &\leq [W(\pi^*(\theta), \theta) - W(\pi^*(\theta), \theta_0)] 
    - [W(\pi_0, \theta) - W(\pi_0, \theta_0)]
    .
  \end{align*}
  By the mean value theorem and continuity of $\nabla^2 W$, the right side is bounded by 
  $C \| \theta - \theta_0 \| \| \pi^*(\theta) - \pi_0 \|$ for some constant $C$.
  Divide both sides by $\| \pi^*(\theta) - \pi_0 \|$ (the case of equality is trivial) to obtain
  \begin{align*}
    \| \pi^*(\theta) - \pi_0 \| \leq a^{-1} C \| \theta - \theta_0 \|
  \end{align*}
  which is bounded by $a^{-1} C \sup_{m \in M} \| m \| / \sqrt{n}$ for $\theta = \theta_0 + m/\sqrt{n}$
  and sufficiently large $n$.

  Assume to the contrary that the optimizer convergence result does not hold.
  Then there exists $n_j$, $m_j \in M$, and feasible optimizers $\pi_j = \pi^*(\theta_0 + m_j/\sqrt{n_j})$ such that
  the result does not hold.
  Pass to a subsequence along which $m_j \to m$ for some $m \in M$.
  By the previous paragraph, $c_j = \sqrt{n_j}(\pi_j - \pi_0)$ is bounded,
  and so there is a further convergent subsequence $c_j \to c$.
  By Proposition \ref{proposition:epiconvergence}, $\liminf_{j\to\infty} R_{n_j}^s(c_j, m_j) \geq r(c, B m)$.

  Also by Proposition \ref{proposition:epiconvergence},
  for any $\tilde c \in \mathcal{G}$, there is a sequence
  $\tilde c_j \to \tilde c$ which is feasible for all $j$ and $R_{n_j}^s(\tilde c_j, m_j) \to r(\tilde c, B m)$.
  Optimality of $\pi_j$ then implies
  \begin{align*}
    R_{n_j}^s(c_j, m_j) \leq R_{n_j}^s(\tilde c_j, m_j)
  \end{align*}
  and therefore $r(c, B m) \leq r(\tilde c, B m)$ for all $\tilde c \in \mathcal{G}$.
  By Lemma \ref{lemma:qplipschitz}, $c = \Gamma(B m)$ and $\| c_j - \Gamma(B m_j) \| \to 0$,
  a contradiction.

  Finally, we establish the uniform convergence of the value function.
  Take the $\tilde c$ from the previous paragraph to be $\Gamma(B m)$.
  Then $R_{n_j}^s(\tilde c_j, m_j) \to r(\Gamma(B m), B m) = V(B m)$.
  Optimality of $\pi_j$ implies $\limsup_{j\to\infty} R_{n_j}^s(c_j, m_j) \leq V(B m)$,
  and therefore $\lim_{j\to\infty} R_{n_j}^s(c_j, m_j) = V(B m)$.
  Uniformity follows from the same contradiction argument as in the previous paragraph.
\end{proof}

\begin{proof}[Proof of Lemma \ref{lemma:tightness}]
  For $n$ large enough, all parameters $\theta = \theta_0 + h/\sqrt{n}$ 
  with $h$ in the support of the posterior lie in a neighborhood $N_\theta$ on which
  the level set condition, twice continuous differentiability, and strong concavity of $W$ hold.
  The uniqueness argument in the proof of Lemma \ref{lemma:valueapproximation}
  applies uniformly over $N$, so $\pi_{n,K}^*$ exists and is unique.
  Optimality of $\pi_{n,K}^*$ implies that
  \begin{align*}
    a \|\pi_{n,K}^* - \pi_0\|^2 
    &\leq \E_K\left[
      W\left(\pi_{n,K}^*, \theta_0 + \frac{h}{\sqrt{n}}\right) - W(\pi_{n,K}^*, \theta_0)
    \right. \\ & \quad \left.
      - W\left(\pi_0, \theta_0 + \frac{h}{\sqrt{n}}\right) + W(\pi_0, \theta_0)
      \mid D_{1,n}, D_{2,n} 
    \right]
    \\
    &\leq \frac{CK}{\sqrt{n}}\|\pi_{n,K}^* - \pi_0\|
  \end{align*}
  which establishes the desired bound.
\end{proof}

\begin{proof}[Proof of Lemma \ref{lemma:estimatedqpstability}]
  Lemma \ref{lemma:valueapproximation}, consistency of first wave estimates, and continuous differentiability of $W$ imply that
  $\hat H_n \overset{P}{\to} H$ and $\hat B_n \overset{P}{\to} B$.
  Thus, Assumption \ref{assumption:nlp} implies that with probability approaching one,
  $u' \hat H_n u \geq \frac{\kappa}{2} \| u \|^2$ for all $u \in \text{span}(\Pi - \Pi)$.
  Convexity of $\Pi$ and optimality of $\hat \pi_{1,n}$ then imply
  uniqueness of $\hat \Gamma_n(b)$.
  The argument in the proof of Lemma \ref{lemma:qplipschitz} then delivers both results.
\end{proof}

\begin{proof}[Proof of Lemma \ref{lemma:surrogateconvergence}]
  Since $\hat c_{1,n}$ is tight by Lemma \ref{lemma:valueapproximation},
  a Taylor expansion of $R_n^s(\cdot, m)$ around $(\hat \pi_{1,n}, \hat \theta_{1,n})$ gives,
  for every compact $C \subseteq \R^k$,
  \begin{align*}
    \sup_{c \in C} \sup_{m \in M} | \hat S_n(c, \hat B_n m) - R_n^s(c, m) + R_n^s(\hat c_{1,n}, m) | = o_{P_h}(1).
  \end{align*}
  Minimizing each term over $c$ (justified by the localization of Lemmas \ref{lemma:valueapproximation} and \ref{lemma:estimatedqpstability})
  and using Lemma \ref{lemma:valueapproximation} to substitute $\min_c R_n^s(c, m) = V(Bm) + o_{P_h}(1)$ yields
  \begin{align*}
    \sup_{m\in M} |n \hat V_n(\hat \beta_n(m)) - V(Bm) + R_n^s(\hat c_{1,n}, m)| = o_{P_h}(1).
  \end{align*}
  Twice continuous differentiability of $W$ implies that
  \begin{align*}
    \sup_{m \in M} | R_n^s(\hat c_{1,n},m) - R_n^s(\hat c_{1,n},\hat h_{1,n}) - \hat c_{1,n}'B(m - \hat h_{1,n}) | \to 0.
  \end{align*}
  Lemma \ref{lemma:valueapproximation} gives $R_n^s(\hat c_{1,n},\hat h_{1,n}) = V(B\hat h_{1,n}) + o_{P_h}(1)$.
  Lemma \ref{lemma:valueapproximation} also gives $\hat c_{1,n} = \Gamma(B\hat h_{1,n}) + o_{P_h}(1)$.
  Thus,
  \begin{align*}
    \sup_{m\in M} | n \hat V_n(\hat \beta_n(m)) - V(Bm) + V(B \hat h_{1,n}) + \Gamma(B\hat h_{1,n})'B(m - \hat h_{1,n}) | = o_{P_h}(1)
  \end{align*}
  which is item 1 of the lemma.

  For item 2, suppose the conclusion fails.
  Then there exists a subsequence and $m_n$ along which
  $\| \hat \Gamma_n(\hat B_n m_n) - \Gamma(B m_n) \| \geq \epsilon$ with nonvanishing probability.
  By Lemma \ref{lemma:estimatedqpstability}, $\hat \Gamma_n(\hat B_n m_n)$ is tight, and therefore has a weakly convergent subsequence.
  On a Skorokhod representation, $(m_n, \hat \Gamma_n(\hat B_n m_n)) \to (m, c)$ almost surely along the subsequence.
  Then Proposition \ref{proposition:epiconvergence} implies that
  there exists a feasible sequence $c_n \to \Gamma(B m)$ almost surely such that
  \begin{align*}
    \hat S_n(\hat \Gamma_n(\hat B_n m_n), \hat B_n m_n) \leq \hat S_n(c_n, \hat B_n m_n)
    \implies
    r(c, Bm) \leq r(\Gamma(Bm), Bm)
  \end{align*}
  which, by uniqueness of the minimizer, implies $\hat \Gamma_n(\hat B_n m_n) \to \Gamma(Bm)$ in probability, a contradiction.

  For item 3, note that the proof of item 1 implies that
  \begin{align*}
    R_n^s(\hat\Gamma_n(\hat B_n m), m) &= D_V(B\hat h_{1,n}, Bm) + V(B\hat h_{1,n}) + \Gamma(B\hat h_{1,n})'B(m - \hat h_{1,n}) + o_{P_h}(1) \\
    &= V(Bm) + o_{P_h}(1)
  \end{align*}
  uniformly in $m\in M$.
\end{proof}

\begin{proof}[Proof of Lemma \ref{lemma:onestepconvergence}]
  Since $A_{2,n}(\cdot)$ converges to a Gaussian process, its covariance kernel is
  asymptotically equicontinuous.
  Assumptions \ref{assumption:largewaves}, \ref{assumption:information}, 
  and compactness of $\Delta$ imply that $\sup_{d\in\Delta} \| \hat J_n(d) - J(d) \| = o_{P_0}(1)$.
  Stochastic equicontinuity of $A_{2,n}(\cdot)$, uniform consistency of $\hat J_n(\cdot)$, 
  uniform nonsingularity of $J(\cdot)$, and the continuous mapping theorem then imply the first result.

  The first-wave score equation and Assumption \ref{assumption:plugin} imply
  \begin{align*}
    \hat h_{1,n} = ( ( 1-\rho) J_1)^{-1} A_{1,n} + o_{P_0}(1).
  \end{align*}
  Then the second-wave score expansion of Assumption \ref{assumption:plugin} gives the first result of item 1.
  Proposition \ref{proposition:lan} then gives the second result of item 1.

  Proposition \ref{proposition:lan} and contiguity of the first wave also give
  \begin{align*}
    \hat h_{1,n} \overset{h}{\leadsto} \hat h_{1,\infty}
    &&
    \hat J_{1,n}, \hat J_n(\cdot) \overset{P_h}{\to} J_1, J(\cdot)
    .
  \end{align*}
  Tightness of $\hat h_{1,n}$ and uniform boundedness in probability of
  $\frac{n}{n_1}\hat J_{1,n}^{-1} - \hat J_n(\cdot)^{-1}$ imply that $\tilde h_n(d, \xi)$ is tight uniformly in $d \in \Delta$.
  Lemma \ref{lemma:surrogateconvergence} then implies that
  \begin{align*}
    \sup_{d \in \Delta} \left| n \hat V_n(\hat\beta_n(\tilde h_n(d, \xi))) - D_V(B \hat h_{1,n}, B \tilde h_n(d, \xi)) \right| = o_{P_h}(1).
  \end{align*}
  Lemma \ref{lemma:estimatedqpstability} gives
  \begin{align*}
    | n \hat V_n(\hat\beta_n(\tilde h_n(d, \xi))) - D_V(B \hat h_{1,n}, B \tilde h_n(d, \xi)) | \leq O_{P_h}(1) \| \xi \|^2
  \end{align*}
  which establishes uniform integrability of the left-hand side. This implies that
  \begin{align*}
    \sup_{d \in \Delta} \E_\xi \left| n \hat V_n(\hat \beta_n(\tilde h_n(d, \xi))) - D_V(B \hat h_{1,n}, B \tilde h_n(d, \xi)) \right| = o_{P_h}(1)
  \end{align*}
  and therefore
  \begin{align*}
    \sup_{d \in \Delta} | \hat L_n(d) - \E_\xi[D_V(B \hat h_{1,n}, B \tilde h_n(d, \xi))] | = o_{P_h}(1).
  \end{align*}
  The continuous mapping theorem then implies that
  \begin{align*}
    \E_\xi[D_V(B \hat h_{1,n}, B \tilde h_n(\cdot, \xi))] &\overset{h}{\leadsto} \E_\xi[D_V(B \hat h_{1,\infty}, B(\hat h_{1,\infty} + ((1-\rho)^{-1} J_1^{-1} - J(\cdot)^{-1})^{1/2} \xi))] 
    \\
    &= L_\infty(\cdot) - V(B \hat h_{1,\infty})
  \end{align*}
  in $\ell^\infty(\Delta)$.
  Applying the continuous map $f \to f - f(d_0)$ yields $\hat L_n^c(\cdot) \overset{h}{\leadsto} L_\infty^c(\cdot)$ in $\ell^\infty(\Delta)$.
\end{proof}

\begin{proof}[Proof of Lemma \ref{lemma:limitmean}]
  We prove the second result; the first is similar.
  Let $\hat h_{2,\infty,K}$ be the mean of the posterior distribution on $h$
  restricted to the $K$-ball $\mathcal{B}_K$.
  Fix $M > 0$.
  On the event where $\text{dist}(h, \mathcal{B}_K^c) > 2M$ and $\| \hat h_{2,\infty,\infty} - h \| \leq M$,
  we have $\text{dist}(\hat h_{2,\infty,\infty}, \mathcal{B}_K^c) \geq M$.
  Thus,
  \begin{align*}
    \| \hat h_{2,\infty,K} - \hat h_{2,\infty,\infty} \| \leq \frac{\E[\|\xi\|\1[ \| \xi \| \geq M]]}{1 - P(\|\xi\|\geq M)}
  \end{align*}
  for a $N(0, J(d)^{-1})$ random vector $\xi$.
  The right-hand side tends to zero uniformly over $d\in\Delta$ as $M\to\infty$.

  On the other hand, the event where $\text{dist}(h, \mathcal{B}_K^c) \leq 2M$
  has mass $1 - (1 - 2M/K)^\ell \leq 2\ell M/K$.
  The inequality $\E_K \|\hat h_{2,\infty,K} - \hat h_{2,\infty,\infty} \|^4 \leq \E_K \|\hat h_{2,\infty,\infty} - h \|^4$
  establishes a uniform bound on fourth moments over designs,
  and hence uniform integrability,
  so that $\E_K \| \hat h_{2,\infty,K} - \hat h_{2,\infty,\infty} \|^2 \to 0$.
  The result then follows by multiplying by $B$.
\end{proof}

\begin{proof}[Proof of Lemma \ref{lemma:limitloss}]
  We begin by showing that the $K$-truncated and flat posteriors in the limit experiment converge in total variation as $K$ grows.
  Let $Q_{1,K} \propto N(\hat h_{1,\infty}, (1-\rho)^{-1} J_1^{-1}) \1[\|h\| \leq K]$ 
  and $Q_{1,\infty} = N(\hat h_{1,\infty}, (1-\rho)^{-1}J_1^{-1})$ be the first wave posteriors in the limit experiment.
  Then $\| Q_{1,K} - Q_{1,\infty} \|_\text{TV} = Q_{1,\infty}(\|h\| > K)$.
  On the event where $\text{dist}(h, \mathcal{B}_K^c) > 2M$ and $\| \hat h_{1,\infty} - h \| \leq M$,
  we have $Q_{1,\infty}(\|h\| > K) \leq P(\|\xi\| \geq M)$.
  The probability of the complement of this event is bounded by
  $P(\text{dist}(h, \mathcal{B}_K^c) \leq 2M) + P(\| \hat h_{1,\infty} - h \| > M)$.
  Thus,
  \begin{align*}
    \E_K \| Q_{1,K} - Q_{1,\infty} \|_\text{TV} &\leq P(\|\xi\| \geq M) + P(\text{dist}(h, \mathcal{B}_K^c) \leq 2M) + P(\| \hat h_{1,\infty} - h \| > M)
    \\
    &\leq P(\|\xi\| \geq M) + 2\ell M/K + P(\| \hat h_{1,\infty} - h \| > M)
  \end{align*}
  which tends to zero letting $K \to \infty$ and then $M \to \infty$.

  We now show that the posterior predictive distributions of the second wave data converge in total variation.
  Conditional on $(h, A_1, U)$, the observation $A_2$ follows the same Gaussian law in the $K$-truncated and flat limit experiments.
  Since total variation contracts under a Markov kernel, we have
  \begin{align*}
    \left\| \int \Phi(\delta_2) dQ_{1,K} - \int \Phi(\delta_2) dQ_{1,\infty} \right\|_\text{TV} 
    \leq \| Q_{1,K} - Q_{1,\infty} \|_\text{TV}
  \end{align*}
  where $\Phi(d)$ is the Gaussian kernel $h \mapsto N(\rho J_2(d)h, \rho J_2(d)) $.
  Thus, the posterior predictive distributions of $A_2$ converge in total variation as well.

  Finally, we show that the centered expected loss functions converge.
  The posterior means $\hat h_{2,\infty,K}$ and $\hat h_{2,\infty,\infty}$ are martingales
  under their predictive laws, so
  \begin{align*}
    L_\infty^K(\delta_K) - V(\hat b_{1,K}) = \E_K[D_V(\hat b_{1,K}, \hat b_{2,K}(\delta_K)) \mid A_1, U]
    \\
    L_\infty(\delta_K) - V(\hat b_{1,\infty}) = \E_\infty[D_V(\hat b_{1,\infty} , \hat b_{2,\infty}(\delta_K)) \mid A_1, U]
    .
  \end{align*}
  By Lemma \ref{lemma:qplipschitz},
  \begin{align*}
    &| D_V(\hat b_{1,K}, \hat b_{2,K}(\delta_K)) -D_V(\hat b_{1,\infty}, \hat b_{2,\infty}(\delta_K)) | \\
    &\leq 2 \kappa^{-1} ( \| \hat b_{2,K}(\delta_K) - \hat b_{1,K} \| + \| \hat b_{2,\infty}(\delta_K) - \hat b_{1,\infty} \| )
    \\ &\quad \times ( \| \hat b_{1,K} - \hat b_{1,\infty} \| + \| \hat b_{2,K}(\delta_K) - \hat b_{2,\infty}(\delta_K) \| )
  \end{align*}
  Consequently, 
  \begin{align*}
    \E_K[| D_V(\hat b_{1,K}, \hat b_{2,K}(\delta_K)) -D_V(\hat b_{1,\infty} , \hat b_{2,\infty}(\delta_K)) | \mid A_1, U] \to 0
  \end{align*}
  in $L_1$ by Lemma \ref{lemma:limitmean}.

  It remains to show that 
  \begin{align*}
    \E_K[ D_V(\hat b_{1,\infty} , \hat b_{2,\infty}(\delta_K)) \mid A_1, U]
    - \E_\infty[ D_V(\hat b_{1,\infty} , \hat b_{2,\infty}(\delta_K)) \mid A_1, U]
  \end{align*}
  converges to $0$ in $L_1$ under the $K$-truncated prior.
  By concavity of $V$ and Lemma \ref{lemma:qplipschitz},
  \begin{align*}
    -\frac{1}{2\kappa} \|b_2 - b_1\|^2 \leq D_V(b_1, b_2) \leq 0
  \end{align*}
  which establishes uniform second moment bounds on $D_V$ through fourth moment bounds on the Gaussian posterior increment $\hat b_{2,\infty}(\delta_K) - \hat b_{1,\infty}$.
  The inequality
  \begin{align*}
    \left| \int D_V \Phi(\delta_2) Q_{1,K}(dh) - \int D_V \Phi(\delta_2) Q_{1,\infty}(dh) \right| \\
    \leq \left[ 
      2 \| \Phi(\delta_2) Q_{1,K} - \Phi(\delta_2) Q_{1,\infty} \|_\text{TV} 
      \left(\int D_V^2 \Phi(\delta_2) Q_{1,K} + \int D_V^2 \Phi(\delta_2) Q_{1,\infty} \right)
    \right]^{1/2}
  \end{align*}
  combined with the total variation convergence established above implies that
  the difference in conditional expectations converges to zero in $L_1$ under the $K$-truncated prior.
\end{proof}

\section{Verifying stochastic equicontinuity}
\label{appendix:stocheq}

We say a sequence of processes $A_n(\cdot)$ indexed by a compact Euclidean set $\Delta$
is stochastically equicontinuous under $P$ if for every $\varepsilon > 0$
and $\eta > 0$ there exists an $\iota > 0$ such that
\begin{align}
  \label{eq:stocheq}
  \limsup_{n\to\infty} P\left( \sup_{d, d' \in \Delta: \|d - d'\| < \iota} \|A_n(d) - A_n(d')\| > \eta \right) < \varepsilon.
\end{align}
Here we suppose the supremum is measurable.

For a scalar-valued class of functions $\mathcal{F} = \{ f(\cdot, d) : d \in \Delta \}$
where $f(\cdot, d)$ is a function of the data $w_i$ and the treatment assignment $d$,
define the centered empirical process
\begin{align*}
  \mathbb{G}_n f(d) = \frac{1}{\sqrt{n}} \sum_{i \in \mathcal{I}_{2,n}} (f(w_i, d) - \E[f(w_i, d)]).
\end{align*}
Note that we normalize by $\sqrt{n}$ but sum over only the second wave to match the processes in the main text.

Our primitive conditions for stochastic equicontinuity are based on the Type IV condition of \textcite{andrewsChapter37Empirical1994}.
This class is defined as follows.
Let $p \geq 2$.
Suppose there are constants $C_1,C_2,r_0>0$ such that
\begin{align}
  \sup_n\sup_{i\in\mathcal I_{2,n}}
  \left(
    E\sup_{\substack{\tilde d\in\Delta\\
                        \lVert d-\tilde d\rVert\leq r}}
    |f(w_i,d)-f(w_i,\tilde d)|^p
  \right)^{1/p}
  \leq C_1r^{C_2}
  \label{eq:lpsmoothness}
\end{align}
for every $d\in\Delta$ and $0<r\leq r_0$.
Then we say that $\mathcal{F}$ satisfies the Type IV condition with $p$.

We also will impose the following envelope condition.
For a class of functions $\mathcal{F}$, the relevant envelope condition is
that there exists some $F$ such that
\begin{align*}
  |f(\cdot)| \leq F(\cdot), \quad \forall f \in \mathcal{F}
\end{align*}
and 
\begin{align}
  \label{eq:envelope}
  \limsup_{n\to\infty} \frac{1}{n} \sum_{i \in \mathcal{I}_{2,n}} \E |F(w_i)|^{2+\kappa} < \infty
\end{align}
for some $\kappa > 0$.

Our primitive conditions are on the bracketing entropy of the function classes.
For any $\varepsilon > 0$ and $p \in [2, \infty)$,
the bracketing number $N_{[]}(\varepsilon, \mathcal{F}, L_p(P))$ is the smallest number $N$
such that there exist functions $\{a_j, b_j\}_{j=1}^N$ such that
for every $d$, some $j$ satisfies
\begin{align*}
  | f(\cdot, d) - a_j(\cdot) | \leq b_j(\cdot)
\end{align*}
where
\begin{align*}
  \max_{j} \sup_n \sup_{i \in \mathcal{I}_{2,n}} (\E[b_j(w_i)^p])^{1/p} \leq \varepsilon.
\end{align*}
The bracketing entropy is
\begin{align*}
  \mathcal{J}_{[]}(\mathcal{F}, L_p(P)) = \int_0^1 \sqrt{\log N_{[]}(\varepsilon, \mathcal{F}, L_p(P))} d\varepsilon.
\end{align*}

\subsection{Primitive conditions for stochastic equicontinuity}

Let $w_i = (\epsilon_i, \nu_i, x_i, n)$.
Note that $x_i$ is fixed in our setup, and $n$ is included to allow the log-likelihood to depend on
the local parameter $\theta_n(h) = \theta_0 + h/\sqrt{n}$.
Define the following log-likelihood quantities:
\begin{align*}
  \ell_h(w_i, d) &= \sqrt{n}\log \frac{p_{y \mid z, x}(y_i(d) \mid z_i(d), x_i; \theta_n(h))}{p_{y \mid z, x}(y_i(d) \mid z_i(d), x_i; \theta_0)}
  \\
  b_{n,h}(d) &= \frac{1}{\sqrt{n}} \sum_{i \in \mathcal{I}_{2,n}} \E_0[\ell_h(w_i, d)]
  \\
  \Lambda_{2,n}^h(d) &= \frac{1}{\sqrt{n}} \sum_{i \in \mathcal{I}_{2,n}} \ell_h(w_i, d)
  = \mathbb{G}_n \ell_h(d) + b_{n,h}(d).
\end{align*}
In what follows, we refer to the following function classes:
\begin{align*}
  \mathcal F_\psi^k
  =\{\psi_k(\cdot, d) : d\in\Delta\},
    \qquad k=1,\ldots,\ell,
  &&
  \mathcal F_\ell^h
  =\{\ell_h(\cdot, d) : d\in\Delta\}.
\end{align*}

\begin{assumption}[Primitive conditions for equicontinuity]
  \label{assumption:primitive-stocheq}
  \begin{enumerate}
    \item The class $\mathcal{F}_\psi^k$ satisfies
      the Type IV condition of \textcite{andrewsChapter37Empirical1994} with $p=2$
      for each $k$.
      It has an envelope satisfying (\ref{eq:envelope}) for some $\kappa > 0$.
    \item For every fixed $h$, the class $\mathcal{F}_\ell^h$ satisfies
      the Type IV condition of \textcite{andrewsChapter37Empirical1994} with $p=2$.
      It has an envelope satisfying (\ref{eq:envelope}) for some $\kappa > 0$.
    \item For every fixed $h$,
      \begin{align*}
        \sup_{d\in\Delta} \left| b_{n,h}(d) + \frac{\rho}{2} h' J_2(d) h \right| \to 0
      \end{align*}
      where $J_2$ is continuous.
  \end{enumerate}
\end{assumption}

\begin{proposition}
  \label{proposition:primitive-stocheq}
  Suppose Assumptions \ref{assumption:largewaves},
  \ref{assumption:dqm},
  \ref{assumption:potentialoutcomes},
  \ref{assumption:information},
  and \ref{assumption:primitive-stocheq} hold.
  Then $\mathcal{J}_{[]}(\mathcal{F}_\psi^k, L_2(P_0))$ is finite for every $k$,
  and $\mathcal{J}_{[]}(\mathcal{F}_\ell^h, L_2(P_0)) < \infty$ for every $h$.
  Consequently, $A_{2,n}(\cdot)$ and $\Lambda_{2,n}^h(\cdot)$ are stochastically equicontinuous.
\end{proposition}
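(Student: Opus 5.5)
The plan is to convert the Type IV $L_p$-smoothness condition (\ref{eq:lpsmoothness}) into a polynomial bound on bracketing numbers. From there, a bracketing maximal inequality gives stochastic equicontinuity of the centered empirical processes, and item 3 of Assumption \ref{assumption:primitive-stocheq} handles the deterministic drift of $\Lambda_{2,n}^h$.

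\emph{Bracketing entropy.} Since $\Delta$ is a compact subset of $\R^{m}$ for some $m$, for each $r \in (0, r_0]$ we can cover $\Delta$ by $N(r) \leq C_\Delta r^{-m}$ balls of radius $r$ centered at points $d_j$. For each class $\mathcal{F} \in \{\mathcal{F}_\psi^k, \mathcal{F}_\ell^h\}$, define the bracket centers and widths
\begin{align*}
  a_j(w) = f(w, d_j),
  \qquad
  b_j(w) = \sup_{\tilde d \in \Delta : \|d_j - \tilde d\| \leq r} |f(w, d_j) - f(w, \tilde d)|.
\end{align*}
Every $f(\cdot, d)$ then lies in some bracket. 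Condition (\ref{eq:lpsmoothness}) with $p = 2$ gives $\sup_n \sup_i (\E b_j(w_i)^2)^{1/2} \leq C_1 r^{C_2}$. Choosing $r = (\varepsilon / C_1)^{1/C_2}$ yields
\begin{align*}
  N_{[]}(\varepsilon, \mathcal{F}, L_2(P_0)) \leq C \varepsilon^{-m/C_2}
\end{align*}
for small $\varepsilon$. For $\varepsilon$ near $1$, a single bracket built from the envelope suffices, since the envelope has finite $L_2$ norm by (\ref{eq:envelope}). Hence $\sqrt{\log N_{[]}}$ is bounded by $C' + \sqrt{(m/C_2)\log(1/\varepsilon)}$, which is integrable on $(0,1]$. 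So $\mathcal{J}_{[]} < \infty$ for every $k$ and every $h$, which is the first claim.

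\emph{Equicontinuity of the centered processes.} Next I will apply a bracketing maximal inequality for independent, non-identically distributed summands to the difference class
\begin{align*}
  \mathcal{F}_\iota = \{ f(\cdot, d) - f(\cdot, d') : \|d - d'\| < \iota \}.
\end{align*}
The natural candidates are Andrews' Type IV result or a Bernstein-type bracketing bound in the style of van der Vaart--Wellner, Lemma 3.4.2 or Theorem 2.11.9. The $L_2$ diameter of $\mathcal{F}_\iota$ is at most $C_1 \iota^{C_2}$ by (\ref{eq:lpsmoothness}), so the maximal inequality gives
\begin{align*}
  \E \sup_{\mathcal{F}_\iota} |\mathbb{G}_n f| \lesssim \mathcal{J}_{[]}(C_1\iota^{C_2}) + \text{(truncation term)}.
\end{align*}
The truncation term is controlled by the $2+\kappa$ envelope moment via a Lindeberg-type argument, so it vanishes uniformly in $n$. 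The entropy integral tends to zero as $\iota \to 0$, and Markov's inequality then gives (\ref{eq:stocheq}). For $A_{2,n}(\cdot)$, I apply this coordinatewise to $\mathcal{F}_\psi^k$. The score has mean zero under $\theta_0$, by DQM and Lemma \ref{lemma:lindeberg}, so $A_{2,n} = \mathbb{G}_n \psi$ and no centering correction is needed.

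\emph{The log-likelihood process.} Write $\Lambda_{2,n}^h(d) = \mathbb{G}_n \ell_h(d) + b_{n,h}(d)$. The first term is stochastically equicontinuous by the previous step. By item 3 of Assumption \ref{assumption:primitive-stocheq}, the drift $b_{n,h}$ converges uniformly to $-\frac{\rho}{2} h' J_2(d) h$, which is continuous on the compact set $\Delta$ and hence uniformly continuous. A uniformly convergent sequence of functions with a uniformly continuous limit is asymptotically uniformly equicontinuous: for $\|d - d'\| < \iota$, the increment $|b_{n,h}(d) - b_{n,h}(d')|$ is at most the modulus of continuity of the limit plus $2 \sup_d |b_{n,h}(d) + \frac{\rho}{2} h' J_2(d) h|$. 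Summing the two pieces gives equicontinuity of $\Lambda_{2,n}^h(\cdot)$.

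The main obstacle is the maximal inequality step for non-identically distributed observations. The sum is normalized by $\sqrt{n}$ but runs only over $\mathcal{I}_{2,n}$, and the brackets are uniform in $i$ rather than taken under a single $P$. I would handle this by citing Andrews' Theorem 5 for Type IV classes, which is designed for independent heterogeneous data and uses precisely (\ref{eq:lpsmoothness}) together with (\ref{eq:envelope}). That citation delivers the equicontinuity directly once the moment and smoothness conditions are verified, leaving the explicit entropy computation above to establish the stated finiteness of $\mathcal{J}_{[]}$.
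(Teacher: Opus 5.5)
Your proposal is correct and takes essentially the paper's route. Type IV smoothness gives finite bracketing entropy (you derive this by hand, which is the content of Andrews's Theorem 5); Andrews's bracketing result (his Theorem 4, which is what the paper invokes for the equicontinuity step, not Theorem 5) together with the envelope gives equicontinuity of $\mathbb{G}_n$ in an $L_2$ pseudometric dominated by $C_1\|d-\tilde d\|^{C_2}$; the mean-zero score gives $A_{2,n}=\mathbb{G}_n\psi$; and the uniformly convergent drift handles $\Lambda_{2,n}^h$. One small repair: for $\varepsilon \geq C_1 r_0^{C_2}$, reuse the cover at radius $r_0$ rather than a single envelope bracket, since (\ref{eq:envelope}) only controls averaged moments and does not give $\sup_n\sup_i(\E F(w_i)^2)^{1/2}\leq\varepsilon$.
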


\begin{proof}
  Theorem 5 of \textcite{andrewsChapter37Empirical1994} implies that 
  $\mathcal{J}_{[]}(\mathcal{F}_\psi^k, L_2(P_0)) < \infty$ for every $k$.
  Combined with the envelope conditions, Theorem 4 of \textcite{andrewsChapter37Empirical1994} implies that
  $\mathbb{G}_n \psi_k(\cdot)$ is stochastically equicontinuous for each $k$
  with respect to the metric
  \begin{align*}
    \varrho(d, \tilde d) = \sup_n \sup_{i \in \mathcal{I}_{2,n}} (\E_0[|\psi_k(w_i, d) - \psi_k(w_i, \tilde d)|^2])^{1/2}.
  \end{align*}
  Since this is dominated by the Euclidean metric in the sense that 
  $\varrho(d, \tilde d) \leq C_1 \|d - \tilde d\|^{C_2}$ for some constants $C_1$  and $C_2$, 
  $\mathbb{G}_n \psi_k(\cdot)$ is stochastically equicontinuous with respect to the Euclidean metric as well.
  Since the score is mean zero by DQM,
  $A_{2,n,k}(\cdot) = \mathbb{G}_n \psi_k(\cdot)$ and we conclude that the vector $A_{2,n}(\cdot)$ is stochastically equicontinuous.

  The same argument applies to the log-likelihood to yield stochastic equicontinuity of
  $\mathbb{G}_n \ell_h(\cdot)$.
  By assumption, the centered drift term $b_{n,h}(\cdot)$ converges uniformly to a continuous function of $d$,
  so $\Lambda_{2,n}^h(\cdot) = \mathbb{G}_n \ell_h(\cdot) + b_{n,h}(\cdot)$ is stochastically equicontinuous as well.
\end{proof}

\subsection{Verifying primitive conditions for the Progresa application}

We first establish some regularity of the model described in Section \ref{section:progresa}.
Let $z_s$ denote the subsidy in grade $s$.
Let $V_{\theta,\tau}(s,f;z)$ denote the ex ante value at age $\tau$.
In the terminal period,
\begin{align*}
  V_{\theta,18}(s,f;z) = \gamma_9\1[s\geq 10].
\end{align*}
Choice-specific values are defined recursively by
\begin{align*}
  V_{\theta,\tau}(s,f;z)
  =
  v^0_{\theta,\tau}(s,f;z)
  +
  \log\{1+\exp(v_{\theta,\tau}(s,f;z))\}
\end{align*}
where
\begin{align*}
  v^0_{\theta,\tau}(s,f;z) &= \eta V_{\theta,\tau+1}(s,f;z),\\
  v^1_{\theta,\tau}(s,f;z) &= \gamma_0+\gamma_1 \tau+\gamma_2 s+\gamma_3 f + \gamma_4\1[s\geq 7]\\
  &\quad+\gamma_5(1-f)z_s+\gamma_6(1-f)\tau z_s+\gamma_7fz_s+\gamma_8f\tau z_s\\
  &\quad+\eta\big[
    p_{\mathrm{pass}}(\tau,s) V_{\theta,\tau+1}(s+1,f;z)
    + \{1-p_{\mathrm{pass}}(\tau,s)\} V_{\theta,\tau+1}(s,f;z)
  \big],\\
  v_{\theta,\tau}(s,f;z) &=v^1_{\theta,\tau}(s,f;z)-v^0_{\theta,\tau}(s,f;z).
\end{align*}

In what follows, let $x$ contain $(\tau,s,f)$ and define
\begin{align*}
  v^y_\theta(z,x)
  &=
  v^y_{\theta,\tau}(s,f;z),\\
  v_\theta(z,x)
  &=
  v^1_\theta(z,x)-v^0_\theta(z,x),\\
  \varsigma_\theta(z,x)
  &=
  p_{y\mid z,x}(1\mid z,x;\theta)
  =
  \frac{\exp\{v_\theta(z,x)\}}
       {1+\exp\{v_\theta(z,x)\}},\\
\end{align*}

\begin{proposition}
  \label{proposition:modelregularity}
  Suppose $\theta_0$ is interior to a compact, convex $\bar\Theta \subseteq \Theta$.
  Suppose the covariate space, treatment space, and household state space are contained in compact and convex sets.
  Define
  \begin{align*}
    M_\theta = \sup_{\theta, z, x} \|\nabla_\theta v_\theta(z,x) \|, \qquad
    M_z = \sup_{\theta, z, x} \|\nabla_z v_\theta(z,x) \|, \\
    M_{\theta z} = \sup_{\theta, z, x} \|\nabla^2_{z,\theta} v_\theta(z,x)\|, \qquad
    M_{\theta\theta} = \sup_{\theta,z,x} \| \nabla^2_{\theta,\theta} v_\theta(z,x) \|, \\
    L_\psi = M_{\theta z} + \frac{1}{4}M_\theta M_z
  \end{align*}
  where $\theta$ ranges over $\bar \Theta$.
  Then these constants are finite and the following hold uniformly over $\theta \in \bar\Theta$:
  \begin{enumerate}
    \item Uniformly over $y$, $z$, and $x$, we have $\|\psi_\theta(y, z, x)\| \leq M_\theta$ and 
      $\psi_\theta(y, z, x)$ is Lipschitz in $y$ and $z$ with Lipschitz constants $M_\theta$ and $L_\psi$.
    \item $\varsigma_{\theta_0}$ is Lipschitz in treatment with constant $\frac{1}{4}M_z$
    \item For every $\ell_h \in \mathcal{F}_\ell^h$,
      $|\ell_h(\cdot, d) | \leq \|h\| M_\theta$ and $\ell_h$ is Lipschitz in $y$ and $z$ 
      with Lipschitz constants $\|h\| M_\theta$ and $\|h\| L_\psi$.
    \item For every fixed $h$, the log-likelihood satisfies the expansion
      \begin{align*}
        &\sup_{z,x} \left|
          \E_0 \left[ \log \frac{p_{y\mid z,x}(y \mid z, x; \theta_0 + h/\sqrt{n})}{p_{y\mid z,x}(y \mid z, x; \theta_0)} \mid z, x \right]
          \right.
          \\ &\quad \left. + \frac{1}{2n} h' \E_0\left[ \psi_{\theta_0}(y, z, x) \psi_{\theta_0}(y, z, x)' \mid z, x \right] h
          \right| = o(n^{-1}).
      \end{align*}
  \end{enumerate}
\end{proposition}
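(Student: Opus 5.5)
The proof rests on one observation: every object in the model is built from finitely many smooth operations on a compact domain. I would run a backward induction over ages $\tau = 17, \dots, 6$ to show that $V_{\theta,\tau}(s,f;z)$ is $C^\infty$, jointly in $(\theta, z)$. This holds for each fixed discrete state $(s,f)$. The terminal value $\gamma_9\1[s\geq 10]$ is linear in $\theta$, and the indicator involves only the discrete grade, so it causes no problem. Each recursion step applies only the following operations:
\begin{enumerate}
  \item linear combinations with the fixed weights $p_\text{pass}$ and $\eta$;
  \item the polynomial flow utility in $(\gamma, z)$;
  \item the smooth map $x \mapsto \log(1+e^x)$.
\end{enumerate}
So $v_\theta(z,x)$ is $C^2$ on the compact set $\bar\Theta \times \mathcal{Z} \times \mathcal{X}$, where $\mathcal{X}$ has finitely many discrete states and a compact age range. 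Continuous functions on a compact set are bounded, so $M_\theta$, $M_z$, $M_{\theta z}$ and $M_{\theta\theta}$ are all finite.

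Next come the logit computations. The score is $\psi_\theta(y,z,x) = (y - \varsigma_\theta(z,x))\nabla_\theta v_\theta(z,x)$. Since $y \in \{0,1\}$ and $\varsigma_\theta \in (0,1)$, we get $\|\psi_\theta\| \leq M_\theta$. The Lipschitz constant in $y$ is $\|\nabla_\theta v_\theta\| \leq M_\theta$, after treating $y$ as embedded in $[0,1]$.

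For the $z$-derivative I would use the product rule. The derivative of $\varsigma_\theta$ is $\varsigma_\theta(1-\varsigma_\theta)\nabla_z v_\theta$, and $\varsigma(1-\varsigma) \leq 1/4$. This gives
\[
\|\nabla_z \psi_\theta\| \leq M_{\theta z} + \tfrac14 M_z M_\theta = L_\psi .
\]
Convexity of the treatment space lets the mean value theorem turn this derivative bound into a Lipschitz bound. Item 2 is the same computation applied to $\varsigma_{\theta_0}$ itself, giving the constant $\tfrac14 M_z$.

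For item 3, write the log-likelihood as $\log p = y v_\theta - \log(1+e^{v_\theta})$, so that its $\theta$-gradient is exactly $\psi_\theta$. The mean value theorem along the segment from $\theta_0$ to $\theta_0 + h/\sqrt n$, which stays in the convex set $\bar\Theta$ for large $n$, gives
\[
\ell_h = \sqrt{n}\cdot \tfrac{1}{\sqrt n}\, h'\psi_{\tilde\theta}, \qquad \text{hence } |\ell_h| \leq \|h\| M_\theta .
\]
The Lipschitz bounds on $\ell_h$ follow by integrating the item 1 bounds along that segment. This works because $\ell_h = \int_0^1 h'\psi_{\theta_0 + t h/\sqrt n}\,dt$, and each integrand has the stated Lipschitz constants uniformly in $\theta \in \bar\Theta$.

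For item 4, I would take a second-order Taylor expansion of $\theta \mapsto \E_0[\log p_\theta \mid z,x]$ at $\theta_0$. The first-order term vanishes, since the conditional mean of the score is zero: $\E_0[y \mid z,x] = \varsigma_{\theta_0}$. The Hessian of the log-likelihood is
\[
-\varsigma(1-\varsigma)\nabla v\nabla v' + (y-\varsigma)\nabla^2 v .
\]
At $\theta_0$ its conditional expectation equals minus the conditional information $\E_0[\psi\psi' \mid z,x]$. This gives the $-\tfrac{1}{2n}h'(\cdot)h$ term.

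The remainder is $\tfrac{1}{2n}h'(\text{Hess}(\tilde\theta) - \text{Hess}(\theta_0))h$. It is $o(n^{-1})$ uniformly in $(z,x)$ provided the Hessian is uniformly continuous in $\theta$, uniformly over $(y,z,x)$. This is the main obstacle: uniform continuity requires $v_\theta$ to be $C^3$, or at least $\nabla^2_{\theta\theta}v$ to be equicontinuous. I would get this from the backward-induction smoothness, which in fact yields $C^\infty$, together with compactness of $\bar\Theta\times\mathcal Z\times\mathcal X$. The one delicate check in the induction is that the grade indicators never interact with $z$ or $\theta$ non-smoothly. They depend only on the discrete $s$, so they do not.
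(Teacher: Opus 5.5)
Your proposal is correct and follows the paper's proof essentially step for step. The shared steps are: backward-induction smoothness plus compactness for finiteness of the constants; the logit score $(y-\varsigma_\theta)\nabla_\theta v_\theta$ with the $\varsigma(1-\varsigma)\le 1/4$ bound for items 1 and 2; the integral representation $\ell_h=\int_0^1 h'\psi_{\theta_0+th/\sqrt n}\,dt$ for item 3; and a second-order expansion using the mean-zero score and the information identity for item 4. One small overstatement: $C^3$ is not needed for the uniform $o(n^{-1})$ remainder, because joint continuity of $\nabla^2_{\theta\theta} v$ and $\nabla_\theta v$ on the compact set (that is, $C^2$, which is what the paper uses) already gives the required uniform continuity, so your $C^\infty$ claim is true but more than necessary.
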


\begin{proof}
  First, the terminal value and flow payoffs described in Section \ref{section:progresa}
  are twice continuously differentiable in $\theta$ and $z$.
  Finite-horizon backward induction preserves this property,
  so $v_\theta(z,x)$ is twice continuously differentiable in $\theta$ and $z$.
  Compactness of the parameter, treatment, and covariate space implies that the
  derivatives of $v_\theta(z,x)$ are uniformly bounded, so that $M_\theta$, $M_z$, $M_{\theta z}$, and $M_{\theta,\theta}$ are finite.

  By direct calculation,
  \begin{align*}
    \psi_\theta(y, z, x) = \nabla_\theta \log p_{y\mid z,x}(y \mid z, x; \theta) = (y - \varsigma_\theta(z,x)) \nabla_\theta v_\theta(z,x)
  \end{align*}
  and so the score is bounded uniformly by $M_\theta$.
  Its gradient with respect to $z$ satisfies
  \begin{align*}
    \|\nabla_z \psi_\theta(y, z, x)\| 
    &\leq \|\nabla_z \varsigma_\theta(z,x)\| \|\nabla_\theta v_\theta(z,x)\| 
    + |y - \varsigma_\theta(z,x)| \|\nabla^2_{z,\theta} v_\theta(z,x)\|\\
    &\leq \frac{1}{4} M_\theta M_z + M_{\theta z} = L_\psi
  \end{align*}
  and the change resulting from a change in $y$ is
  \begin{align*}
    \|\psi_\theta(\tilde y, z, x) - \psi_\theta(y, z, x)\| = \|(\tilde y - y) \nabla_\theta v_\theta(z,x)\|
    \leq M_\theta |\tilde y - y |.
  \end{align*}
  The gradient of $\varsigma_\theta$ with respect to $z$ is
  \begin{align*}
    \nabla_z \varsigma_\theta(z,x) = \varsigma_\theta(z,x) (1 - \varsigma_\theta(z,x)) \nabla_z v_\theta(z,x)
  \end{align*}
  which is bounded by $\frac{1}{4}M_z$.
  For any $\ell_h \in \mathcal{F}_\ell^h$, the fundamental theorem of calculus implies that
  \begin{align*}
    \ell_h(w_i, d) = \int_0^1 h' \psi_{\theta_0 + t h/\sqrt{n}}(y_i(d), z_i(d), x_i) dt
  \end{align*}
  and so the envelope and increment bounds for $\psi$ imply the same bounds for $\ell_h$ with constants multiplied by $\|h\|$.

  We now verify the log-likelihood expansion.
  First, we have
  \begin{align*}
    \nabla_\theta \psi_\theta(y, z, x) 
    = (y - \varsigma_\theta(z,x)) \nabla^2_{\theta,\theta} v_\theta(z,x) 
    - \varsigma_\theta(z,x) (1 - \varsigma_\theta(z,x))\nabla_\theta v_\theta(z,x) \nabla_\theta v_\theta(z,x)' 
  \end{align*}
  and therefore $\sup_{\theta, y, z, x} \|\nabla_\theta \psi_\theta(y, z, x)\| \leq M_{\theta\theta} + \frac{1}{4} M_\theta^2$
  and $\nabla_\theta \psi_\theta$ is integrable.
  Also, continuity of $v_\theta$ and compactness ensure that $\varsigma_\theta(z,x)$ is bounded away from $0$ and $1$,
  so the log-likelihood ratio is integrable as well.
  Direct calculation yields the mean-zero property of the score
  \begin{align*}
    \E_0[\psi_{\theta_0}(y, z, x) \mid z, x] 
    &= \E_0[y - \varsigma_{\theta_0}(z,x) \mid z, x] \nabla_\theta v_{\theta_0}(z,x)
    = 0
  \end{align*}
  and the information identity
  \begin{align*}
    \E_0[\nabla_\theta \psi_{\theta_0}(y, z, x) \mid z, x] 
    &= - \varsigma_{\theta_0}(z,x) (1 - \varsigma_{\theta_0}(z,x))\nabla_\theta v_{\theta_0}(z,x) \nabla_\theta v_{\theta_0}(z,x)' \\
    &= - \E_0[\psi_{\theta_0}(y, z, x) \psi_{\theta_0}(y, z, x)' \mid z, x].
  \end{align*}
  A Taylor expansion of the log-likelihood ratio then gives
  \begin{align*}
    &\log p_{y\mid z,x}(y \mid z, x; \theta_0 + h/\sqrt{n})
    - \log p_{y\mid z,x}(y \mid z, x; \theta_0)\\
    &= \frac{1}{\sqrt{n}} h' \psi_{\theta_0}(y, z, x)
    + \frac{1}{2n} h' \nabla_{\theta} \psi_{\theta_0}(y, z, x) h
    + o(n^{-1}).
  \end{align*}
  Since $\theta, z, x$ are all in compact sets and $y$ is binary, 
  $\nabla_\theta \psi_\theta(y, z, x)$ is uniformly continuous in $\theta$,
  so the Taylor remainder above is uniformly $o(n^{-1})$ in $z$ and $x$.
  Taking expectations conditional on $z$ and $x$ and using the information identity and the zero mean property of the score gives
  \begin{align*}
    \sup_{z,x} \bigg|
    &\E_0\left[
      \log \frac{p_{y\mid z,x}(y \mid z, x; \theta_0 + h/\sqrt{n})}{p_{y\mid z,x}(y \mid z, x; \theta_0)}
      \mid z, x
    \right]
    \\ &\quad
    +\frac{1}{2n} h' \E_0[\psi_{\theta_0}(y, z, x) \psi_{\theta_0}(y, z, x)' \mid z, x] h
    \bigg|
    = o(n^{-1})
  \end{align*}
  as desired.
\end{proof}

We now verify that our model satisfies Assumption \ref{assumption:primitive-stocheq}.

\begin{proposition}
  Maintain the assumptions of Proposition \ref{proposition:modelregularity}.
  Suppose the treatment is $J$-dimensional and is assigned by
  \begin{align*}
    z_i(d) = \Phi_0 ' x_i + \Phi_1' x_i \1[\nu_i \leq G(\phi_2' x_i)], \quad d = (\text{vec}(\Phi_0), \text{vec}(\Phi_1), \phi_2) \in \Delta
  \end{align*}
  where $G: \R \mapsto [0,1]$ is Lipschitz with constant $K$
  and $\nu_i \sim U[0,1]$ conditional on $x_i$.
  Suppose $\epsilon_i$ and $\nu_i$ are independent conditional on $x_i$ and $\Delta$ is compact.

  Then for every finite $p \geq 2$ and every fixed $h$,
  $\mathcal{F}_\psi^k$ and $\mathcal{F}_\ell^h$ satisfy the Type IV condition (\ref{eq:lpsmoothness})
  of \textcite{andrewsChapter37Empirical1994} for every $k = 1, \dots, \ell$.
  Both classes also have uniformly bounded envelopes
  satisfying (\ref{eq:envelope}) for some $\kappa > 0$.

  If, in addition, Assumptions \ref{assumption:largewaves},
  \ref{assumption:dqm}, \ref{assumption:potentialoutcomes}, and \ref{assumption:information} hold,
  then for every fixed $h$,
  Assumption \ref{assumption:primitive-stocheq}(3) holds
  and consequently $A_{2,n}(\cdot)$ and $\Lambda_{2,n}^h(\cdot)$ are stochastically equicontinuous.
\end{proposition}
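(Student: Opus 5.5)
The plan is to reduce everything to the Lipschitz and boundedness properties in Proposition~\ref{proposition:modelregularity}, together with an analysis of the only non-smooth ingredient, the indicator $\1[\nu_i \leq G(\phi_2' x_i)]$ in the assignment rule.

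\textbf{Envelopes.} These are immediate. Item 1 of Proposition~\ref{proposition:modelregularity} gives $|\psi_k| \leq M_\theta$, and item 3 gives $|\ell_h| \leq \|h\| M_\theta$. Bounded envelopes trivially satisfy (\ref{eq:envelope}) for every $\kappa>0$.

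\textbf{Type IV condition.} Fix $d$ and $\tilde d$ with $\|d-\tilde d\|\leq r$, and write $w_i=(\epsilon_i,\nu_i,x_i,n)$. I will split the increment $|f(w_i,d)-f(w_i,\tilde d)|$ into two parts.

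The first part is the change in $z_i$. On the event where the two indicators $\1[\nu_i \leq G(\phi_2'x_i)]$ and $\1[\nu_i \leq G(\tilde\phi_2'x_i)]$ agree, $z_i(d)-z_i(\tilde d)$ is linear in $(\Phi_0,\Phi_1)$ with coefficients bounded by $\sup\|x\|$. Hence $\|z_i(d)-z_i(\tilde d)\|\leq C r$ on that event.

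The second part is the disagreement event. Since $\nu_i$ is uniform and independent of $x_i$, and $G$ is $K$-Lipschitz, the probability of disagreement, even after taking the supremum over the $r$-ball of $\tilde d$, is at most $P(|\nu_i - G(\phi_2'x_i)| \leq K r\sup\|x\|) \leq 2K r\sup\|x\|$. On that event the increment is bounded by twice the envelope.

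There is one more effect to handle: a change in $z_i$ also changes $y_i=y(z_i,x_i,\epsilon_i;\theta_0)$, and $y$ is binary, not Lipschitz in $z$. I will realize the binary outcome via $\epsilon_i$ uniform as $y_i=\1[\epsilon_i\leq \varsigma_{\theta_0}(z_i,x_i)]$. This is allowed by Assumption~\ref{assumption:potentialoutcomes} and does not change the law. Then $y_i(d)\neq y_i(\tilde d)$ only if $\epsilon_i$ lies between $\varsigma_{\theta_0}(z_i(d),x_i)$ and $\varsigma_{\theta_0}(z_i(\tilde d),x_i)$. By item 2 of Proposition~\ref{proposition:modelregularity}, this interval has length at most $\frac14 M_z C r$ uniformly over the ball. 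Conditioning on $(\nu_i,x_i)$ and using independence of $\epsilon_i$ and $\nu_i$, this event has probability $O(r)$.

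\textbf{Combining the bounds.} On the good event, the Lipschitz bounds in $z$ (constants $L_\psi$, $\|h\|L_\psi$) give a sup-increment of $O(r)$. On the bad events, of total probability $O(r)$, the sup-increment is bounded by a constant. Hence
\begin{align*}
\Big(E\sup_{\|\tilde d-d\|\leq r}|f(w_i,d)-f(w_i,\tilde d)|^p\Big)^{1/p}\leq C_1 r + C_1' r^{1/p}\leq C r^{1/p},
\end{align*}
uniformly in $i$, $n$ and $d$. This is the Type IV condition with $C_2=1/p$.

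The main obstacle is making the supremum inside the expectation legitimate. I need the disagreement event to be controlled by a single enlarged event, namely $\nu_i$ or $\epsilon_i$ within $O(r)$ of the threshold at $d$, rather than separately for each $\tilde d$. This is exactly what Lipschitzness of $G$ and of $\varsigma_{\theta_0}$ together with boundedness of $x$ provide.

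\textbf{Remaining claim.} Item 4 of Proposition~\ref{proposition:modelregularity} gives, uniformly in $(z,x)$,
\begin{align*}
\E_0[\ell_h\mid z,x]=-\tfrac{1}{2\sqrt n}h'\E_0[\psi\psi'\mid z,x]h+o(n^{-1/2}).
\end{align*}
Summing over $i\in\mathcal I_{2,n}$ and dividing by $\sqrt n$ gives
\begin{align*}
b_{n,h}(d)=-\tfrac12 h'\Big(\tfrac1n\sum_{i\in\mathcal I_{2,n}}\E_0[\psi_i(d)\psi_i(d)']\Big)h+o(1),
\end{align*}
uniformly in $d$. Assumptions~\ref{assumption:largewaves} and \ref{assumption:information} give pointwise convergence to $-\frac{\rho}{2}h'J_2(d)h$. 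To upgrade to uniform convergence, I will use the Type IV bound just proved: it makes the averaged information matrices equicontinuous in $d$, which turns pointwise convergence on the compact set $\Delta$ into uniform convergence with a continuous limit $J_2$.

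Stochastic equicontinuity of $A_{2,n}(\cdot)$ and $\Lambda_{2,n}^h(\cdot)$ then follows from Proposition~\ref{proposition:primitive-stocheq}.
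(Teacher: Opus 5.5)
Your proposal is correct and follows essentially the same route as the paper. Both arguments bound the probability that the assignment indicator switches anywhere in the $r$-ball by $2KM_xr$ using the Lipschitz $G$, and bound outcome switching on the complementary event by $O(r)$. Both then combine these with the boundedness and Lipschitz properties from Proposition~\ref{proposition:modelregularity} to get $\E\sup_{\|\tilde d-d\|\le r}|f(w_i,\tilde d)-f(w_i,d)|^p\le C_p r$. Finally, both upgrade the drift convergence to uniform convergence via equicontinuity of the averaged information matrices on compact $\Delta$.

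The only cosmetic difference is that you write the outcome as $y_i=\1[\epsilon_i\le\varsigma_{\theta_0}(z_i,x_i)]$ with uniform $\epsilon_i$, where the paper invokes the bounded logistic density. Since $\1[v+\epsilon>0]=\1[1-F(\epsilon)<F(v)]$ for the logistic CDF $F$, this is the same coupling across designs, not merely the same marginal law. So the processes whose equicontinuity you establish are exactly the paper's.
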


\begin{proof}
  Let $I_i(d) = \1[\nu_i \leq G(\phi_2' x_i)]$.
  If $\|\tilde d - d\| \leq r$, then 
  \begin{align*}
    \|z_i(\tilde d) - z_i(d)\|
    &\leq \| \tilde \Phi_0' x_i - \Phi_0' x_i \| + \| (\tilde \Phi_1' - \Phi_1') x_i \| + \| \Phi_1' x_i \| \| I_i(\tilde d) - I_i(d) \|. 
  \end{align*}
  The first two terms are each bounded by $M_x r$ where $M_x = \sup_i \|x_i\|$.
  $\|\Phi_1' x_i\|$ is bounded by a constant multiple of $M_x$ as well by compactness of $\Delta$.

  To bound the last term, note that Lipschitz continuity of $G$ implies that
  $| G(\tilde \phi_2' x_i) - G(\phi_2' x_i) | \leq K M_x r$ so that
  $I_i(\tilde d) \neq I_i(d)$ only if $\nu_i$ is within $K M_x r$ of $G(\phi_2' x_i)$.
  Since $\nu_i$ is uniform, we have
  \begin{align*}
    P_0\left(
      \sup_{\tilde d: \|\tilde d - d\| \leq r}
      |I_i(\tilde d) - I_i(d)| > 0
      \mid x_i 
    \right)
    &\leq 2 K M_x r.
  \end{align*}
  Since the choice probability $\varsigma_{\theta_0}$ is Lipschitz in $z$,
  on the event 
  \begin{align*}
    S_i(d,r) = \{ \sup_{\| \tilde d - d\| \leq r} |I_i(\tilde d) - I_i(d)| = 0 \},
  \end{align*}
  we have
  $\sup_{\tilde d: \|\tilde d - d\| \leq r} \| z_i(\tilde d) - z_i(d) \| \leq C r$ for some constant $C$.
  Since the logistic distribution has a bounded density, this implies that there exists a $C$ such that
  \begin{align*}
    P_0\left(
      \sup_{\tilde d: \|\tilde d - d\| \leq r} |y_i(\tilde d) - y_i(d)| > 0
      \cap S_i(d,r)
      \mid x_i
    \right) \leq C r.
  \end{align*}
  Since $P_0(S_i(d,r)^c \mid x_i) \leq 2 K M_x r$, we have
  \begin{align*}
    P_0\left(
      \sup_{\tilde d: \|\tilde d - d\| \leq r}
      | I_i(\tilde d) - I_i(d) | + | y_i(\tilde d) - y_i(d) | > 0
      \mid x_i
    \right)
    \leq C r
  \end{align*}
  for some constant $C$.

  We now use this fact to verify the Type IV condition for $\mathcal{F}_\psi^k$.
  We have
  \begin{align*}
    \psi(y_i(\tilde d), z_i(\tilde d), x_i) - \psi(y_i(d), z_i(d), x_i)
    &= [\psi(y_i(\tilde d), z_i(\tilde d), x_i) - \psi(y_i(d), z_i(\tilde d), x_i)]\\
    &+ [\psi(y_i(d), z_i(\tilde d), x_i) - \psi(y_i(d), z_i(d), x_i)].
  \end{align*}
  The first term is zero unless the outcome switches and is otherwise uniformly bounded by Proposition \ref{proposition:modelregularity}.
  The second is Lipschitz in $z_i$ by Proposition \ref{proposition:modelregularity}.
  Thus,
  \begin{align*}
    \E_0 \sup_{\tilde d: \|\tilde d - d\| \leq r} \|\psi(y_i(\tilde d), z_i(\tilde d), x_i) - \psi(y_i(d), z_i(d), x_i)\|^p
    \leq C_p r^p + C_p r \leq C_p r
  \end{align*}
  and \ref{eq:lpsmoothness} holds for $\mathcal{F}_\psi^k$.

  To verify the Type IV condition for $\mathcal{F}_\ell^h$, we apply the same decomposition to the log-likelihood.
  For any $\ell_h \in \mathcal{F}_\ell^h$, Proposition \ref{proposition:modelregularity} implies that
  \begin{align*}
    |\ell_h(w_i, \tilde d) - \ell_h(w_i, d)| 
    \leq \| h \| M_\theta |y_i(\tilde d) - y_i(d)| + \| h \| L_\psi \| z_i(\tilde d) - z_i(d) \|.
  \end{align*}
  The first term is zero unless the outcome switches and is otherwise uniformly bounded since $y_i$ is binary.
  On the event that the assignment indicator does not switch,
  \begin{align*}
    \sup_{\tilde d: \|\tilde d - d\| \leq r} \| z_i(\tilde d) - z_i(d) \| \leq C r
  \end{align*}
  and is otherwise uniformly bounded since $z_i$ is bounded.
  Thus,
  \begin{align*}
    \E_0 \sup_{\tilde d: \|\tilde d - d\| \leq r} |\ell_h(w_i, \tilde d) - \ell_h(w_i, d)|^p
    \leq C_p r^p + C_p r \leq C_p r
  \end{align*}
  and \ref{eq:lpsmoothness} holds for $\mathcal{F}_\ell^h$.

  We now verify convergence of the drift term.
  Proposition \ref{proposition:modelregularity} implies that
  \begin{align*}
    &\sup_{z,x} \left|
    \E_0 \left[ \log \frac{p_{y\mid z,x}(y \mid z, x; \theta_0 + h/\sqrt{n})}{p_{y\mid z,x}(y \mid z, x; \theta_0)} \mid z, x \right] \right.
    \\&\quad\left.+ \frac{1}{2n} h' \E_0\left[ \psi_{\theta_0}(y, z, x) \psi_{\theta_0}(y, z, x)' \mid z, x \right] h
      \right| = o(n^{-1}).
  \end{align*}
  Evaluate at $(z,x) = (z_i(d), x_i)$ and sum this across $i \in \mathcal{I}_{2,n}$ to obtain
  \begin{align*}
    \sup_{d\in\Delta} \left| b_{n,h}(d) + \frac{1}{2} h' \frac{n_2}{n} J_{2,n}(d) h \right| = o(1)
  \end{align*}
  where
  \begin{align*}
    J_{2,n}(d) = \frac{1}{n_2} \sum_{i \in \mathcal{I}_{2,n}} \E_0\left[ \psi_{\theta_0}(y, z_i(d), x_i) \psi_{\theta_0}(y, z_i(d), x_i)' \right].
  \end{align*}
  The Type-IV condition for $\mathcal{F}_\psi^k$ and the uniform bound on the score
  imply that $J_{2,n}(d)$ is uniformly equicontinuous.
  Assumptions \ref{assumption:largewaves} and \ref{assumption:information} imply that the pointwise limit is $J_2(d)$.
  Since $\Delta$ is compact, we conclude that this convergence is uniform in $d$.
  Hence
  \begin{align*}
    \sup_{d\in\Delta} \left| b_{n,h}(d) + \frac{\rho}{2} h' J_2(d) h \right| = o(1)
  \end{align*}
  and Assumption \ref{assumption:primitive-stocheq}(3) holds.
  The envelope conditions are established in Proposition \ref{proposition:modelregularity}, so the result follows
  from Proposition \ref{proposition:primitive-stocheq}.
\end{proof}

\section{Details of the Progresa application}
\label{appendix:progresa}
Progresa was a conditional cash transfer program administered by the Mexican
government beginning in 1998.
The program gave cash to households for every child enrolled in school.
The size of the transfer depended on children's
gender and grade of enrollment, and the magnitude was economically significant---
the ninth grade subsidy constituted about $40\%$ of an adult male's wage
and about $66\%$ of a child's wage (\cite{schultzSchoolSubsidiesPoor2004}).
The government conducted an experimental evaluation of the program before a
larger rollout in 2000.
In the original experiment,
$62\%$ of the experimental sample was treated.
After observing the results of the experiment,
the government enacted a policy that gave the exact same subsidy
that was used in the experiment to a wider set of villages.

\begin{table}[htbp]
  \centering
  \begin{threeparttable}
    \caption{Descriptive statistics for first-wave Progresa sample}
    \label{table:summary}
    \footnotesize
    \begingroup
\setlength{\tabcolsep}{4pt}
\begin{tabular}{@{}l S[table-format=2.3] @{\hspace{8pt}} S[table-format=2.3] S[table-format=2.3] @{\hspace{10pt}} S[table-format=2.3] S[table-format=2.3]@{}}
\toprule
    & \multicolumn{1}{c}{All} & \multicolumn{2}{c}{Boys} & \multicolumn{2}{c}{Girls} \\
\cmidrule(lr){3-4}\cmidrule(l){5-6}
    & & \multicolumn{1}{c}{Educ $<6$} & \multicolumn{1}{c}{Educ $\geq 6$} & \multicolumn{1}{c}{Educ $<6$} & \multicolumn{1}{c}{Educ $\geq 6$} \\
\midrule
    Observations & \multicolumn{1}{c}{\num{1000}} & \multicolumn{1}{c}{\num{280}} & \multicolumn{1}{c}{\num{235}} & \multicolumn{1}{c}{\num{276}} & \multicolumn{1}{c}{\num{209}} \\
\addlinespace
    Age & 12.56 & 11.70 & 13.78 & 11.38 & 13.89 \\
     & \multicolumn{1}{c}{(1.89)} & \multicolumn{1}{c}{(1.73)} & \multicolumn{1}{c}{(1.33)} & \multicolumn{1}{c}{(1.50)} & \multicolumn{1}{c}{(1.39)} \\
    Educ & 4.99 & 3.78 & 6.51 & 3.84 & 6.43 \\
     & \multicolumn{1}{c}{(1.64)} & \multicolumn{1}{c}{(1.13)} & \multicolumn{1}{c}{(0.75)} & \multicolumn{1}{c}{(1.11)} & \multicolumn{1}{c}{(0.72)} \\
\addlinespace
    Enrolled & 0.806 & 0.889 & 0.757 & 0.899 & 0.627 \\
\addlinespace
    Treatment assignment & 0.627 & 0.586 & 0.660 & 0.623 & 0.651 \\
    Beneficiary & 0.616 & 0.579 & 0.647 & 0.612 & 0.636 \\
    Beneficiary $\mid$ assigned & 0.982 & 0.988 & 0.981 & 0.983 & 0.978 \\
\bottomrule
\end{tabular}
\endgroup

    \begin{tablenotes}[flushleft]
      \footnotesize
      \item \textit{Notes:} the first wave sample consists of $1000$ observations drawn uniformly from the
      $12,031$ observations in the analysis sample.
      All variables report the sample mean;
      continuous variables report sample standard deviations in parentheses.
      Treatment indicates initial experimental assignment, and beneficiary
      indicates realized treatment status.
    \end{tablenotes}
  \end{threeparttable}
\end{table}

We construct the analysis sample from the Progresa raw data files contained
in the replication package for \textcite{adhvaryuHelpingChildrenCatch2023}\footnote{
  For many years, the original Progresa data was made publicly available by the Mexican government,
  but the hosting website appears to have been taken down.
}.
We combine the October 1998 and March 1999 surveys and define enrollment
in the 1998--99 school year as reported attendance in either wave.
We merge the 1997 survey data to recover sex and other household characteristics.
We retain children ages 10--17 in households classified as poor
and with fewer than nine years of completed schooling.
These restrictions yield 12,031 children, from which we draw 1,000 children uniformly
as the first-wave estimation sample.
Table \ref{table:summary} reports descriptive statistics for this sample.

\begin{table}[htbp]
  \centering
  \begin{threeparttable}
    \caption{Estimates of $\theta$ at each vertex and unrestricted fit}
    \label{table:estimates}
    \footnotesize
    \begin{tabular}{lrrrrr}
\toprule
Parameter & Unrestricted (SE) & Boys Pri. & Boys Sec. & Girls Pri. & Girls Sec. \\
\midrule
Constant & 3.134 (0.367) & 3.157 & 3.008 & 3.187 & 3.028 \\
Female & -0.384 (0.082) & -0.337 & -0.315 & -0.401 & -0.407 \\
Enrollment Grade & 0.88 (0.051) & 0.838 & 0.831 & 0.839 & 0.829 \\
Age & -0.728 (0.029) & -0.704 & -0.687 & -0.704 & -0.683 \\
Secondary School & -3.536 (0.213) & -3.336 & -3.343 & -3.331 & -3.338 \\
Subsidy Boys & 5.108 (2.59) & 6.652 & 8.852 & 2.926 & 6.261 \\
Subsidy Girls & 2.666 (2.245) & 2.245 & 4.607 & 5.314 & 7.648 \\
Subsidy Age Boys & -0.359 (0.264) & -0.649 & -0.833 & -0.281 & -0.6 \\
Subsidy Age Girls & -0.096 (0.238) & -0.219 & -0.45 & -0.529 & -0.735 \\
Terminal Completion & 2.358 (0.417) & 2.476 & 2.48 & 2.451 & 2.496 \\
\midrule
$\lambda_{\min}(H)$ & -- & 0.0025 & 0.0105 & 0.0051 & 0.0245 \\
\midrule
LR Statistic & -- & 22.22 & 17.13 & 22.85 & 16.68 \\
LR P-Value & -- & $<0.0001$ & 0.0007 & $<0.0001$ & 0.0008 \\
\bottomrule
\end{tabular}

    \begin{tablenotes}[flushleft]
      \footnotesize
      \item \textit{Notes:} All estimates are obtained from the holdout sample of $8,031$ observations.
        Standard errors for the unrestricted model are reported in parentheses.
        $\lambda_{\min}(H)$ is the minimum eigenvalue of the negative Hessian of welfare at each vertex;
        positive values indicate strong concavity of welfare at the vertex.
        The likelihood ratio tests the weak complementarity constraints used to define each vertex.
        Subsidy amounts are in thousands of pesos, and age enters the model centered at 5.
      \end{tablenotes}
  \end{threeparttable}
\end{table}

\begin{figure}[htbp]
    \centering
    \caption{Enrollment-rate model fit}
    \label{figure:enrollment}
    \includegraphics[width=0.9\textwidth]
    {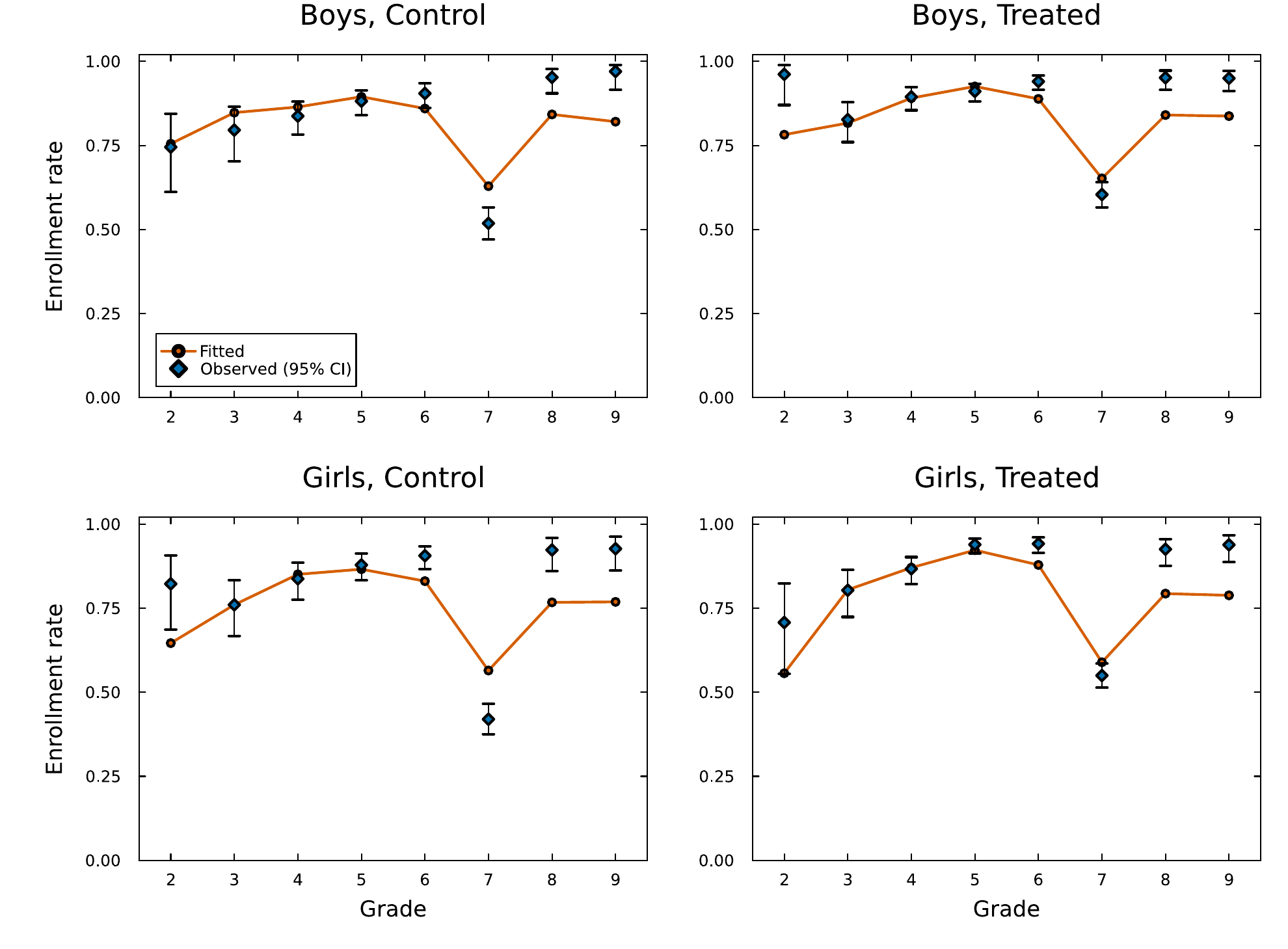}
    \par\smallskip
    \begin{minipage}{0.9\textwidth}
      \footnotesize
      \setstretch{1.0}
      \textit{Note: } Each panel shows predicted and observed enrollment rates
      for one of the four groups.
      The fitted enrollment rates use the estimates at the ``Girls, Secondary''
      vertex of Table \ref{table:estimates}.
      Observed enrollment rates are plotted along with 95\% Wilson confidence intervals.
      Both series are computed on the holdout sample of $8,031$ observations.
    \end{minipage}
\end{figure}

To calibrate the values of $\theta_0$ used in our regret calculations,
we estimate the model on a holdout set from the Progresa experiment.
To capture uncertainty in the active set of constraints,
we choose four values of $\theta_0$ corresponding to the vertices of the feasible set
which lie on the budget constraint.
For each vertex, we choose $\theta_0$ by estimating the model on the holdout set
subject to the constraint that the optimal policy implied by $\theta$ is given by the vertex,
and that the marginal effect of changing $\pi$ at this vertex is equalized across the four groups.
This means that at each vertex, the nonnegativity constraints bind weakly,
so changing the subsidy in budget-neutral directions has no first-order effect on welfare.

Table \ref{table:estimates} shows the unrestricted maximum-likelihood estimates of $\theta$ on the holdout set,
as well as the estimates of $\theta$ at each vertex.
We also report the minimum eigenvalue of $H$ at each vertex to verify strong concavity of welfare at these points.
We also conduct a likelihood ratio test of the weak complementarity constraints used to define each vertex.
At the holdout sample size, all vertices can be rejected, but likelihood ratio tests fail to reject these points at
the sample sizes considered in the Monte Carlo exercise.
Figure \ref{figure:enrollment} shows the predicted enrollment rates at the ``Girls, Secondary'' vertex
compared to the observed enrollment rates in the holdout set.

For the Monte Carlo exercise, we sample from a Uniform ball of size $K$
around each vertex $\theta_0$ to generate draws of $\theta$.
$K$ is chosen so that $\theta$ is difficult to distinguish from $\theta_0$ given
the experimental data; it is 3.396 in Mahalanobis units.
This has the interpretation that any $\theta$ in the ball
lies in a $68\%$ joint Wald confidence region around $\theta_0$
with a sample size of $1000$.

The method for computing the optimal design for the second wave of the experiment is as follows.
For each pilot estimate $\hat\theta_{1,n}$, we compute the quadratic approximation $\hat R_n$.
This allows us to compute the feasible value function $\hat V_n(\beta)$ for any $\beta$ by solving a quadratic program.
To optimize the expected value as $\delta_2$ varies, we construct a differentiable
neural network surrogate $\hat V_n^{\mathrm{NN}}$ for $\hat V_n$.
Given this differentiable surrogate, we compute the expected value by Monte Carlo integration over the 
Gaussian distribution of $\hat \beta_2$ conditional on $\hat{\theta}_1$.
The expected value remains a differentiable function of $\delta_2$,
and so we solve for the optimal $\delta_2$ again using Ipopt.

\begin{table}[htbp]
    \centering
    \begin{threeparttable}
      \caption{Expected regret by experimental design}
      \label{table:regret}
      \footnotesize
      \begin{tabular}{lrrrr}
\toprule
Sample Size & Pilot Only & Progresa & Proposed method & Oracle \\
\midrule
\multicolumn{5}{l}{\textit{Panel A: Boys Primary}} \\
250 & 4.441 (0.242) & 3.321 (0.204) & 2.231 (0.158) & 1.784 (0.139) \\
500 & 3.538 (0.211) & 2.644 (0.187) & 0.957 (0.093) & 0.843 (0.085) \\
1000 & 2.327 (0.17) & 1.305 (0.12) & 0.403 (0.046) & 0.402 (0.056) \\
\addlinespace
\multicolumn{5}{l}{\textit{Panel B: Boys Secondary}} \\
250 & 4.366 (0.229) & 3.292 (0.205) & 2.329 (0.162) & 1.877 (0.137) \\
500 & 3.342 (0.199) & 2.544 (0.18) & 1.066 (0.103) & 1.039 (0.102) \\
1000 & 2.252 (0.166) & 1.382 (0.117) & 0.434 (0.046) & 0.458 (0.054) \\
\addlinespace
\multicolumn{5}{l}{\textit{Panel C: Girls Primary}} \\
250 & 4.528 (0.244) & 3.396 (0.21) & 2.474 (0.169) & 1.972 (0.142) \\
500 & 3.481 (0.212) & 2.637 (0.189) & 1.036 (0.103) & 0.9 (0.086) \\
1000 & 2.282 (0.166) & 1.357 (0.12) & 0.423 (0.047) & 0.447 (0.06) \\
\addlinespace
\multicolumn{5}{l}{\textit{Panel D: Girls Secondary}} \\
250 & 4.299 (0.233) & 3.345 (0.204) & 2.125 (0.148) & 1.958 (0.138) \\
500 & 3.279 (0.203) & 2.412 (0.175) & 0.857 (0.079) & 0.96 (0.09) \\
1000 & 2.038 (0.15) & 1.199 (0.107) & 0.473 (0.048) & 0.462 (0.057) \\
\bottomrule
\end{tabular}

      \begin{tablenotes}[flushleft]
        \footnotesize
        \item \textit{Notes:} 
          Each panel reports expected regret in percentage points and Monte Carlo standard errors 
          under a different reference point $\theta_0$, corresponding to a different vertex of the feasible set.
          Each vertex is characterized by a different group of children receiving a positive subsidy.
          Not all replications achieved convergence; expected regret is computed based on the $>99\%$
          of replications that converged for each vertex and sample size.
     \end{tablenotes}
    \end{threeparttable}
\end{table}

\begin{figure}[htbp]
    \centering
    \caption{Variation in experimental designs}
    \label{figure:experimental_design_girls_secondary_n500}
    \includegraphics[width=1.0\textwidth]
    {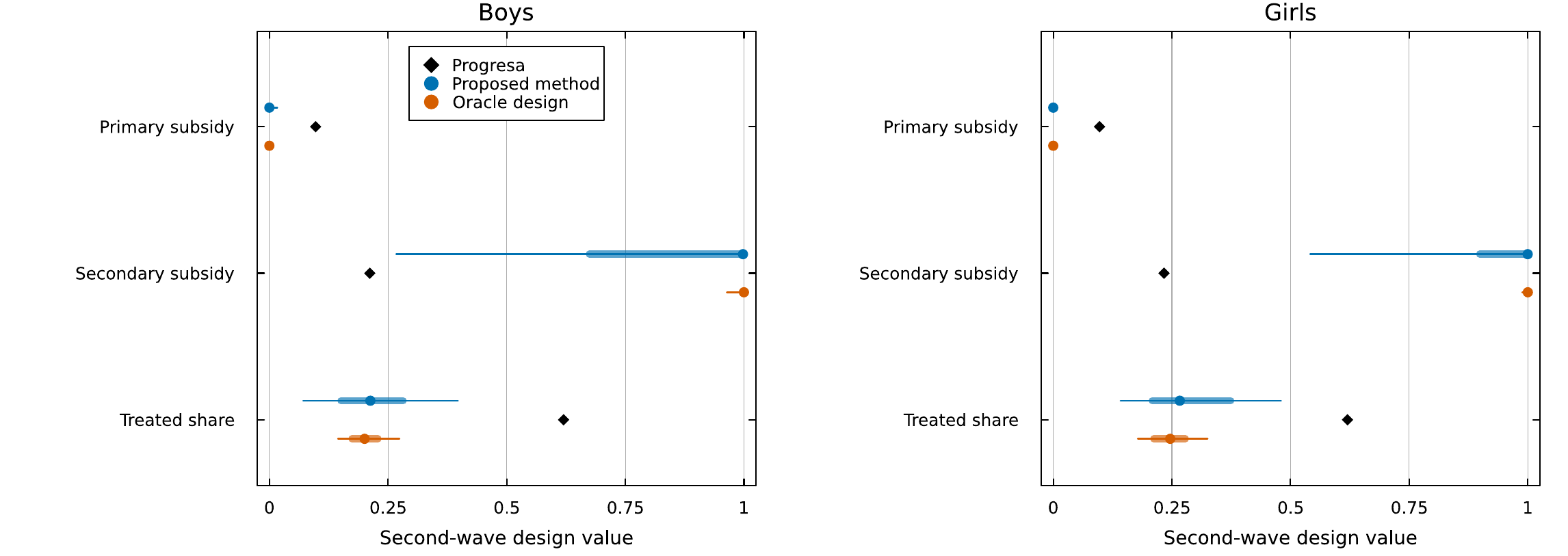}
    \par\smallskip
    \begin{minipage}{1.0\textwidth}
      \footnotesize
      \setstretch{1.0}
      \textit{Note: } The figure describes the second wave designs when $\theta_0$ is at the ``Girls Secondary'' vertex and \(n_1=n_2=500\).
      The figure plots averages of the Progresa design over grades in primary/secondary school,
      but the simulations use the original Progresa design which varies by grade.
      The dot is the median, the thick line is the interquartile range, and the thin line is the 10th-90th percentile range.
      Subsidy amounts are in thousands of pesos.
      This figure is based on 498/500 replications for which the design optimization converged.
    \end{minipage}
\end{figure}

\begin{figure}[htbp]
    \centering
    \caption{Policy errors by experimental design}
    \label{figure:policy_girls_secondary_n500}
    \includegraphics[width=1.0\textwidth]
    {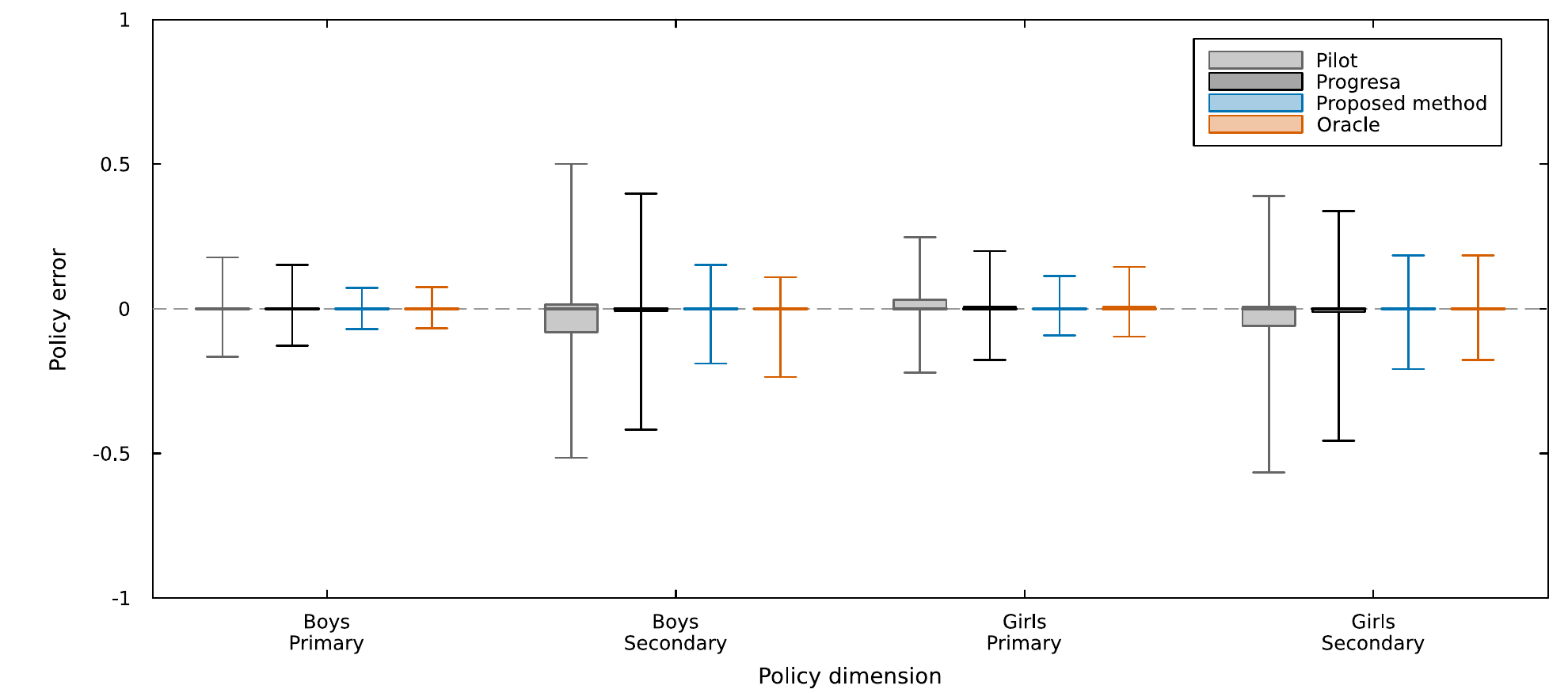}
    \par\smallskip
    \begin{minipage}{1.0\textwidth}
      \footnotesize
      \setstretch{1.0}
      \textit{Note: } The figure reports the distribution of policy errors $\hat \pi_{2,n} - \pi^*(\theta)$
      when $\hat \pi_{2,n}$ is estimated using different experimental designs.
      Results are shown for the case where $\theta_0$ is at the ``Girls Secondary'' vertex and \(n_1=n_2=500\).
      The center line is the median, the box is the interquartile range, and the whiskers are the 10th-90th percentile range.
      Subsidy amounts are in thousands of pesos.
      This figure is based on 498/500 replications for which the design optimization converged.
    \end{minipage}
\end{figure}

Table \ref{table:regret} reports the expected regret of the policy choice under different experimental designs
for every vertex $\theta_0$ and sample size.
Figure \ref{figure:experimental_design_girls_secondary_n500}
illustrates how the optimal experiment differs from the Progresa design,
focusing on the case where $\theta_0$ is at the ``Girls Secondary'' vertex and $n_1=n_2=500$.
Figure \ref{figure:policy_girls_secondary_n500} likewise illustrates how the
policy induced by the optimal design differs from the policy induced by the Progresa design
by reducing the policy error $\hat \pi_{2,n} - \pi^*(\theta)$.

\printbibliography

\end{document}